\pdfoutput=1
\RequirePackage{ifpdf}
\ifpdf 
\documentclass[pdftex]{sigma}
\else
\documentclass{sigma}
\fi

\numberwithin{equation}{section}

\newtheorem{Theorem}{Theorem}[section]

\newtheorem{Lemma}[Theorem]{Lemma}
\newtheorem{Proposition}[Theorem]{Proposition}

 { \theoremstyle{definition}
\newtheorem{Definition}[Theorem]{Definition}

\newtheorem{Example}[Theorem]{Example}
\newtheorem{Remark}[Theorem]{Remark} }

\DeclareMathOperator{\Res}{Res}

\begin{document}
\allowdisplaybreaks

\newcommand{\arXivNumber}{2012.02622}

\renewcommand{\thefootnote}{}

\renewcommand{\PaperNumber}{085}

\FirstPageHeading

\ShortArticleName{Perturbative and Geometric Analysis of the Quartic Kontsevich Model}

\ArticleName{Perturbative and Geometric Analysis\\ of the Quartic Kontsevich Model\footnote{This paper is a~contribution to the Special Issue on Algebraic Structures in Perturbative Quantum Field Theory in honor of Dirk Kreimer for his 60th birthday. The~full collection is available at \href{https://www.emis.de/journals/SIGMA/Kreimer.html}{https://www.emis.de/journals/SIGMA/Kreimer.html}}}

\Author{Johannes BRANAHL~$^{\rm a}$, Alexander HOCK~$^{\rm b}$ and Raimar WULKENHAAR~$^{\rm a}$}

\AuthorNameForHeading{J.~Branahl, A.~Hock and R.~Wulkenhaar}

\Address{$^{\rm a)}$~Mathematisches Institut, Westf\"alische Wilhelms-Universit\"at M\"unster,\\
\hphantom{$^{\rm a)}$}~Einsteinstr.~62, 48149 M\"unster, Germany}
\EmailD{\href{mailto:j_bran33@uni-muenster.de}{j\_bran33@uni-muenster.de}, \href{mailto:raimar@math.uni-muenster.de}{raimar@math.uni-muenster.de}}
\URLaddressD{\url{https://www.uni-muenster.de/MathPhys/}}

\Address{$^{\rm b)}$~Mathematical Institute, University of Oxford,\\
\hphantom{$^{\rm b)}$}~Andrew Wiles Building, Woodstock Road, OX2 6GG, Oxford, UK}
\EmailD{\href{mailto:alexander.hock@maths.ox.ac.uk}{alexander.hock@maths.ox.ac.uk}}

\ArticleDates{Received February 26, 2021, in final form September 10, 2021; Published online September 16, 2021}

\Abstract{The analogue of Kontsevich's matrix Airy function, with the cubic potential~$\operatorname{Tr}\big(\Phi^3\big)$ replaced by a quartic term $\operatorname{Tr}\big(\Phi^4\big)$ with the same covariance, provides a toy model for quantum field theory in which all correlation functions can be computed exactly and explicitly. In this paper we show that distinguished polynomials of correlation functions, themselves given by quickly growing series of Feynman ribbon graphs, sum up to much simpler and highly structured expressions. These expressions are deeply connected with meromorphic forms conjectured to obey blobbed topological recursion. Moreover, we show how the exact solutions permit to explore critical phenomena in the quartic Kontsevich model.}

\Keywords{Dyson--Schwinger equations; perturbation theory; exact solutions; topological recursion}

\Classification{81T18; 81T16; 14H81; 32A20}

\begin{flushright}
\begin{minipage}{60mm}
\it Dedicated to Dirk Kreimer\\ on the occasion of his 60th birthday
\end{minipage}
\end{flushright}

\renewcommand{\thefootnote}{\arabic{footnote}}
\setcounter{footnote}{0}

\section{Introduction}

Quantum field theory has often been a source of inspiration for
mathematics. In the previous 25 years, many of these inspirations
came from Dirk Kreimer. We~mention the vision \cite{Kreimer:1996js} of
a~deep relation between Feynman graphs and knots which led to
impressive progress on multiple zeta values
\cite{Broadhurst:1996kc}. The discovery that renormalisation in
quantum field theory is encoded in a~Hopf algebra
\cite{Kreimer:1997dp} led to the insight that renormalisation is
another example for the Birkhoff decomposition to solve a
Riemann--Hilbert problem \cite{Connes:1999yr}. There is much more to
say, but we confine ourselves to highlighting just one point: Although
the Hopf algebra was originally defined with Feynman graphs, it was
emphasised very soon \cite{Broadhurst:2000dq} that Dyson--Schwinger
equations will eventually provide a non-perturbative formulation.

One may ask whether multiple zeta values and other connections between
quantum field theory and number theory also find a non-perturbative
explanation. We~are working on a~programme which achieves and
investigates the exact solution of a quantum field theory toy model,
namely of a matrix model with quartic interaction and non-trivial
covariance~\cite{Grosse:2012uv}. It is already known that for
particular choices of parameters the exact solution of the planar
sector expands into number-theoretic objects such as Nielsen
polylogarithms \cite{Panzer:2018tvy} and hyperlogarithms
\cite{Grosse:2019qps}, respectively.

It is highly desirable to extend this construction to richer
topological sectors, which can be seen as analogy to knots. This
contribution provides the first steps in this direction. We~give a~low-order perturbative expansion of exact correlation
functions, derived in~\cite{Branahl:2020yru}, and compare the result
with a Feynman graph evaluation. We~perform this investigation in a
finite-dimensional case where no renormalisation is needed. We~show
that even this simple case has rich features, for instance an
enormous simplification in particular polynomials of correlation functions
(or Feynman graphs) compared with individual functions or graphs. We~expect that these simplifications will extend to
an infinite-dimensional limit where renormalisation is necessary,
although considerable work is still ahead.

\section{The model}

We sketch the main ideas about the model under consideration and refer
to \cite{Branahl:2020yru, Schurmann:2019mzu-v3} for more details. We~follow the paragon of the $\lambda\phi^4$ model, but defined on a
noncommutative space instead of on a Riemannian or Lorentzian
manifold. Apart from physical reasons, choosing a noncommutative
geometry has the advantage of a simple finite-dimensional
approximation. Let $H_N$ be the real vector space of self-adjoint
$(N\times N)$-matrices, $H_N'$ be its dual and $(e_{kl})$ be the
standard matrix basis in the complexification of $H_N$. Our quantum
scalar fields are noncommutative random variables $\Phi$ on $H_N'$
distributed according to a measure
\begin{gather}
{\rm d}\mu_{E,\lambda}(\Phi)=\frac{1}{\mathcal{Z}}\exp\bigg({-}\frac{\lambda N}{4}\operatorname{Tr}\big(\Phi^4\big)\bigg)
{\rm d}\mu_{E,0}(\Phi),\nonumber
\\
\mathcal{Z}:=\int_{H_N'} \exp\bigg({-}\frac{\lambda N}{4}\operatorname{Tr}\big(\Phi^4\big)\bigg)
{\rm d}\mu_{E,0}(\Phi),
\label{measure}
\end{gather}
where ${\rm d}\mu_{E,0}(\Phi)$ is a Gau\ss{}ian measure with covariance
$\big[\int_{H_N'}\! {\rm d}\mu_{E,0}(\Phi)\;
\Phi(e_{jk})\Phi(e_{lm})\big]_c=\frac{\delta_{jm}\delta_{kl}}{N(E_j+E_l)}$
for some $0<E_1<\dots <E_N$. We~call the Euclidean quantum field
theory defined via (\ref{measure}) the \emph{quartic Kontsevich model}
because of its formal analogy with the Kontsevich model
\cite{Kontsevich:1992ti} in~which $\frac{\lambda}{4} \operatorname{Tr}\big(\Phi^4\big)$ in
(\ref{measure}) is replaced by\footnote{In the Kontsevich model
with potential $\operatorname{Tr}\big(\Phi^3\big)$, a purely imaginary coupling constant
is necessary for convergence of $\mathcal{Z}$. Of course,
${\rm d}\mu_{E,\lambda}$ is then only a signed measure. Choosing
$\lambda>0$ in the quartic model
with potential $\operatorname{Tr}\big(\Phi^4\big)$ gives both a convergent
partition function $\mathcal{Z}$ and a true measure
${\rm d}\mu_{E,\lambda}$.}
$\frac{\mathrm{i}}{6} \operatorname{Tr}\big(\Phi^3\big)$.
The Gau\ss{}ian measure ${\rm d}\mu_{E,0}(\Phi)$ is the same.
Kontsevich proved in~\cite{Kontsevich:1992ti} that (\ref{measure}) with
$\operatorname{Tr}\big(\Phi^3\big)$-term, viewed as function of the $E_k$, is the
generating function for intersection numbers of tautological
characteristic classes on the moduli space
$\overline{\mathcal{M}}_{g,n}$ of stable complex curves.

Derivatives of the Fourier transform
$\mathcal{Z}(M):=\int_{H_N'} {\rm d}\mu_{E,\lambda}(\Phi) {\rm e}^{\mathrm{i}\Phi(M)}$ with respect to
matrix entries $M_{kl}$ and parameters $E_k$ of the free theory give rise to
\emph{Dyson--Schwinger equations} between the cumulants
\begin{gather}
\langle e_{k_1l_1}\cdots e_{k_nl_n}\rangle_c
= \frac{1}{\mathrm{i}^{n}}
\frac{\partial^n\log \mathcal{Z}(M)}{\partial
M_{k_1l_1}\cdots \partial M_{k_nl_n}} \bigg|_{M=0}.
\label{cumulants}
\end{gather}
Of particular interest are cumulants of the form
\begin{gather}
N^{n_1+\cdots +n_b}
\big\langle (e_{k_1^1k_2^1}
e_{k_2^1k_3^1} \cdots
e_{k_{n_1}^1k_1^1}) \cdots
(e_{k_1^bk_2^b} e_{k_2^bk_3^b} \cdots
e_{k_{n_b}^bk_1^b}) \big\rangle_c\nonumber
\\ \qquad
{}=: N^{2-b} \, G_{|k_1^1\cdots k_{n_1}^1|\cdots
|k_1^b\cdots k_{n_b}^b|} ,
\label{eq:Gbn}
\end{gather}
called $(n_1+\dots +n_b)$-point functions. To define these functions properly it is
necessary that the $k^j_i$ are pairwise different. After their identification a natural extension
to any diagonal is possible. The corresponding
derivatives in (\ref{cumulants}) then decompose into linear combinations of such
functions. One has, for example,
$-N^2 \frac{\partial^2 \log \mathcal{Z}(M)}{\partial M_{kk} \partial M_{kk}}
= NG_{|kk|}+G_{|k|k|}$.

After $1/N$-expansion
$G_{|k_1^1\cdots k_{n_1}^1|\cdots |k_1^b\cdots k_{n_b}^b|} =:
\sum_{g=0}^\infty N^{-2g} G^{(g)}_{|k_1^1\cdots k_{n_1}^1|\cdots
 |k_1^b\cdots k_{n_b}^b|} $ of the correlation functions (\ref{eq:Gbn})
one obtains a non-linear equation for the planar 2-point function
$G_{|kl|}^{(0)}$ alone \cite{Grosse:2009pa} and a hierarchy of affine
equations for all other functions. The arduous solution process for
$G^{(0)}_{|kl|}$ was recently completed in~\cite{Grosse:2019jnv,Panzer:2018tvy}.

Then things accelerated: During the attempt of finding an elegant
algorithm to cover any correlation function, we recognised that we
were somehow looking for the wrong quantities: A~non-trivial
rearrangement \cite{Branahl:2020yru} of those gives birth to
meromorphic differential forms $\omega_{g,n}$ label\-led by genus $g$
and number $n$ of marked points of a Riemann surface. The solution of
the complicated Dyson--Schwinger equations for $\omega_{g,n}$ at small
$2g+n-2$ in~\cite{Branahl:2020yru} provided strong
evidence\footnote{Two of us have proved the algebraic structure
 of blobbed topological recursion for the genus $g=0$ case
 in~\cite{Hock:2021tbl}.} for a~remarkable algebraic structure behind the model under consideration:
({\it blobbed})~\cite{Borot:2015hna} {\it topological recursion}
\cite{Eynard:2007kz}. As a consequence, the $\omega_{g,n}$ with
$2g+n-2<0$ are recursively built from $\omega_{0,1}$ and
$\omega_{0,2}$ by a relatively simple evaluation of residues, much
faster than solving the Dyson--Schwinger equations. Topological
recursion has been identified in numerous areas of mathematics and
physics including one- and two-matrix models \cite{Chekhov:2006vd},
Hurwitz theory \cite{Bouchard:2007hi} and Gromov--Witten theory
\cite{Bouchard:2007ys}. Topological recursion also governs the
combinatorics of the Kontsevich model \cite{Kontsevich:1992ti} (see,
e.g., \cite[Chapter~6]{Eynard:2016yaa} for details) and organises the
Weil--Petersson volumes of moduli spaces of hyperbolic Riemann surfaces
\cite{Mirzakhani:2006fta}.

We discuss in Section~\ref{perturb} the perturbative expansion of
correlation functions~(\ref{eq:Gbn}) into weighted labelled ribbon
graphs. Section~\ref{sec:aux} shows that two families of auxiliary
functions $T^{(g)}$ and $\Omega^{(g)}$ introduced in~\cite{Branahl:2020yru} are representable as polynomials in correlation
functions. Section~\ref{sec:BTR} compares the Taylor series of exact
results for $\Omega^{(g)}$ with the ribbon graph expansion of the
correlation functions. It is impressive to see how contributions of a
huge number of ribbon graphs almost cancel up to a tiny and structured
remnant which is conjectured to obey blobbed topological recursion.
In Section~\ref{sec:critical} we start a (partly numerical)
investigation of critical phenomena in the quartic Kontsevich model.
The number of branch cuts and the order of ramification points changes
at critical values of the coupling constant. Interestingly, the
correlation functions cross analytically into the other phases. We~conclude in Section~\ref{sec:conclusion} with possible lessons for
more realistic quantum field theories.

\subsection[Differences between the quartic Kontsevich model and the generalised Kontsevich model]{Differences between the quartic Kontsevich model\\ and the generalised Kontsevich model}

The notation of the generalised Kontsevich model (GKM) is well-defined in the literature. The~ori\-ginal motivation came from Witten \cite{Witten:1993mgm} in terms of the $r$-spin classes on the moduli space of complex curves, $r\in \mathbb{N}_{>1}$. Also the corresponding formal matrix model representation is well-known \cite{Adler:1992tj}. Its partition function is defined by \cite{Belliard:2021jtj}
\begin{gather}\label{rspin}
	\mathcal{ Z}^{r{\rm spin}}:=\int_{H_N} {\rm d}M {\rm e}^{-N\alpha^{r+1}\operatorname{Tr}(V(M)-V(\Lambda)-(M-\Lambda)V'(\Lambda))},
\end{gather}
where $V(M)=\frac{M^{r+1}}{r+1}$ is the potential, $\alpha$ a formal parameter and $\Lambda$ the diagonal matrix with positive eigenvalues $\Lambda_1,\dots,\Lambda_N$.

In case of $r=2$, this model becomes exactly the Kontsevich model \cite{Kontsevich:1992ti}
\begin{gather*}
\mathcal{ Z}^{2{\rm spin}}=\int_{H_N} {\rm d}M {\rm e}^{-N\alpha^{3}\operatorname{Tr} \left(\frac{M^{3}}{3}-M\Lambda^{2}+\frac{2\Lambda^3}{3}\right)}
=\int_{H_N} {\rm d}\tilde{M} {\rm e}^{-N\alpha^{3}\operatorname{Tr} \left(\frac{\tilde{M}^{3}}{3}+\tilde{M}^2\Lambda\right)}
\end{gather*}
with the transformation $M=\tilde{M}+\Lambda$ in the last equality. The last representation in terms of $\tilde{M}$ leads to the combinatorics of weighted ribbon graphs with 3-valent vertices (see Section~\ref{perturb}).

However, avoiding for $r>2$ the linear term in $M$ in the exponential in \eqref{rspin} by a transformation $M=\tilde{M}+\Lambda$ one gets a very restricted matrix model. For instance, the $r=3$ case~is
\begin{gather}\label{3spin}
\mathcal{ Z}^{3{\rm spin}}=\int_{H_N} {\rm d}M {\rm e}^{-N\alpha^{4}\operatorname{Tr} \left(\frac{M^{4}}{4}-M\Lambda^{3}+\frac{3\Lambda^4}{4}\right)}
=\int_{H_N} {\rm d}\tilde{M} {\rm e}^{-N\alpha^{4}\operatorname{Tr} \left(\frac{\tilde{M}^{4}}{4}+\tilde{M}^3\Lambda+\frac{3}{2}\tilde{M}^2\Lambda^2\right)}.
\end{gather}
We emphasise that the $r$-spin matrix model is proved to satisfy
topological recursion for the combinatorics of the resolvents~\cite{Eynard:2008we}. Also for the expectation values of disjoint
cycles it is proved in~\cite{Belliard:2021jtj} that it is governed by
higher topological recursion of Bouchard and Eynard~\cite{Bouchard:2012yg}. These two results are connected by exchanging
the rôle of $x$ and $y$ in topological recursion.

Now, the model considered in this paper, defined via~\eqref{measure} and the same covariance as the Kontsevich model,
has the following
matrix model representation, achieved by the canonical duality between
the vector space $H_N$ and its dual $H_N'$:
\begin{gather*}
 \mathcal{ Z}=\int_{H_N} {\rm d}\tilde{M} \;
 {\rm e}^{-N\operatorname{Tr}\left(\frac{\lambda \tilde{M}^{4}}{4}+\frac{1}{2}\tilde{M}^2E\right)},
\end{gather*}
where the diagonal matrix $E$ has the entries $0<E_1<\dots <E_N$.

We see that the quartic Kontsevich model does not fit into the class
of $r$-spin models, even for arbitrary polynomial potentials in
\eqref{rspin}. For instance, the cubic term of $\tilde{M}$ in
\eqref{3spin} is also weighted by the external matrix $\Lambda$ and
its appearance is indispensable. It turns out that the more
complicated enumerative structure of $r$-spin models has the easier
algebraic structure in terms of (higher-order) topological recursion,
whereas the easier enumerative structure of the quartic Kontsevich
model is governed by the enriched structure of blobbed topological
recursion.

In the subsequent section, the combinatorics of the quartic Kontsevich
model is discussed. For the combinatorics of the $r$-spin model we
refer to the recent work \cite{Belliard:2021jtj}.

\section{Perturbation theory}\label{perturb}

\subsection{Weighted labelled ribbon graphs}

The expansion of $\exp\big({-}\frac{\lambda N}{4}\operatorname{Tr}\big(\Phi^4\big)\big)$ inside the
measure ${\rm d}\mu_{E,\lambda}(\Phi)$ defined in (\ref{measure}) represents the cumulants
(\ref{cumulants}) as a series
\begin{gather}
\langle e_{p_1q_1}\cdots e_{p_nq_n}\rangle_c\nonumber
\\ \qquad
{}=\! \sum_{v=0}^\infty \frac{N^v(-\lambda)^v}{4^vv!}
\bigg[\int_{H_N'}\!\!\!\! {\rm d}\mu_{E,0}(\Phi)
\Phi_{p_1q_1} \cdots \Phi_{p_nq_n} \!\!\!\!\!
\sum_{j_1,\dots,m_v=1}^N
\prod_{i=1}^v\big(\Phi_{j_ik_i}\Phi_{k_il_i}\Phi_{l_im_i}\Phi_{m_ij_i}\big)\bigg]_c\!,\!\!
\label{cumulants-1}
\end{gather}
where $\Phi_{kl}:=\Phi(e_{kl})$ and $[~]_c$ means taking the connected
part. We~fix the order $v$ and rest\-rict our attention to the case that
$p_1,\dots,p_n$ are pairwise different. By the definition of the
Gau\ss{}ian measure ${\rm d}\mu_{E,0}(\Phi)$, this integral is zero for
$n$ odd, whereas for $n$ even it evaluates into a sum over all
partitions of
$\Phi_{p_1q_1} \cdots \Phi_{p_nq_n} \sum_{j_1,\dots,m_v=1}^N
\prod_{i=1}^v\big(\Phi_{j_ik_i}\Phi_{k_il_i}\Phi_{l_im_i}\Phi_{m_ij_i}\big)$
into products of pairs with a pair $(\Phi_{jk}\Phi_{lm})$ replaced by
$\frac{\delta_{jm}\delta_{kl}}{N(E_j+E_l)}$.

Every pairing contributing to (\ref{cumulants-1}) has a convenient
visualisation as a ribbon graph. Its~buil\-ding blocks are $n$
one-valent vertices representing $\Phi_{p_1q_1},\dots,\Phi_{p_nq_n}$
and $v$ four-valent vertices representing
$\Phi_{j_ik_i}\Phi_{k_il_i}\Phi_{l_im_i}\Phi_{m_ij_i}$ for
$i=1,\dots,v$ as well as $r$ ribbons which connect the vertices. A~ribbon represents a pair $(\Phi_{jk}\Phi_{lm})$; it is drawn as a double line
\begin{picture}(20,15)
 \put(0,5){\line(1,0){20}}
 \put(0,9){\line(1,0){20}}
 \put(8,-1){\mbox{\scriptsize$j$}}
 \put(8,11){\mbox{\scriptsize$l$}}
\end{picture}
between the vertices (can be the same) at which $\Phi_{jk}$ and
$\Phi_{lm}$ are located. The two strands of this double line are labelled~$j$ and~$l$, respectively. A~strand is left open at a one-valent
vertex, whereas at a four-valent vertex we connect it
with the strand of the neighboured ribbon. A~four-valent vertex with
its attached ribbons thus looks as
\begin{picture}(24,19)
 \put(0,6){\line(1,0){10}}
 \put(0,10){\line(1,0){10}}
 \put(24,6){\line(-1,0){10}}
 \put(24,10){\line(-1,0){10}}
 \put(10,-4){\line(0,1){10}}
 \put(14,-4){\line(0,1){10}}
 \put(10,20){\line(0,-1){10}}
 \put(14,20){\line(0,-1){10}}
 \put(5,-1.5){\mbox{\scriptsize$j$}}
 \put(15,-1.5){\mbox{\scriptsize$k$}}
 \put(15,12.5){\mbox{\scriptsize$l$}}
 \put(2,12.5){\mbox{\scriptsize$m$}}
\end{picture}. A~ribbon graph is connected when any two vertices (one- or four-valent)
are connected by a chain of ribbons. We~only retain the connected
ribbon graphs in~(\ref{cumulants-1}).

\begin{figure}[h!t]
	\centering
	\def\svgwidth{500pt}
	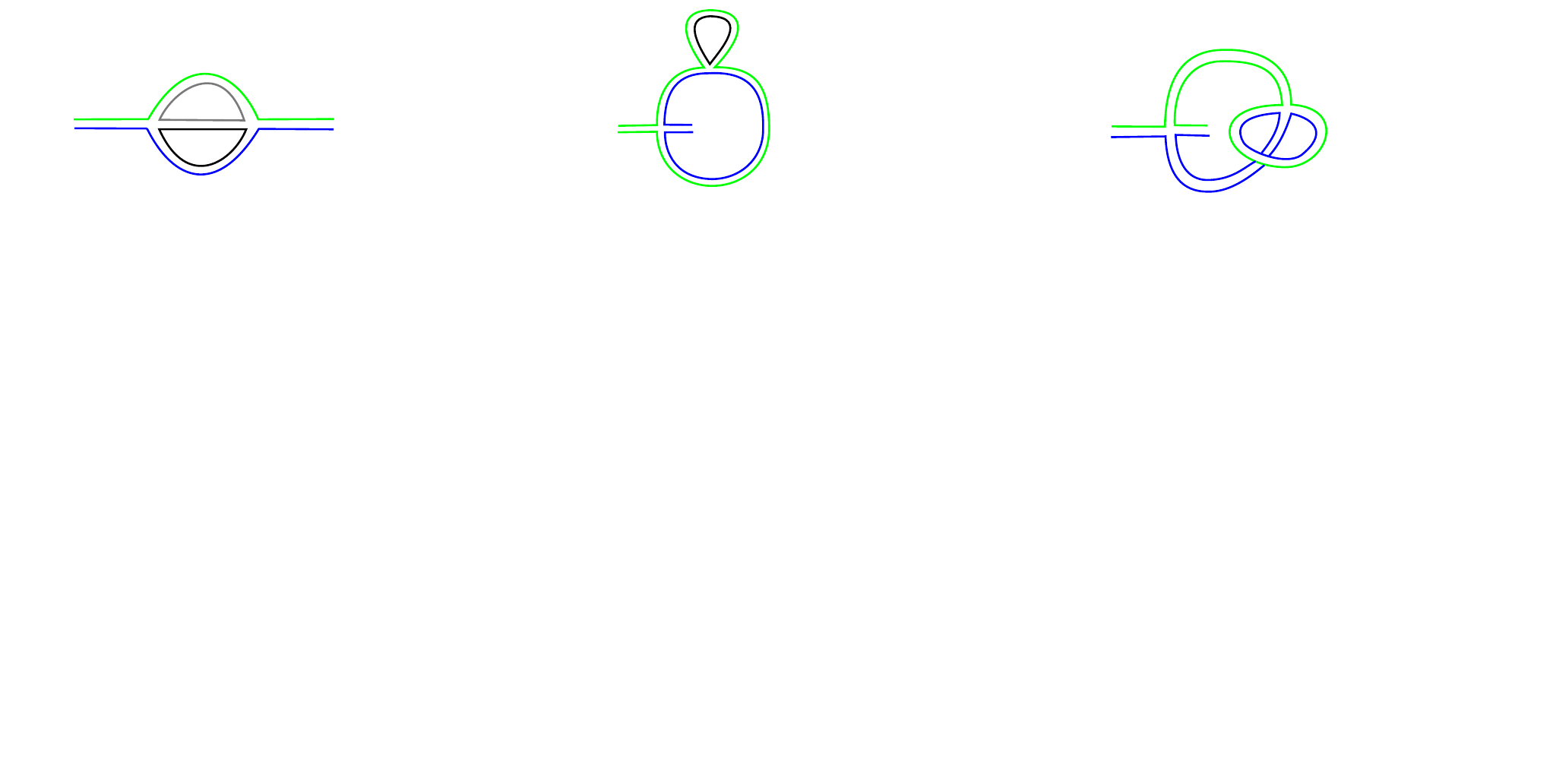\vspace*{-0.8cm}
	\caption{Three different ribbon graphs together with the associated
 Riemann surfaces $\mathcal{C}_{g,b}$. The external strands are coloured
 in green and blue. The topology of $\mathcal{C}_{g,b}$ is computed
 by $\chi=v-r+(n+s)=2-2g-b$.}	\label{fig:RibbonEx1}
\end{figure}

The upper row of Figure \ref{fig:RibbonEx1} shows three examples of ribbon graphs with $n=2$ one-valent vertices,
$v=2$ four-valent vertices and $r=5$ ribbons. In general, this construction lets the
strands connect to $n$ open lines and a certain number $s$ loops. Due
to the Kronecker-$\delta$s from pairs and vertices, every one of the
$n+s$ lines or loops carries a unique label. Every loop is labelled by
a summation index which is the remnant of the summation
$\sum_{j_1,\dots,m_v=1}^N$ in (\ref{cumulants-1}) after taking all
Kronecker-$\delta$s into account. The open lines are labelled by the
first matrix indices $p_1,\dots,p_n$ of the product
$\Phi_{p_1q_1} \cdots \Phi_{p_nq_n}$ in (\ref{cumulants-1}), and the
integral (\ref{cumulants-1}) is zero unless there is a permutation
$\pi \in \mathcal{S}_n$ with $q_i=\pi(p_i)$ for all $i=1,\dots,n$. This
permutation $\pi$ is uniquely\footnote{Uniqueness of $\pi$
 is the reason for choosing $p_1,\dots,p_n$ pairwise different.}
determined by the ribbon graph, simply
by looking at the strand labels of the ribbon at every one-valent
vertex.

Next, observe that due to the symmetry of the product
$\prod_{i=1}^v\big(\Phi_{j_ik_i}\Phi_{k_il_i}\Phi_{l_im_i}\Phi_{m_ij_i}\big)$
in (\ref{cumulants-1}) under cyclic permutation
$j_i\to k_i\to l_i\to m_i \to j_i$ of order $4$ at every vertex and
the $v!$ permutations of the vertices there are $4^v v!$ pairings which
give the same labelled ribbon graph. We~can thus omit the factor
$\frac{1}{4^v v!}$ and sum only over the different labelled ribbon
graphs. This expresses the integral (\ref{cumulants-1}) as
follows:
\begin{Proposition}\label{prop:weight}
 Let $p_1,\dots,p_n$ be pairwise different and
 $\mathfrak{G}^{v,\pi}_{p_1,\dots,p_n}$ be the set of labelled
 connec\-ted ribbon graphs with $v$ four-valent vertices and $n$
 one-valent vertices labelled $(p_1,\pi(p_1)), \dots,\allowbreak
 (p_n,\pi(p_n))$. Then for $n$ even the integral \eqref{cumulants-1} evaluates to
\begin{align}
N^n \langle e_{p_1\pi(p_1)}\dots e_{p_n\pi(p_n)}\rangle_c
 =\sum_{v=0}^\infty \sum_{\Gamma \in \mathfrak{G}^{v,\pi}_{p_1,\dots,p_n}}
 N^{v-r+n+s(\Gamma)} \varpi(\Gamma),
\label{cumulants-2}
\end{align}
where $r=2v+n/2$ is the number of ribbons,
$s(\Gamma)$ the number of loops in $\Gamma$ and
the weight $\varpi(\Gamma)$ is derived from the following Feynman rules:
\begin{itemize}\itemsep=0pt
\item label the $s=s(\Gamma)$ loops by $k_1,\dots,k_s$;
\item associate a factor $-\lambda$ to a $4$-valent ribbon-vertex;
\item associate the factor $\frac{1}{E_p+E_q}$ to a ribbon with
 strands labelled by $p$, $q$;
\item multiply all factors and apply the summation
 operator $\frac{1}{N^s}\sum_{k_1,\dots,k_s=1}^N$.
\end{itemize}
\end{Proposition}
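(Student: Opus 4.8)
The plan is to unwind the Gaussian integral in \eqref{cumulants-1} order by order in $v$ and organise the resulting Wick pairings into equivalence classes indexed by labelled ribbon graphs. First I would recall Wick's theorem: for the Gaussian measure ${\rm d}\mu_{E,0}$, the integral of a product of $2r$ factors $\Phi_{a_ib_i}$ is the sum over all perfect matchings of the $2r$ factors, with each matched pair $(\Phi_{jk}\Phi_{lm})$ replaced by the covariance $\frac{\delta_{jm}\delta_{kl}}{N(E_j+E_l)}$; taking the connected part $[\ \cdot\ ]_c$ restricts to those matchings whose associated graph is connected. Here the $2r$ factors are the $n$ external fields $\Phi_{p_1q_1},\dots,\Phi_{p_nq_n}$ together with the $4v$ vertex fields, so $2r=n+4v$, i.e.\ $r=2v+n/2$, matching the stated count of ribbons; in particular the integral vanishes for $n$ odd.

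Next I would make precise the dictionary between matchings and labelled ribbon graphs. Each perfect matching, read together with the cyclic structure at the four-valent vertices, determines a ribbon graph $\Gamma$: the one-valent vertices carry the external labels, the ribbons are the matched pairs, and the Kronecker-$\delta$s coming both from the covariances and from the index identifications $j_i,k_i,l_i,m_i$ at each vertex force every strand (open line or closed loop) to carry a single index. The open strands get the labels $p_1,\dots,p_n$, and a matching contributes nothing unless the second indices $q_1,\dots,q_n$ are a permutation $\pi$ of the $p_i$; since the $p_i$ are pairwise distinct, $\pi$ is read off unambiguously from the strand labels at the one-valent vertices. The remaining free summation indices — one per closed loop — are exactly the remnant of $\sum_{j_1,\dots,m_v=1}^N$ after all $\delta$s are used, so there are $s(\Gamma)$ of them, each summed from $1$ to $N$, and each covariance contributes $\frac{1}{N}$, giving the factors of $N$ in the Feynman rules.

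The combinatorial heart is the multiplicity count. Fix a labelled ribbon graph $\Gamma$ with $v$ four-valent vertices. I claim exactly $4^v v!$ distinct matchings of the $n+4v$ factors produce $\Gamma$ with its labelling: the $v!$ relabellings of which abstract vertex plays the role of vertex $i$ in $\prod_{i=1}^v(\cdots)$, and for each vertex the $4$ cyclic rotations $j_i\to k_i\to l_i\to m_i\to j_i$ under which the factor $\Phi_{j_ik_i}\Phi_{k_il_i}\Phi_{l_im_i}\Phi_{m_ij_i}$ is invariant — these act freely on the set of matchings giving a fixed labelled graph because the external legs, being distinctly labelled, pin down enough structure to rule out extra coincidences. (The genuinely subtle point, and the one I would spend the most care on, is this freeness claim: one must check that no ribbon graph is fixed by a nontrivial combination of vertex permutations and rotations once the one-valent vertices are labelled by the distinct $p_i$; symmetric graphs would otherwise be overcounted. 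This is where pairwise-distinctness of the $p_i$ does the real work.) Granting this, the prefactor $\frac{N^v(-\lambda)^v}{4^vv!}$ in \eqref{cumulants-1} times $4^vv!$ cancels the symmetry factor, leaving $N^v(-\lambda)^v$ per labelled graph — one $(-\lambda)$ per four-valent vertex and one $N$ per vertex, the latter combining with the $N^{-s}$ from the covariances and the overall $N^n$ to give the exponent $v-r+n+s(\Gamma)$ in \eqref{cumulants-2}. Collecting the covariance factors $\frac{1}{E_p+E_q}$ over ribbons and the summation operator $\frac{1}{N^s}\sum_{k_1,\dots,k_s=1}^N$ over loops assembles precisely the weight $\varpi(\Gamma)$, which completes the proof.
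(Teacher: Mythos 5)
Your proposal is correct and follows essentially the same route as the paper: expand the exponential, apply Wick's theorem for the Gaussian covariance, encode each pairing as a labelled ribbon graph, and cancel the prefactor $\frac{1}{4^v v!}$ against the $4^v v!$ pairings yielding the same labelled graph, with the powers of $N$ assembled from the $N^v$ of the vertices, the $N^{-r}$ of the covariances, the $s(\Gamma)$ free loop sums and the external $N^n$. The one point you rightly flag as subtle --- that the cyclic rotations and vertex permutations act freely on the pairings producing a fixed labelled graph once at least one distinctly labelled external leg is present --- is exactly the triviality of $\mathrm{Aut}(\Gamma)$ for $b\geq 1$ that the paper records in a footnote, so your treatment is, if anything, slightly more explicit than the paper's.
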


The exponent $\chi:=v-r+n+s(\Gamma)$ of $N$ in (\ref{cumulants-2}) has
a topological interpretation. Let~$b$ be the number of cycles in
$\pi$. We~take $b$ auxiliary faces, called boundary
components, and attach the one-valent vertices in cyclic order and
according to the cycle they belong to the circumference of the
boundary components. The edge between two neighboured vertices on a~boundary component closes an open line of $\Gamma$. In total this produces
$n$ additional loops. We~thus get a simplicial 2-complex of $v+n$ vertices,
$r+n$ edges and $b+n+s$ faces, hence of Euler characteristic
$2-2g=(v+n)-(r+n)+(b+n+s)$. This identifies the exponent
$\chi=2-2g-b=v-r+n+s$ of $N$ as the Euler characteristic of a bordered
Riemann surface~$\mathcal{C}_{g,b}(\Gamma)$. It is, up to homotopy,
uniquely defined\footnote{The dual graph of a connected ribbon graph
 $\Gamma$ associated to $\mathcal{C}_{g,b}$ is a quadrangulation (a
 map) of $\mathcal{C}_{g,b}$. Our definition of the correlation functions by
 disjoint cycles is equivalent to the definition used in~\cite{Borot:2017agy} for
 fully simple maps. Fully simple maps are a subset of ordinary maps which are usually
 studied in matrix models
 (see \cite{Borot:2017agy} for more details).} by the simplicial
2-complex encoded in $\Gamma$. The lower row of Figure
\ref{fig:RibbonEx1} sketches the bordered Riemann surfaces defined by
the corresponding ribbon graphs of the first row.

To compare with the solution via Dyson--Schwinger equations we give an
equivalent formulation of (\ref{cumulants-2}). A~ribbon graph
$\Gamma\in \mathfrak{G}^{v,\pi}_{p_1,\dots,p_n}$ contains full
information about the cycle type of~$\pi \in \mathcal{S}_n$ and about
which of the $p_1,\dots,p_n$ label a chosen cycle of $\pi$. We~rearrange these labels as follows. Set $p^1_1:=p_1$ and then
recursively $p^1_k:=\pi^{1-k}\big(p_1^1\big)$ for $k=2,\dots,n_1$ if
$\pi^{-n_1}\big(p_1^1\big)=p_1^1$. Next relabel any of the not yet assigned
$p_k$ as $p^2_1$ and continue to set $p^2_k:=\pi^{1-k}\big(p_2^1\big)$ for
$k=2,\dots,n_2$ if $\pi^{-n_2}\big(p_2^1\big)=p_2^2$. Proceed until the
relabelling is complete. We~denote by
$\mathfrak{G}^{v}_{|p_1^1\cdots p_{n_1}^1|\cdots |p_1^b\cdots p_{n_b}^b|}$ the set
of relabelled ribbon graphs in $\mathfrak{G}^{v,\pi}_{p_1,\dots,p_n}$;
both sets are in one-to-one correspondence. We~further partition this
set as
$\mathfrak{G}^{v}_{|p_1^1\cdots p_{n_1}^1|\cdots |p_1^b\cdots p_{n_b}^b|}
=\bigcup_{g=0}^\infty
\mathfrak{G}^{g,v}_{|p_1^1\cdots p_{n_1}^1|\cdots |p_1^b\cdots p_{n_b}^b|}$ into
subsets of graphs of the same genus $g$. For fixed $v$, this union is
actually finite. With these preparations we can represent the series coefficients of
the genus expansion
\[
G_{|p_1^1\cdots p_{n_1}^1|\cdots |p_1^b\cdots p_{n_b}^b|}=\sum_{g=0}^\infty N^{-2g}
G^{(g)}_{|p_1^1\cdots p_{n_1}^1|\cdots |p_1^b\cdots p_{n_b}^b|}
\]
 of~(\ref{eq:Gbn})
for pairwise different $p_i^j\in \{1,\dots,N\}$
as
\begin{gather*}
G^{(g)}_{|p_1^1\cdots p_{n_1}^1|\cdots |p_1^b\cdots p_{n_b}^b|}
=\sum_{v=0}^\infty\, \sum_{\Gamma \in
 \mathfrak{G}^{g,v}_{|p_1^1\cdots p_{n_1}^1|\cdots |p_1^b\cdots p_{n_b}^b|}}
 \!\!\!\!\!\varpi(\Gamma).
\end{gather*}

\begin{Remark}
 One can define similar structures for the logarithm of the partition function itself,
\[
\log \mathcal{Z}=\log \int_{H_N'} {\rm d}\mu_{E,0}(\Phi)\,{\rm e}^{-\frac{\lambda
 N}{4} \operatorname{Tr}(\Phi^4)}.
 \]
 Let $\mathfrak{G}^{g,v}_\varnothing$ be the set of connected
 vacuum ribbon graphs of genus $g$ made of $v$ four-valent vertices
 and no one-valent vertices. Then the analogue of~\eqref{cumulants-2}
 is
\begin{align*}
 \log \mathcal{ Z}
 =\sum_{g=0}^\infty\sum_{v=0}^\infty
 \sum_{\Gamma_0\in\mathfrak{G}^{g,v}_\varnothing}
 \frac{N^{2-g}\varpi(\Gamma_0)}{|\mathrm{Aut}(\Gamma_0)|},
\end{align*}
where $\varpi(\Gamma_0)$ is given by the same Feynman rules as in
Proposition~\ref{prop:weight} and
$|\mathrm{Aut}(\Gamma_0)|$ is the order of the automorphism
group\footnote{The automorphism group of any ribbon graph
 $\Gamma$ with at
 least one boundary $b\geq 1$ is trivial, i.e.,
 $|\mathrm{Aut}(\Gamma)|=1$.} of the vacuum ribbon graph $\Gamma_0$.

Later in Definition~\ref{def:freeenergy} we will introduce the free energy. We~can perturbatively
establish
\begin{align}
 \mathcal{F}^{(g)}= -\frac{\delta_{g,0}}{2N^2} \sum_{k,l=1}^N \log(E_k+E_l)
 +\sum_{v=1}^\infty \sum_{\Gamma_0\in\mathfrak{G}^{g,v}_\varnothing}\frac{\varpi(\Gamma_0)}{
 |\mathrm{Aut}(\Gamma_0)|} .
 \label{freeenergy}
\end{align}
\end{Remark}

\subsection{Examples}

\begin{Example}
 For the ribbon graphs of Figure \ref{fig:RibbonEx1}, we label the
 green open line by $p_1$ and the blue open line by
 $p_2$. Consequently, the graphs become elements of
 $\mathfrak{G}^{0,2}_{|p_1p_2|}$, $\mathfrak{G}^{0,2}_{|p_1|p_2|}$,
 $\mathfrak{G}^{1,2}_{|p_1p_2|}$
 respectively. The weights $\varpi(\Gamma)$ associated to these
 ribbon graphs are
\begin{gather*}
a)\quad \frac{(-\lambda)^2}{N^2(E_{p_1}+E_{p_2})^2}
\sum_{k_1,k_2=1}^N\frac{1}{(E_{p_1}+E_{k_1})(E_{p_2}+E_{k_2})(E_{k_1}+E_{k_2})}
,
\\
b)\quad \frac{(-\lambda)^2}{N(E_{p_1}+E_{p_2})^2(2E_{p_1})(2E_{p_2})}
\sum_{k_1=1}^N \frac{1}{E_{p_1}+E_{k_1}},
\\
c)\quad \frac{(-\lambda)^2}{(E_{p_1}+E_{p_2})^3(2E_{p_1})(2E_{p_2})}.
\end{gather*}
\end{Example}

\begin{Example}\label{ExF0}
 The free energy $\mathcal{F}^{(0)}$ of genus $g=0$ consists of
 the empty ribbon graph with weight given by the first term in (\ref{freeenergy}) and
 4 ribbon graphs up to order $\lambda^2$ (see Figure \ref{fig:freeenergy}).
Taking weights and order of the automorphism groups into account, we have perturbatively
\begin{align*}
\mathcal{F}^{(0)}={}& \frac{-1}{2N^2}\sum_{k,l=1}^N \log(E_k+E_l)+
\frac{(-\lambda)}{2N^3}\sum_{k,l,m=1}^N\frac{1}{(E_k+E_l)(E_k+E_m)}
\\
&+\frac{(-\lambda)^2}{2N^4}\sum_{j,k,l,m=1}^N\frac{1}{
 (E_j+E_m)(E_j+E_k)^2}\Big(\frac{1}{E_j+E_l}+\frac{1}{E_k+E_l}\Big)
\\
&+\frac{(-\lambda)^2}{8N^4}\sum_{j,k,l,m=1}^N\frac{1}{(E_j+E_k)(E_k+E_l)(E_l+E_m)(E_m+E_j)}
+\mathcal{O}\big(\lambda^3\big).
\end{align*}
\end{Example}

\begin{figure}[h!]
	\centering
	\def\svgwidth{400pt}
\begingroup%
  \makeatletter%
  \providecommand\color[2][]{%
    \errmessage{(Inkscape) Color is used for the text in Inkscape, but the package 'color.sty' is not loaded}%
    \renewcommand\color[2][]{}%
  }%
  \providecommand\transparent[1]{%
    \errmessage{(Inkscape) Transparency is used (non-zero) for the text in Inkscape, but the package 'transparent.sty' is not loaded}%
    \renewcommand\transparent[1]{}%
  }%
  \providecommand\rotatebox[2]{#2}%
  \newcommand*\fsize{\dimexpr\f@size pt\relax}%
  \newcommand*\lineheight[1]{\fontsize{\fsize}{#1\fsize}\selectfont}%
  \ifx\svgwidth\undefined%
    \setlength{\unitlength}{513.86557695bp}%
    \ifx\svgscale\undefined%
      \relax%
    \else%
      \setlength{\unitlength}{\unitlength * \real{\svgscale}}%
    \fi%
  \else%
    \setlength{\unitlength}{\svgwidth}%
  \fi%
  \global\let\svgwidth\undefined%
  \global\let\svgscale\undefined%
  \makeatother%
  \begin{picture}(1,0.30156411)%
    \lineheight{1}%
    \setlength\tabcolsep{0pt}%
    \put(0.28905551,0.28161671){\makebox(0,0)[lt]{\lineheight{1.25}\smash{\begin{tabular}[t]{l}$\lambda^1$\end{tabular}}}}%
    \put(0.7372662,0.28199349){\makebox(0,0)[lt]{\lineheight{1.25}\smash{\begin{tabular}[t]{l}$\lambda^2$\end{tabular}}}}%
    \put(0,0){\includegraphics[width=\unitlength,page=1]{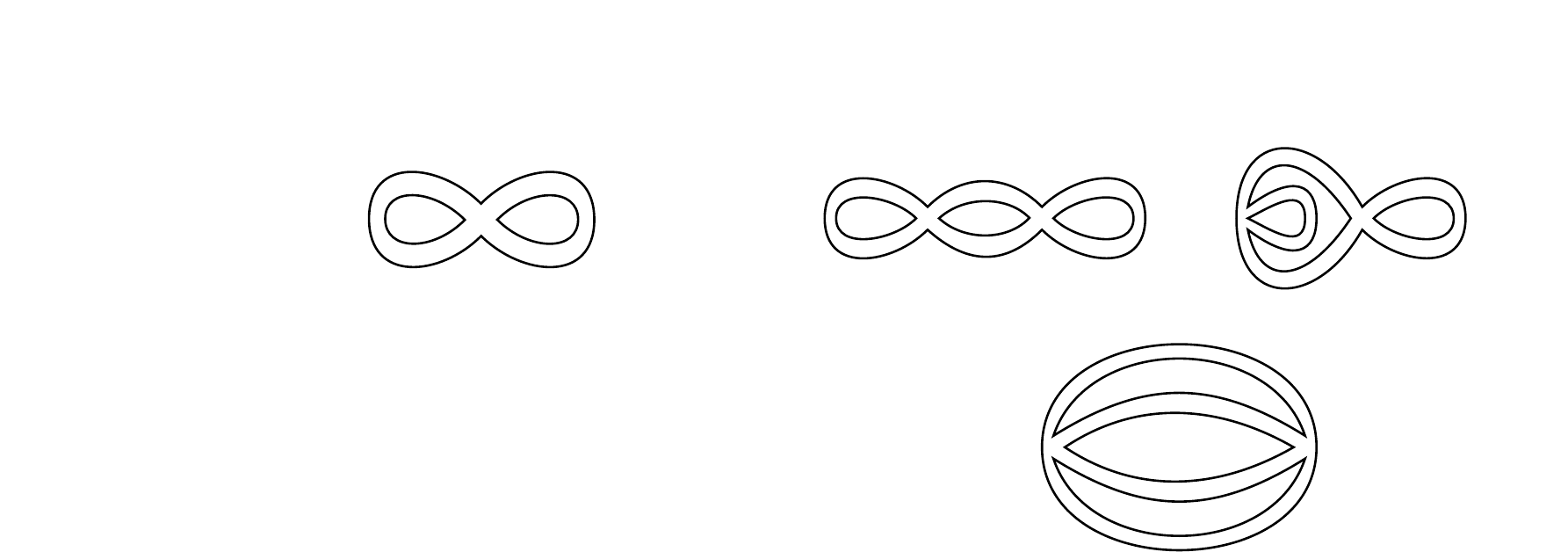}}%
    \put(0.04445554,0.28110333){\makebox(0,0)[lt]{\lineheight{1.25}\smash{\begin{tabular}[t]{l}$\lambda^0$\end{tabular}}}}%
    \put(0,0){\includegraphics[width=\unitlength,page=2]{freeenergy.pdf}}%
  \end{picture}%
\endgroup%

	\caption{The graph at order $\lambda^0$ is added as the empty
 ribbon graph. All these graphs contribute to the free energy
 of genus 0 up to order $\lambda^2$. These graphs are
 elements of $\mathfrak{G}^{0,v}_\varnothing$. The melon graph $\Gamma_M$ (in~the second row) has
 $ |\mathrm{Aut}(\Gamma_M)|=8$ and the other four graphs
 $ |\mathrm{Aut}(\Gamma)|=2$. }
	\label{fig:freeenergy}
 \end{figure}

\begin{Example}\label{Ex2}
 The first example with one boundary component is the 2-point function to which
 12 ribbon graphs contribute up to order $\lambda^2$ (see Figure \ref{fig:2PGraphs}).
 Taking the weights into account, we~have perturbatively
\begin{gather*}
G^{(0)}_{|ab|}=\frac{1}{E_a+E_b}+\frac{(-\lambda)}{(E_a+E_b)^2}
\frac{1}{N}\sum_{k=1}^N\bigg(\frac{1}{E_a+E_k}+\frac{1}{E_b+E_k}\bigg)
\\ \hphantom{G^{(0)}_{|ab|}=}
{}+\frac{(-\lambda)^2}{(E_a+E_b)^2}\frac{1}{N^2}\sum_{k,l=1}^N\bigg(
\frac{1}{(E_a+E_{k})^2(E_a+E_{l})}+\frac{2}{(E_a+E_{k})(E_b+E_{l})(E_a+E_b)}
\\ \hphantom{G^{(0)}_{|ab|}=+}
{} +\frac{1}{(E_b+E_{k})^2(E_b+E_{l})}+\frac{1}{(E_a+E_{k})^2(E_{k}+E_{l})}
+\frac{1}{(E_b+E_{k})^2(E_{k}+E_{l})}
\\ \hphantom{G^{(0)}_{|ab|}=+}
{} +\frac{1}{(E_a+E_{k})(E_a+E_{l})(E_a+E_b)}+\frac{1}{(E_b+E_{k})(E_b+E_{l})(E_a+E_b)}
\\ \hphantom{G^{(0)}_{|ab|}=+}
{} +\frac{1}{(E_a+E_{k})(E_b+E_{l})(E_{k}+E_{l})}\bigg)
+\mathcal{O}\big(\lambda^3\big).
\end{gather*}

\begin{figure}[h!]
	\centering
	\def\svgwidth{400pt}
\begingroup%
  \makeatletter%
  \providecommand\color[2][]{%
    \errmessage{(Inkscape) Color is used for the text in Inkscape, but the package 'color.sty' is not loaded}%
    \renewcommand\color[2][]{}%
  }%
  \providecommand\transparent[1]{%
    \errmessage{(Inkscape) Transparency is used (non-zero) for the text in Inkscape, but the package 'transparent.sty' is not loaded}%
    \renewcommand\transparent[1]{}%
  }%
  \providecommand\rotatebox[2]{#2}%
  \newcommand*\fsize{\dimexpr\f@size pt\relax}%
  \newcommand*\lineheight[1]{\fontsize{\fsize}{#1\fsize}\selectfont}%
  \ifx\svgwidth\undefined%
    \setlength{\unitlength}{446.93579482bp}%
    \ifx\svgscale\undefined%
      \relax%
    \else%
      \setlength{\unitlength}{\unitlength * \real{\svgscale}}%
    \fi%
  \else%
    \setlength{\unitlength}{\svgwidth}%
  \fi%
  \global\let\svgwidth\undefined%
  \global\let\svgscale\undefined%
  \makeatother%
  \begin{picture}(1,0.40734917)%
    \lineheight{1}%
    \setlength\tabcolsep{0pt}%
    \put(0,0){\includegraphics[width=\unitlength,page=1]{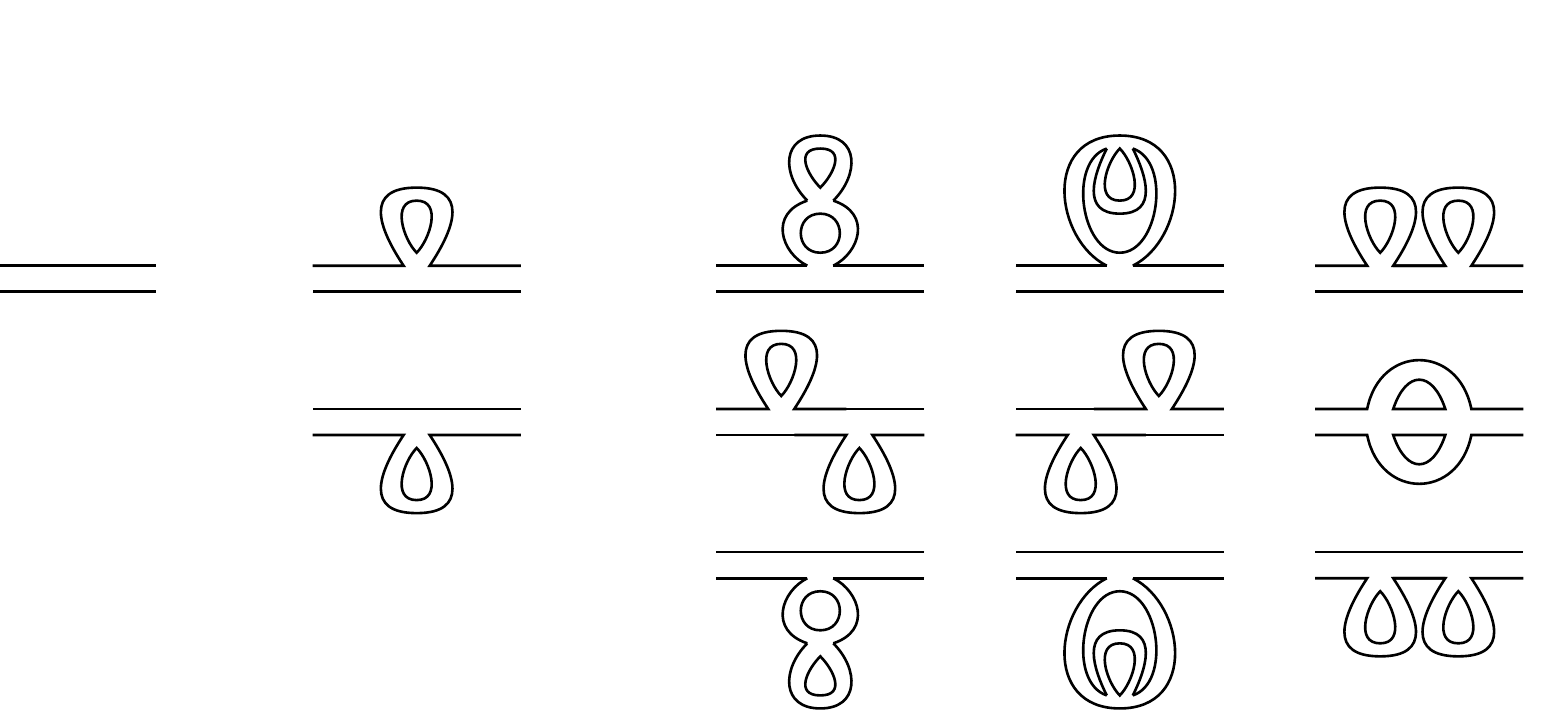}}%
    \put(0.02308892,0.39271376){\makebox(0,0)[lt]{\lineheight{1.25}\smash{\begin{tabular}[t]{l}$\lambda^0$\end{tabular}}}}%
    \put(0.25979371,0.39171405){\makebox(0,0)[lt]{\lineheight{1.25}\smash{\begin{tabular}[t]{l}$\lambda^1$\end{tabular}}}}%
    \put(0.69792114,0.39335028){\makebox(0,0)[lt]{\lineheight{1.25}\smash{\begin{tabular}[t]{l}$\lambda^2$\end{tabular}}}}%
  \end{picture}%
\endgroup%

	\caption{All graphs contributing to the 2-point function $G^{(0)}_{|ab|}$ up to order $\lambda^2$, where the upper strand is labelled by $a$ and the lower by $b$ for each graph. Topologically,
 some graphs are the same but different elements of $\mathfrak{G}^{0,v}_{|ab|}$
 due to different labellings.}
	\label{fig:2PGraphs}
\end{figure}
\end{Example}

\begin{Example}\label{Ex4}
 The second example will be the 4-point function to which 11 graphs contribute up to order
 $\lambda^2$ (see Figure \ref{fig:4PGraphs}).
Taking the weights into account,
 we have perturbatively
\begin{gather*}
G^{(0)}_{|abcd|}=\frac{(-\lambda)}{(E_a{+}E_b)(E_b{+}E_c)(E_c{+}E_d)(E_d{+}E_a)}
+\frac{(-\lambda)^2}{(E_a{+}E_b)(E_b{+}E_c)(E_c{+}E_d)(E_d{+}E_a)}
\\ \hphantom{G^{(0)}_{|abcd|}=}
{}\times\frac{1}{N} \sum_{k=1}^N\Big(\frac{1}{(E_a{+}E_{k})(E_a{+}E_b)}
+\frac{1}{(E_a{+}E_{k})(E_a{+}E_d)}+\frac{1}{(E_b{+}E_{k})(E_b{+}E_c)}
\\ \hphantom{G^{(0)}_{|abcd|}=\times}
{}+\frac{1}{(E_b{+}E_{k})(E_b{+}E_a)}+\frac{1}{(E_c{+}E_{k})(E_c{+}E_d)}+\frac{1}{(E_c{+}E_{k})(E_c{+}E_b)}
\\ \hphantom{G^{(0)}_{|abcd|}=\times}
{}+\frac{1}{(E_d{+}E_{k})(E_d{+}E_a)}+\frac{1}{(E_d{+}E_{k})(E_d{+}E_c)}+\frac{1}{(E_b{+}E_{k})(E_d{+}E_k)}
\\ \hphantom{G^{(0)}_{|abcd|}=\times}
{}+\frac{1}{(E_a{+}E_{k})(E_c{+}E_k)}\Big)+\mathcal{O}\big(\lambda^3\big).
\end{gather*}
 \begin{figure}[h!]
	\centering
	\def\svgwidth{300pt}
\begingroup%
  \makeatletter%
  \providecommand\color[2][]{%
    \errmessage{(Inkscape) Color is used for the text in Inkscape, but the package 'color.sty' is not loaded}%
    \renewcommand\color[2][]{}%
  }%
  \providecommand\transparent[1]{%
    \errmessage{(Inkscape) Transparency is used (non-zero) for the text in Inkscape, but the package 'transparent.sty' is not loaded}%
    \renewcommand\transparent[1]{}%
  }%
  \providecommand\rotatebox[2]{#2}%
  \newcommand*\fsize{\dimexpr\f@size pt\relax}%
  \newcommand*\lineheight[1]{\fontsize{\fsize}{#1\fsize}\selectfont}%
  \ifx\svgwidth\undefined%
    \setlength{\unitlength}{437.51516517bp}%
    \ifx\svgscale\undefined%
      \relax%
    \else%
      \setlength{\unitlength}{\unitlength * \real{\svgscale}}%
    \fi%
  \else%
    \setlength{\unitlength}{\svgwidth}%
  \fi%
  \global\let\svgwidth\undefined%
  \global\let\svgscale\undefined%
  \makeatother%
  \begin{picture}(1,0.25866307)%
    \lineheight{1}%
    \setlength\tabcolsep{0pt}%
    \put(0,0){\includegraphics[width=\unitlength,page=1]{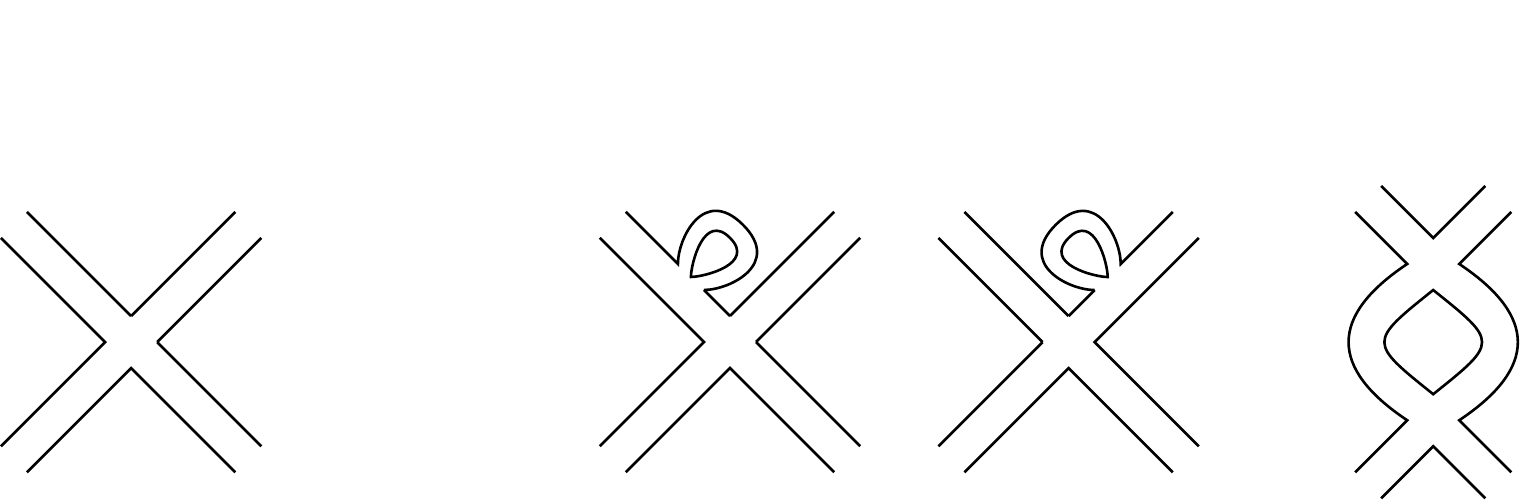}}%
    \put(0.06917512,0.24202455){\makebox(0,0)[lt]{\lineheight{1.25}\smash{\begin{tabular}[t]{l}$\lambda^1$\end{tabular}}}}%
    \put(0.66178395,0.24393211){\makebox(0,0)[lt]{\lineheight{1.25}\smash{\begin{tabular}[t]{l}$\lambda^2$\end{tabular}}}}%
  \end{picture}%
\endgroup%
 \vspace*{-2ex}
\caption{All graphs contributing to the 4-point function $G^{(0)}_{|abcd|}$ up to
 order $\lambda^2$, where the first two graphs of order $\lambda^2$ contribute with
 4 different labellings and the last graph with 2 different labellings. In~total this gives 10 different
 labelled ribbon graphs which
 contribute at order $\lambda^2$. These 11 labelled graphs are elements of
 $\mathfrak{G}^{0,v}_{|abcd|}$.}
	\label{fig:4PGraphs}
 \end{figure}
\end{Example}
\begin{Example}\label{Ex2+2}
The third example will be the $(2+2)$-point function to which 2
unlabelled ribbon graphs contribute up to order $\lambda^2$ (see Figure
\ref{fig:2+2PGraphs}).
Taking the weights into account, these split into 6 labelled
ribbon graphs, leading to a perturbative expansion
\begin{align*}
G^{(0)}_{|ab|cd|}={}&\frac{(-\lambda)^2}{(E_a+E_b)^2(E_c+E_d)^2}
\Big(\frac{1}{(E_a+E_c)^2}+\frac{1}{(E_a+E_d)^2}+\frac{1}{(E_b+E_c)^2}+\frac{1}{(E_b+E_d)^2}
\\
&+\frac{1}{(E_a+E_c)(E_b+E_d)}+\frac{1}{(E_a+E_d)(E_b+E_c)}\Big)+\mathcal{O}\big(\lambda^3\big).
\end{align*}
\begin{figure}[h!]
	\centering
	\def\svgwidth{200pt}
\begingroup%
  \makeatletter%
  \providecommand\color[2][]{%
    \errmessage{(Inkscape) Color is used for the text in Inkscape, but the package 'color.sty' is not loaded}%
    \renewcommand\color[2][]{}%
  }%
  \providecommand\transparent[1]{%
    \errmessage{(Inkscape) Transparency is used (non-zero) for the text in Inkscape, but the package 'transparent.sty' is not loaded}%
    \renewcommand\transparent[1]{}%
  }%
  \providecommand\rotatebox[2]{#2}%
  \newcommand*\fsize{\dimexpr\f@size pt\relax}%
  \newcommand*\lineheight[1]{\fontsize{\fsize}{#1\fsize}\selectfont}%
  \ifx\svgwidth\undefined%
    \setlength{\unitlength}{253.57041302bp}%
    \ifx\svgscale\undefined%
      \relax%
    \else%
      \setlength{\unitlength}{\unitlength * \real{\svgscale}}%
    \fi%
  \else%
    \setlength{\unitlength}{\svgwidth}%
  \fi%
  \global\let\svgwidth\undefined%
  \global\let\svgscale\undefined%
  \makeatother%
  \begin{picture}(1,0.32144105)%
    \lineheight{1}%
    \setlength\tabcolsep{0pt}%
    \put(0.46756463,0.2815156){\makebox(0,0)[lt]{\lineheight{1.25}\smash{\begin{tabular}[t]{l}$\lambda^2$\end{tabular}}}}%
    \put(0,0){\includegraphics[width=\unitlength,page=1]{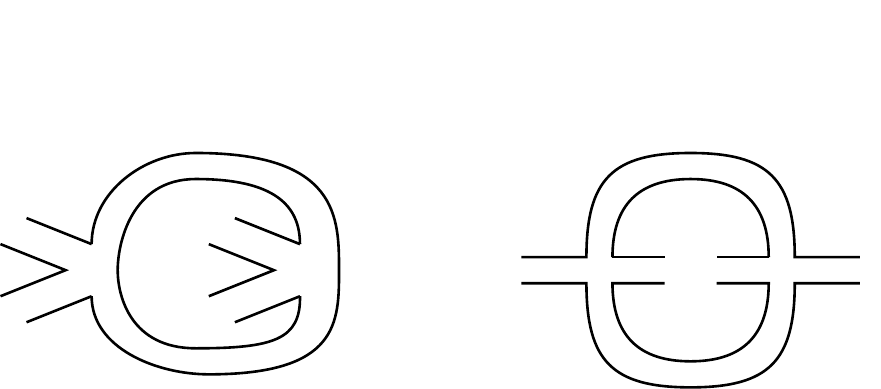}}%
  \end{picture}%
\endgroup%

 \caption{All graphs contributing to the $(2{+}2)$-point function $G^{(0)}_{|ab|cd|}$ up to
 order $\lambda^2$, where the left graph contributes with 4 different versions
 for the labelling and the right graph with 2 different versions for the labelling
 of the strands. This means that $\mathfrak{G}^{0,2}_{|ab|cd|}$ consists of 6 elements.
 } 
	\label{fig:2+2PGraphs}
 \end{figure}
\end{Example}

It is clear from the Feynman rules of Proposition~\ref{prop:weight}
that at each order the correlation functions are rational functions of $E_{p_i^j}$ and $E_{k_i}$.
Consequently the limit to coinciding indices $p_i^j\to p_{i'}^{j'}$ is well-defined
at any order in $\lambda$.

For any $n_i>2$, recursive algebraic relations between correlation functions are known;
we~refer to \cite{Hock:2020rje} for the general formula. In case of $(g,b)=(0,1)$, the
algebraic relation for the $n_1$-point function of genus zero is\vspace{-1ex}
\begin{gather*}
 G^{(0)}_{|p_1p_2\cdots p_{n_1}|}=-\lambda \sum_{k=1}^{\frac{n_1-2}{2}}
 \frac{G^{(0)}_{|p_{2k+2}\cdots p_{n_1}p_1|} G^{(0)}_{|p_2\cdots p_{2k+1}|}
 -G^{(0)}_{|p_{2k+1}\cdots p_{n_1}|} G^{(0)}_{|p_1\cdots p_{2k}|}}{(E_{p_{2k+1}}-E_{p_1})(E_{p_2}-E_{p_{n_1}})}.
\end{gather*}
The explicit combinatorial structure of this recursive equation was understood
in~\cite{DeJong} in form of two nested combinatorial problems each governed by Catalan numbers.
\begin{Example}
The 4-point function is algebraically expressed in terms of the 2-point func\-tion~by
\begin{gather*}
G^{(0)}_{|abcd|}=-\lambda \frac{G^{(0)}_{|ad|}G^{(0)}_{|bc|}-G^{(0)}_{|ab|}G^{(0)}_{|cd|}}{(E_c-E_a)(E_b-E_d)}.
\end{gather*}
The reader can check that this equation holds at the first two orders by
inserting Example~\ref{Ex2} into the r.h.s.\ to recover Example~\ref{Ex4}.
\end{Example}

\section{Auxiliary functions of topological significance}\label{sec:aux}

\subsection{Creation operator}

The derivative\vspace{-1ex}
\begin{gather*}
\hat{T}_q:=-N\frac{\partial}{\partial E_q}
\end{gather*}
with respect to the parameters of the free theory, which we call
\emph{boundary creation operator}, plays a central r\^ole in~\cite{Branahl:2020yru}. It is used to define the auxiliary functions\vspace{-1ex}
\begin{gather}
T_{q_1,q_2,\dots,q_m\|p_1^1\cdots p_{n_1}^1|p_1^2\cdots p_{n_2}^2|\cdots |p_1^b\cdots p_{n_b}^b|}
:=\hat{T}_{q_1}\cdots\hat{T}_{q_m}G_{|p_1^1\cdots p_{n_1}^1|p_1^2\cdots p_{n_2}^2|\cdots |p_1^b\cdots p_{n_b}^b|}
\nonumber
\\
\Omega_{q_1,\dots,q_m} :=\hat{T}_{q_2}\cdots\hat{T}_{q_m}\Omega_{q_1}
+\frac{\delta_{m,2}}{(E_{q_1}-E_{q_2})^2},\qquad
m\geq 2, \label{def:Om}
\end{gather}
where\vspace{-1ex}
\begin{gather*}
\Omega_{q} := \frac{1}{N}\sum_{k=1}^N G_{|qk|}+\frac{1}{N^2}G_{|q|q|}.
\end{gather*}
To define these functions properly it is necessary that all
$q_i$, $p_i^j$ are pairwise different.
As before we introduce genus expansions
$T_{q_1,q_2,\dots,q_m\|p_1^1\cdots p_{n_1}^1|\cdots |p_1^b\cdots p_{n_b}^b|}
=\sum_{g=0}^\infty N^{-2g}T^{(g)}_{q_1,q_2,\dots,q_m\|p_1^1\cdots p_{n_1}^1|\cdots |p_1^b\cdots p_{n_b}^b|}$
and
$\Omega_{q_1,\dots,q_m} =\sum_{g=0}^\infty N^{-2g}\Omega^{(g)}_{q_1,\dots,q_m}$.

\begin{Definition}\label{def:freeenergy}
 The free energy $\mathcal{F}$ is defined to be a primitive of $\Omega_q$ under the
 creation operator, i.e.,
$\Omega_q^{(g)}=:\hat{T}_q \mathcal{F}^{(g)}$.
\end{Definition}
Main tool to evaluate applications of $\hat{T}_q$ is the
equation of motion \cite[Lemma 2]{Schurmann:2019mzu-v3} which can be reformulated as
\begin{gather}
\frac{1}{N} \frac{\partial \log \mathcal{Z}(M)}{\partial E_{q}}
= \sum_{k=1}^N\bigg(
\frac{\partial^2 \log \mathcal{Z}(M)}{\partial M_{qk} \partial M_{kq}}\nonumber
+
\frac{\partial \log \mathcal{Z}(M)}{\partial M_{qk}}
\frac{\partial \log \mathcal{Z}(M)}{\partial M_{kq}}\bigg)
\\ \hphantom{\frac{1}{N} \frac{\partial \log \mathcal{Z}(M)}{\partial E_{q}}=}
{}+ \frac{1}{N}\sum_{k=1}^N G_{|qk|}+\frac{1}{N^2}G_{|q|q|}.
\label{eom-2}
\end{gather}
The following proposition gives
the exact result for a single $\hat{T}_q$-operation. In its proof the
assumption that the $p^j_i$ are pairwise different and different from $q$ is essential.
For application of another $\hat{T}_{q'}$ on the result such an assumption does not hold.
The calculation of several $\hat{T}$-operations must be carefully repeated.
\begin{Proposition}\label{PropT}
 For $\mathcal{J}=p_1^1\cdots p_{n_1}^1\big|p_1^2\cdots p_{n_2}^2\big|\cdots \big|p_1^b\cdots p_{n_b}^b\equiv
 \big\{J^1,\dots,J^b\big\}$ and $J^i=[p_1^i,\allowbreak\dots,p_{n_i}^i]$ with all $p^j_l$ pairwise different and
different from $q$ one has
\begin{gather*}
 \hat{T}_qG^{(g)}_{|\mathcal{J}|} =
 \frac{1}{N}\sum_{k=1}^N G^{(g)}_{|\mathcal{J}|qk|} +G^{(g-1)}_{|\mathcal{J}|q|q|}
+\sum_{j=1}^b\sum_{l=1}^{n_j} G^{(g)}_{|[q,p_l^j]\triangleright_l J^j| \mathcal{J}{\setminus}J^j|}
 +\sum_{\substack{g_1+g_2=g\\ \mathcal{J}_1\uplus \mathcal{J}_2=\mathcal{J}}}
G^{(g_1)}_{|\mathcal{ J}_1|q|}G^{(g_2)}_{|\mathcal{ J}_2|q|},
\end{gather*}
where
 $[p_1,p_2,\dots,p_i]\triangleright_l [q_1,q_2,\dots,q_j]:=[q_1,\dots,q_l,p_1,\dots,p_i,q_{l+1},\dots,q_j]$
 denotes the insertion of the first tuple
after the $l^\text{th}$ position of the second
tuple, $l=0,\dots,j$.
\begin{proof}
As in~\cite{Branahl:2020yru} we introduce derivative operators
$\frac{D^{|\mathcal{J}|}}{D M^{\mathcal{J}}}=
\frac{D^{|J^1|}}{DM^{J^1}}\cdots \frac{D^{|J^b|}}{D^{J^b}}$
with $\frac{D^{n}}{DM^{[p_1,\dots,p_n]}}:= \frac{(-\mathrm{i}N)^{n} \partial^n}{
 \partial M_{p_1p_2}\cdots \partial M_{p_{n-1}p_n}\partial M_{p_{n}p_1}}$. This gives a representation
$G_{|\mathcal{J}|}
=N^{b-2} \frac{D^{|\mathcal{J}|}}{D M^{\mathcal{J}}} \log \mathcal{Z}(M)\big|_{M=0}
$
to which we apply $\hat{T}_q$ via (\ref{eom-2}):
\begin{gather*}
\hat{T}_qG_{|\mathcal{J}|}
= N^{b-2} \sum_{k=1}^N \frac{D^{|\mathcal{J}|}}{D M^{\mathcal{J}}} \bigg(
{-}N^2 \frac{\partial^2 \log \mathcal{Z}(M)}{\partial M_{qk} \partial M_{kq}}
 -N^2 \frac{\partial \log \mathcal{Z}(M)}{\partial M_{qk}}
\frac{\partial \log \mathcal{Z}(M)}{\partial M_{kq}}
\bigg)\bigg|_{M=0}
\\ \hphantom{\hat{T}_qG_{|\mathcal{J}|}}
=\frac{1}{N}\sum_{k=1}^N G_{|\mathcal{J}|qk|} +\frac{1}{N^2} G_{|\mathcal{J}|q|q|}
+\sum_{j=1}^b\sum_{l=1}^{n_j} G_{|[q,p_l^j]\triangleright_l J^j| \mathcal{J}{\setminus}J^j|}
 +\sum_{\mathcal{J}_1\uplus \mathcal{J}_2=\mathcal{J}}
 G_{|\mathcal{ J}_1|q|}G_{|\mathcal{ J}_2|q|}.
 \end{gather*}
The second line results from the first line by the following considerations.
The first term
$N^{b-2}\frac{D^{|\mathcal{J}|}}{DM^{\mathcal{J}}}
\frac{D^2 (\log \mathcal{Z}(M))}{DM^{[q,k]}}$ contributes in three
different ways:
\begin{enumerate}\itemsep -3pt
\item[$(a)$] For generic $k$ it produces $\frac{1}{N}G_{|\mathcal{J}|qk|}$.
\item[$(b)$] For $k=q$ it produces, besides $\frac{1}{N}G_{|\mathcal{J}|qq|}$ included
 in (a), also $\frac{1}{N^2}G_{|\mathcal{J}|q|q|}$ when interpreting
$\frac{D^2}{DM^{[q,q]}}=\frac{D}{DM^{[q]}}\frac{D}{DM^{[q]}}$.

\item[$(c)$] For $k=p^j_l$ it produces,
besides $\frac{1}{N}G_{|\mathcal{J}|qp^j_l|}$ included in (a), also
$G_{|[q,p_l^j]\triangleright_l J^j|
\mathcal{J}{\setminus} J^j|}$ when taking
$\frac{D^{n^j}}{DM^{[p^j_1,\dots,p^j_{n_j}]}}
\frac{D^2}{DM^{[q,p^j_l]}}=\frac{D^{n_j+2}}{D
M^{[p^j_1,\dots,p^j_l,q,p^j_l,p^j_{l+1},p^j_{n_j}]}}$ into account.
\end{enumerate}
The second term of the first line only contributes for $k=q$ and for partitions of
$\frac{D^{|\mathcal{J}|}}{DM^{\mathcal{J}}}$
into two blocks $\mathcal{J}=\mathcal{J}'\uplus\mathcal{J}''$ which preserve the $J^j$
individually. Indeed, any splitting of the
$\frac{D^{n}}{DM^{[p_1,\dots,p_n]}}$ applied
to $\mathcal {Z}(M)$ gives zero when setting $M=0$.

Inserting $G_{\dots}=\sum_{g=0}^\infty N^{-2g}G_{\dots}^{(g)}$ in the second line and extracting the coefficient of
$N^{-2g}$ gives the assertion.
\end{proof}
\end{Proposition}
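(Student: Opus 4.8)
The plan is to lift $G^{(g)}_{|\mathcal{J}|}$ to a derivative of $\log\mathcal{Z}(M)$, replace the single $E_q$-derivative by the equation of motion \eqref{eom-2}, and then carefully expand the resulting $M$-derivatives. First I would set up, as in \cite{Branahl:2020yru}, the representation $G_{|\mathcal{J}|}=N^{b-2}\,\frac{D^{|\mathcal{J}|}}{DM^{\mathcal{J}}}\log\mathcal{Z}(M)\big|_{M=0}$, where each boundary block $J^i=[p_1^i,\dots,p_{n_i}^i]$ contributes the cyclically closed product $\frac{D^{n_i}}{DM^{J^i}}=(-\mathrm{i}N)^{n_i}\partial_{M_{p_1^ip_2^i}}\cdots\partial_{M_{p_{n_i}^ip_1^i}}$. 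Since $\hat T_q=-N\partial_{E_q}$ differentiates a variable independent of all $M_{kl}$, it commutes past every $D/DM$, so $\hat T_qG_{|\mathcal{J}|}=N^{b-2}\frac{D^{|\mathcal{J}|}}{DM^{\mathcal{J}}}\big({-}N\partial_{E_q}\log\mathcal{Z}(M)\big)\big|_{M=0}$. Substituting \eqref{eom-2} and using $\frac{D^2}{DM^{[q,k]}}=(-\mathrm{i}N)^2\partial_{M_{qk}}\partial_{M_{kq}}$, the factor $-N\partial_{E_q}\log\mathcal{Z}(M)$ becomes $\sum_{k=1}^N\big(\frac{D^2}{DM^{[q,k]}}\log\mathcal{Z}-N^2\partial_{M_{qk}}\log\mathcal{Z}\,\partial_{M_{kq}}\log\mathcal{Z}\big)$ up to the $M$-independent terms coming from $\frac{1}{N}\sum_kG_{|qk|}+\frac{1}{N^2}G_{|q|q|}$ in \eqref{eom-2}, which $\frac{D^{|\mathcal{J}|}}{DM^{\mathcal{J}}}$ annihilates since $|\mathcal{J}|\ge1$.

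Next I would expand $\frac{D^{|\mathcal{J}|}}{DM^{\mathcal{J}}}\frac{D^2}{DM^{[q,k]}}\log\mathcal{Z}\big|_{M=0}$ by case distinction on the summation index $k$. If $k$ is distinct from $q$ and from all $p_l^j$, the block $[q,k]$ is a genuinely new boundary and, after balancing $N^{b-2}$ against the extra $N$-powers, this produces $\frac{1}{N}\sum_kG_{|\mathcal{J}|qk|}$, which already contains the coincidence limit $G_{|\mathcal{J}|qq|}$. If $k=q$ one must additionally allow the reading $\frac{D^2}{DM^{[q,q]}}=\frac{D}{DM^{[q]}}\frac{D}{DM^{[q]}}$ as two length-one boundaries, contributing the extra $\frac{1}{N^2}G_{|\mathcal{J}|q|q|}$. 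If $k=p_l^j$, the block $[q,p_l^j]$ shares the index $p_l^j$ with $J^j$ and fuses with it into the single cyclic block $[q,p_l^j]\triangleright_lJ^j$, giving $\sum_{j=1}^b\sum_{l=1}^{n_j}G_{|[q,p_l^j]\triangleright_lJ^j|\mathcal{J}\setminus J^j|}$.

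For the product term $\partial_{M_{qk}}\log\mathcal{Z}\cdot\partial_{M_{kq}}\log\mathcal{Z}$ I would apply the generalised Leibniz rule, distributing $\frac{D^{|\mathcal{J}|}}{DM^{\mathcal{J}}}$ over the two factors and summing over all such splittings. Two facts then collapse this to the claimed term: only $k=q$ survives, because for $k\ne q$ one factor is left with a dangling index ($q$ or $k$) that cannot be completed to a permutation of the remaining first indices, which are pairwise different from $q$, so its cumulant vanishes at $M=0$; and each block $J^j$ must be sent entirely to one factor, since a truncated cyclic derivative $\frac{D^{n'}}{DM^{[\text{partial}]}}$ of $\mathcal{Z}$ also vanishes at $M=0$ by the same index-matching argument. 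What remains is $\sum_{\mathcal{J}_1\uplus\mathcal{J}_2=\mathcal{J}}G_{|\mathcal{J}_1|q|}G_{|\mathcal{J}_2|q|}$. Collecting the three contributions and substituting $G_\bullet=\sum_{g\ge0}N^{-2g}G^{(g)}_\bullet$ on both sides, the coefficient of $N^{-2g}$ is the asserted identity: the $\frac{1}{N^2}$ in front of $G_{|\mathcal{J}|q|q|}$ moves its genus to $g-1$, and the product term splits as $\sum_{g_1+g_2=g}$.

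The hardest part will be the combinatorial bookkeeping of the last two steps: keeping the powers of $N$ consistent (each new $\partial_M$ brings a $-\mathrm{i}N$, while the boundary prefactor $N^{b-2}$ changes with $b$), and getting the coincidences right — that inserting the two-cycle $[q,p_l^j]$ at slot $l$ of $J^j$ is exactly $[q,p_l^j]\triangleright_lJ^j$, and that the case $k=q$ genuinely splits into the two readings $|qq|$ (inside the sum) and $|q|q|$. Both hinge on the $p_l^j$ and $q$ being pairwise distinct, which is what forces the non-admissible cumulants to vanish at $M=0$; this is also why, as the statement emphasises, a second application of $\hat T_{q'}$ does not inherit the hypothesis and has to be re-examined with the new coincidences built in.
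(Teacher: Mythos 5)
Your proposal is correct and follows essentially the same route as the paper's own proof: the lift $G_{|\mathcal{J}|}=N^{b-2}\frac{D^{|\mathcal{J}|}}{DM^{\mathcal{J}}}\log\mathcal{Z}(M)\big|_{M=0}$, substitution of the equation of motion \eqref{eom-2}, the three-way case analysis on the summation index $k$ (generic, $k=q$ giving the extra $|q|q|$ reading, $k=p^j_l$ giving the fused cycle $[q,p^j_l]\triangleright_l J^j$), the collapse of the product term to $k=q$ with block-preserving splittings, and the final extraction of the $N^{-2g}$ coefficient. You even spell out two details the paper leaves implicit, namely why the $M$-independent terms of \eqref{eom-2} are annihilated and why non-admissible cumulants vanish at $M=0$ by index matching.
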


\begin{Example}\label{ex2pg}
	The action of the creation operator on the 2-point function reads
	\begin{gather*}
 \hat{T}_qG^{(g)}_{|p_1p_2|}=\frac{1}{N}\sum_{k=1}^N G^{(g)}_{|p_1p_2|qk|}
 +G^{(g-1)}_{|p_1p_2|q|q|}+G^{(g)}_{|p_1qp_1p_2|}+G^{(g)}_{|p_1p_2qp_2|}.
	\end{gather*}
\end{Example}

\begin{Example}
	The action of the creation operator on the $(1+1)$-point function reads
	\begin{gather*}
 \hat{T}_qG^{(g)}_{|p^1|p^2|}=\frac{1}{N}\sum_{k=1}^N G^{(g)}_{|p^1|p^2|qk|}
 +G^{(g-1)}_{|p^1|p^2|q|q|}+G^{(g)}_{|p^1qp^1|p^2|}+G^{(g)}_{|p^1|p^2qp^2|}
 +\sum_{g_1+g_2=g}G^{(g_1)}_{|p^1|q|}G^{(g_2)}_{|p^2|q|}.
	\end{gather*}
\end{Example}

We also give a perturbative proof of Proposition~\ref{PropT}.
The creation operator $\hat{T}_q$
takes the derivative with respect to $E_q$ of a rational function arising from the
Feynman rules in Proposition~\ref{prop:weight}. Since all external indices $p_i^j$ are by assumption
different from $q$, the derivative hits only the sums of the internal
strands (loops) if the summation index coincides with $q$. Being a~derivative,
it is the sum over all strands of all internal loops. Isolating every such target as
\begin{gather*}
\frac{1}{N}\sum_{k=1}^N \frac{1}{E_k+E_m}f(E_k,E_m,\dots ),
\end{gather*}
where $f(E_k,E_m,\dots )$ is a rational function in $E_k$, $E_m$ and further $E_{j}$, then the creation operator
generates
\begin{gather*}
 \hat{T}_q \frac{1}{N}\sum_{k=1}^N \frac{1}{E_k+E_m} f(E_k,E_m,\dots )
 \rightarrow \frac{1}{(E_q+E_m)^2}f(E_q,E_m,\dots ).
\end{gather*}
Graphically, a ribbon with internal strand labelled by $k$ is hit by
the creation operator $\hat{T}_q$. Its~ribbon is cut into two ribbons
each with weight $\frac{1}{E_q+E_m}$, where the previous loop label $k$ is now fixed to $q$.
Depending on the type of the other index $m$ and the
topology of the graph, four classes of ribbon graphs can be produced by
action of $\hat{T}_q$:
\begin{enumerate}
\item \label{itemA} The creation operator $\hat{T}_q$ acts
 on a ribbon
 in which both strands are internal strands, but different from each
 other. This means that $m$ above is another summation
 index to which a summation operator $\frac{1}{N}\sum_{m=1}^N$ is
 assigned. The ribbon graph resulting from application of
 $\hat{T}_q$ thus receives an additional boundary component with $2$
 one-valent vertices with one strand fixed to $q$ and the other to
 $m$ with a summation over $m$. All other boundary components of the
 previous ribbon graphs are untouched. The resulting graph
 contributes to
 $\frac{1}{N}\sum_{m=1}^N
 G^{(g)}_{|p^1_1\cdots p^1_{n_1}|\cdots |p^b_1\cdots p^b_{n_b}|qm|}$.
\begin{figure}[h!]
	\centering
	\def\svgwidth{300pt}
\begingroup%
  \makeatletter%
  \providecommand\color[2][]{%
    \errmessage{(Inkscape) Color is used for the text in Inkscape, but the package 'color.sty' is not loaded}%
    \renewcommand\color[2][]{}%
  }%
  \providecommand\transparent[1]{%
    \errmessage{(Inkscape) Transparency is used (non-zero) for the text in Inkscape, but the package 'transparent.sty' is not loaded}%
    \renewcommand\transparent[1]{}%
  }%
  \providecommand\rotatebox[2]{#2}%
  \newcommand*\fsize{\dimexpr\f@size pt\relax}%
  \newcommand*\lineheight[1]{\fontsize{\fsize}{#1\fsize}\selectfont}%
  \ifx\svgwidth\undefined%
    \setlength{\unitlength}{293.25000754bp}%
    \ifx\svgscale\undefined%
      \relax%
    \else%
      \setlength{\unitlength}{\unitlength * \real{\svgscale}}%
    \fi%
  \else%
    \setlength{\unitlength}{\svgwidth}%
  \fi%
  \global\let\svgwidth\undefined%
  \global\let\svgscale\undefined%
  \makeatother%
  \begin{picture}(1,0.4117647)%
    \lineheight{1}%
    \setlength\tabcolsep{0pt}%
    \put(0,0){\includegraphics[width=\unitlength,page=1]{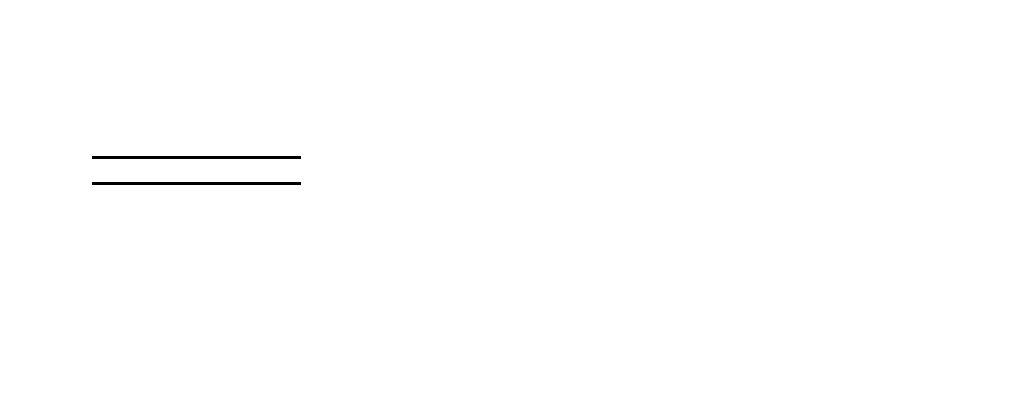}}%
    \put(0.15513402,0.27608659){\makebox(0,0)[lt]{\lineheight{1.25}\smash{\begin{tabular}[t]{l}$m$\end{tabular}}}}%
    \put(0.15103749,0.18269442){\makebox(0,0)[lt]{\lineheight{1.25}\smash{\begin{tabular}[t]{l}$k$\end{tabular}}}}%
    \put(0.39811776,0.23588309){\makebox(0,0)[lt]{\lineheight{1.25}\smash{\begin{tabular}[t]{l}$\rightarrow~\frac{1}{N}\sum_{m=1}^N $\end{tabular}}}}%
    \put(0,0){\includegraphics[width=\unitlength,page=2]{insertion2.pdf}}%
    \put(0.72475953,0.26768853){\makebox(0,0)[lt]{\lineheight{1.25}\smash{\begin{tabular}[t]{l}$m$\end{tabular}}}}%
    \put(0.72742552,0.17658642){\makebox(0,0)[lt]{\lineheight{1.25}\smash{\begin{tabular}[t]{l}$q$\end{tabular}}}}%
    \put(0,0){\includegraphics[width=\unitlength,page=3]{insertion2.pdf}}%
    \put(0.83919991,0.26845462){\makebox(0,0)[lt]{\lineheight{1.25}\smash{\begin{tabular}[t]{l}$m$\end{tabular}}}}%
    \put(0.83859576,0.17734407){\makebox(0,0)[lt]{\lineheight{1.25}\smash{\begin{tabular}[t]{l}$q$\end{tabular}}}}%
  \end{picture}%
\endgroup%

\end{figure}

\item \label{itemB} The creation operator $\hat{T}_q$ acts on a
 ribbon, where both strands are internal strands and the same, i.e.,
 the index $m$ is also set to $k$. Here we consider the case that after cutting the ribbon, the
 ribbon graph is still connected. Cutting the selected ribbon then decreases the genus by 1;
 otherwise it is not possible that both strands have the same label.
 Acting with~$\hat{T}_q$ on the other strand of the same ribbon leads to the same result,
 thus a total factor of $2$.
The resulting graph has two additional boundary components
 each with 1 one-valent vertex with strands fixed to $q$. All other
 boundary components of the previous ribbon graphs are untouched. The
 resulting graph with its factor of $2$ contributes to
 $G^{(g-1)}_{|p^1_1\cdots p^1_{n_1}|\cdots |p^b_1\cdots p^b_{n_b}|q|q|}$. The factor of 2 accounts for the
 difference between labelled and unlabelled ribbon graphs. To see this consider
 $G^{(g-1)}_{|p^1_1\cdots p^1_{n_1}|\cdots |p^b_1\cdots p^b_{n_b}|q'|q''|}$ with $q'\neq q''$ in which every topological
 ribbon graph occurs twice, namely first with labels $q'$, $q''$ on an ordered pair of open lines and
 second with labels $q''$, $q'$ on that pair. When setting $q'=q''=q$ we get twice the same labelled
 ribbon graph.
\begin{figure}[h!]
\centering
\def\svgwidth{350pt}
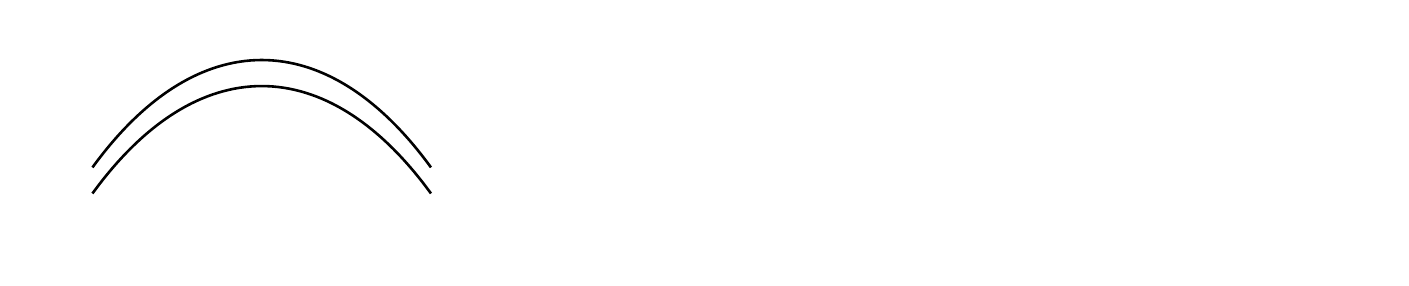
\end{figure}

\item \label{itemC} The creation operator $\hat{T}_q$ acts on a
 ribbon, where one strand is internal and the other external,
 i.e., the index $m$ above is some $p_l^j$. After cutting the
 ribbon, the previously inter\-nal strand becomes part of the
 $j^{\mathrm{th}}$ boundary component. The resulting ribbon graph recei\-ves~2 additional
 one-valent vertices next to each other, with attached ribbons labelled $p^j_lq$ and~$qp^j_l$,
 at the $j^{\mathrm{th}}$ boundary component. All other $b-1$ boundary components
 are untou\-ched. The resulting ribbon graph thus contributes to
 $G^{(g)}_{|p^1_1\cdots p^1_{n_1}|p^j_1\cdots p^j_{l-1}p^j_{l}qp^j_l p^j_{l+1}\cdots p^j_{n_j}|p^b_1\cdots p^b_{n_b}|}$.\vspace{-2ex}
\begin{figure}[h!]
	\centering
	\def\svgwidth{350pt}
	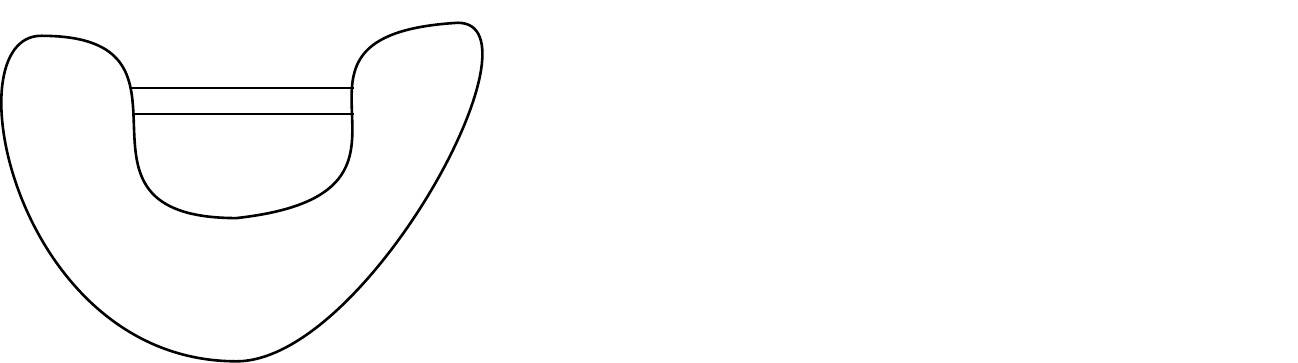\vspace{-1ex}
\end{figure}
\item\label{itemD} The creation operator $\hat{T}_q$ acts on a ribbon,
 where both strands are internal strands of the same label $k$, but
 in contrast to Case~\ref{itemB} the ribbon graph disconnects after
 cutting the ribbon. Acting with $\hat{T}_q$ on the other strand also
 labelled $k$ gives the same splitting, thus a~total factor of $2$.
 Each of the two resulting connected ribbon graphs receives an~additional boundary component with a single one-valent vertex whose
 ribbon is labelled~$qq$. All~pre\-vious boundaries with labels
 $\mathcal{J}$ as well as the total genus $g$ are untouched, but
 split into each of the ribbon graphs. This splitting accounts for
 the additional factor of $2$ because for given assignment
 $\mathcal{ J}'$, $\mathcal{ J}''$ the decompositions
 $\mathcal{ J}' \uplus \mathcal{ J}''=\mathcal{J}$ and
 $\mathcal{ J}'' \uplus \mathcal{ J}'=\mathcal{J}$ are considered as
 different. The resulting ribbon graph thus contributes to
 $G^{(g_1)}_{|\mathcal{ J}_1|q|}G^{(g_2)}_{|\mathcal{ J}_2|q|}$
 with sum over $g_1+g_2=g$ and over splittings $\mathcal{J}_1\uplus\mathcal{J}_2=\mathcal{J}$.
 \begin{figure}[h!]
	\centering
	\def\svgwidth{350pt}
\begingroup%
  \makeatletter%
  \providecommand\color[2][]{%
    \errmessage{(Inkscape) Color is used for the text in Inkscape, but the package 'color.sty' is not loaded}%
    \renewcommand\color[2][]{}%
  }%
  \providecommand\transparent[1]{%
    \errmessage{(Inkscape) Transparency is used (non-zero) for the text in Inkscape, but the package 'transparent.sty' is not loaded}%
    \renewcommand\transparent[1]{}%
  }%
  \providecommand\rotatebox[2]{#2}%
  \newcommand*\fsize{\dimexpr\f@size pt\relax}%
  \newcommand*\lineheight[1]{\fontsize{\fsize}{#1\fsize}\selectfont}%
  \ifx\svgwidth\undefined%
    \setlength{\unitlength}{401.22304637bp}%
    \ifx\svgscale\undefined%
      \relax%
    \else%
      \setlength{\unitlength}{\unitlength * \real{\svgscale}}%
    \fi%
  \else%
    \setlength{\unitlength}{\svgwidth}%
  \fi%
  \global\let\svgwidth\undefined%
  \global\let\svgscale\undefined%
  \makeatother%
  \begin{picture}(1,0.11305817)%
    \lineheight{1}%
    \setlength\tabcolsep{0pt}%
    \put(0.14855232,0.07410796){\makebox(0,0)[lt]{\lineheight{1.25}\smash{\begin{tabular}[t]{l}$k$\end{tabular}}}}%
    \put(0.4902379,0.04305858){\makebox(0,0)[lt]{\lineheight{1.25}\smash{\begin{tabular}[t]{l}$\rightarrow\qquad 2\times$\end{tabular}}}}%
    \put(0,0){\includegraphics[width=\unitlength,page=1]{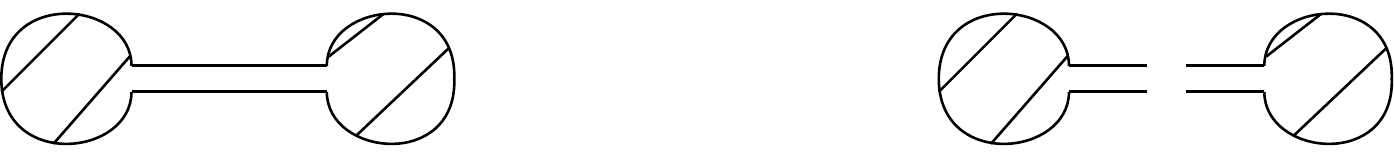}}%
    \put(0.15069729,0.00756968){\makebox(0,0)[lt]{\lineheight{1.25}\smash{\begin{tabular}[t]{l}$k$\end{tabular}}}}%
    \put(0.78095842,0.07501556){\makebox(0,0)[lt]{\lineheight{1.25}\smash{\begin{tabular}[t]{l}$q$\end{tabular}}}}%
    \put(0.77695432,0.00732624){\makebox(0,0)[lt]{\lineheight{1.25}\smash{\begin{tabular}[t]{l}$q$\end{tabular}}}}%
    \put(0.86857363,0.07518566){\makebox(0,0)[lt]{\lineheight{1.25}\smash{\begin{tabular}[t]{l}$q$\end{tabular}}}}%
    \put(0.86968487,0.00849679){\makebox(0,0)[lt]{\lineheight{1.25}\smash{\begin{tabular}[t]{l}$q$\end{tabular}}}}%
  \end{picture}%
\endgroup%
\vspace{-1ex}
\end{figure}
\end{enumerate}

Notice that for the vacuum ribbon graphs of the free energy $\mathcal{F}^{(g)}$ only
the two Cases~\ref{itemA} and~\ref{itemB} contribute.
Case~\ref{itemC} contributes only if a ribbon graph has an external strand,
which is not the case for a vacuum ribbon graph, and Case~\ref{itemD} does not
contribute because any vacuum ribbon graph is a 1PI (one particle irreducible)
due to four-valent vertices, i.e., after cutting a ribbon the graph
stays connected.

\begin{Example}
 Take Example~\ref{ex2pg} for $g=0$ with its ribbon graph expansion. The
 first orders of the expansion of the 2-point function are given in
 Example~\ref{Ex2} with ribbon graphs drawn in Figure~\ref{fig:2PGraphs}. The perturbative action of the creation operator
 described above generates the corresponding contributions of the 4-point function and
 the $(2+2)$-point function, which can be taken from the Examples
 \ref{Ex4} and \ref{Ex2+2}, where the graphs are drawn in Figures
 \ref{fig:4PGraphs} and \ref{fig:2+2PGraphs}. It is left to the
 reader to check the explicit formulae.
\end{Example}

\begin{Example}
The action of the creation operator on the free energy of genus $g=0$ is
\begin{align*}
\hat{T}_q \mathcal{F}^{(0)}=\frac{1}{N}\sum_{k=1}^N G^ {(0)}_{|qk|}.
\end{align*}
Take the perturbative expansion of the free energy from Example \ref{ExF0}
with ribbon graphs drawn in Figure \ref{fig:freeenergy}. The symmetry factor
of the automorphism group of each ribbon graph ensures that the ribbon graphs
generated by the creation operator have correct factors in agreement with
Example \ref{Ex2}. Consequently, this gives a way to compute the
order of the automorphism group. For instance, the contribution
of the sunrise graph to $\frac{1}{N}\sum_{k=1}^N G^ {(0)}_{|qk|}$ is
generated in 8 different ways by acting with the creation operator at
the melon graph $\Gamma_M$ of $\mathcal{F}^{(0)}$, which hence has $|\mathrm{Aut}(\Gamma_M)|=8$.
\end{Example}
We conclude that the action of the creation operator can be
represented in two different but equivalent ways, first directly on the
correlation functions by using manipulations of the partition function
and second perturbatively by using the action on the weighted graphs.

\subsection[Representation of Omega(g) in terms of correlation functions]
{Representation of $\boldsymbol{\Omega^{(g)}}$ in terms of correlation functions}

We have shown in~\cite{Branahl:2020yru} that the
$\Omega^{(g)}_{q_1,\dots,q_m}$ defined in (\ref{def:Om}) extend to
meromorphic differential forms $\omega_{g,m}$ on $\hat{\mathbb{C}}^m$
for which we provided strong evidence \cite{Borot:2015hna}
and a proof for $g=0$ \cite{Hock:2021tbl} that they obey blobbed
topological recursion. If true the $\omega_{g,m}$
are relatively easy to obtain via evaluation of residues. The
translation back to $\Omega^{(g)}_{q_1,\dots,q_m}$ is simple. In this
section we give another representation of the
$\Omega^{(g)}_{q_1,\dots,q_m}$ as special polynomials in the correlation
functions $G^{(g')}_{\dots}$. The purpose is twofold. First,
comparison of a perturbative expansion of the $G^{(g')}_{\dots}$ with
a~Taylor expansion of the exact formulae provides an important
consistency check. Second, we understand our observation as a message
for quantum field theory in general: Also in realistic QFT it might be
worthwhile to investigate whether certain polynomials of correlation
functions, which themselves are Feynman graph series, reveal a deeper
structure than individual functions or graphs.

The definition already gives $\Omega^{(g)}_{q_1}=\frac{1}{N}\sum_{k=1}^N G^{(g)}_{|q_1k|}
+G^{(g-1)}_{|q_1|q_1|}$. The next two propositions provide $\Omega^{(g)}_{q_1,q_2}$ and
$\Omega^{(g)}_{q_1,q_2,q_3}$. It would be straightforward but increasingly lengthy to continue.
\begin{Proposition}
 \label{prop:Om02G}
 For $q_1\neq q_2$ one has
 \begin{align*}
\Omega^{(g)}_{q_1,q_2}=&{}
\frac{\delta_{g,0}}{(E_{q_1}-E_{q_2})^2}
+\sum_{g_1+g_2=g} G^{(g_1)}_{|q_1q_2|} G^{(g_2)}_{|q_1q_2|}
+ \frac{1}{N^2}\sum_{k,l=1}^N G^{(g)}_{|q_1k|q_2l|}
\\
&+\frac{1}{N}\sum_{k=1}^N \Big(G^{(g)}_{|q_1kq_1q_2|}+G^{(g)}_{|q_2kq_2q_1|}+G^{(g)}_{|q_1kq_2k|}
\Big)+\frac{1}{N} \sum_{k=1}^N \Big(G^{(g-1)}_{|q_1k|q_2|q_2|}+G^{(g-1)}_{|q_2k|q_1|q_1|}\Big)
\\
&+G^{(g-1)}_{|q_1q_2q_2|q_2|}+G^{(g-1)}_{|q_2q_1q_1|q_1|}
+\sum_{g_1+g_2=g-1} G^{(g_1)}_{|q_1|q_2|} G^{(g_2)}_{|q_1|q_2|}
+G^{(g-2)}_{|q_1|q_1|q_2|q_2|}.
\end{align*}
\end{Proposition}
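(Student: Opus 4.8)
The plan is to unfold the definition \eqref{def:Om} and compute $\hat T_{q_2}\Omega_{q_1}$ explicitly. Since $\Omega_{q_1,q_2}=\hat T_{q_2}\Omega_{q_1}+(E_{q_1}-E_{q_2})^{-2}$ and the extra summand carries no power of $N$, it accounts exactly for the term $\delta_{g,0}(E_{q_1}-E_{q_2})^{-2}$ in the claim; everything else must be the $N^{-2g}$-coefficient of $\hat T_{q_2}\Omega_{q_1}$, with $\Omega_{q_1}=\frac1N\sum_{k}G_{|q_1k|}+\frac1{N^2}G_{|q_1|q_1|}$. The point, and the source of all complications, is that Proposition~\ref{PropT} is only available when the indices are pairwise distinct, which already fails here because $\Omega_{q_1}$ carries the label $q_1$; so the computation splits into several regimes.

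For $\frac1{N^2}G_{|q_1|q_1|}$: since $q_1\neq q_2$, the operator $\hat T_{q_2}=-N\partial_{E_{q_2}}$ commutes with the diagonal limit $G_{|q_1|p|}\to G_{|q_1|q_1|}$, so $\hat T_{q_2}G_{|q_1|q_1|}$ is the $p\to q_1$ limit of the $b=2$ case of Proposition~\ref{PropT} (i.e.\ of the displayed formula for $\hat T_qG^{(g)}_{|p^1|p^2|}$). After the genus shift induced by the prefactor $1/N^2$ this produces the contributions $\frac1N\sum_kG^{(g-1)}_{|q_2k|q_1|q_1|}$, $G^{(g-2)}_{|q_1|q_1|q_2|q_2|}$, the $(3{+}1)$-point term $G^{(g-1)}_{|q_2q_1q_1|q_1|}$, and the bilinear term $\sum_{g_1+g_2=g-1}G^{(g_1)}_{|q_1|q_2|}G^{(g_2)}_{|q_1|q_2|}$, the multiplicities being traced against the fact that the two insertion terms $G^{(g)}_{|p^1q_2p^1|p^2|}$ and $G^{(g)}_{|p^1|p^2q_2p^2|}$ collapse to the same correlation function in the limit. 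For $\frac1N\sum_kG_{|q_1k|}$ I would split $\sum_{k=1}^N$ into $k\notin\{q_1,q_2\}$, $k=q_1$ and $k=q_2$: on the first block the formula of Example~\ref{ex2pg} applies verbatim, on $k=q_1$ in the diagonal limit $k\to q_1$. After cyclic re-ordering of the length-$4$ boundary cycles these two blocks already yield all of $\frac1{N^2}\sum_{k,l}G^{(g)}_{|q_1k|q_2l|}$ except its $k=q_2$ row, together with $\frac1N\sum_kG^{(g)}_{|q_1kq_1q_2|}$, $\frac1N\sum_kG^{(g)}_{|q_1kq_2k|}$ and $\frac1N\sum_kG^{(g-1)}_{|q_1k|q_2|q_2|}$.

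The delicate term is $k=q_2$, i.e.\ $\hat T_{q_2}\big[\frac{1}{N}G_{|q_1q_2|}\big]$: here the creation label $q_2$ meets an external label $q_2$, and Proposition~\ref{PropT} must be re-derived. Writing $G_{|q_1q_2|}=-N\,\frac{\partial^2\log\mathcal Z(M)}{\partial M_{q_1q_2}\partial M_{q_2q_1}}\big|_{M=0}$ and passing $\hat T_{q_2}$ through the equation of motion \eqref{eom-2}, one gets $\hat T_{q_2}G_{|q_1q_2|}=N^3\sum_k\frac{\partial^2}{\partial M_{q_1q_2}\partial M_{q_2q_1}}\Big(\frac{\partial^2\log\mathcal Z}{\partial M_{q_2k}\partial M_{kq_2}}+\frac{\partial\log\mathcal Z}{\partial M_{q_2k}}\frac{\partial\log\mathcal Z}{\partial M_{kq_2}}\Big)\Big|_{M=0}$. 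The quartic-cumulant part $N^3\sum_k\langle e_{q_1q_2}e_{q_2q_1}e_{q_2k}e_{kq_2}\rangle_c$ is decomposed into correlation functions according to the admissible cycle structures of its index data: generic $k$ gives the $4$-cycle $(q_1,q_2,k,q_2)$ and the pair of $2$-cycles $(q_1,q_2),(q_2,k)$, which (using $G_{|q_1q_2kq_2|}=G_{|q_2kq_2q_1|}$) supply the missing $k=q_2$ row $\frac1{N^2}\sum_lG^{(g)}_{|q_1q_2|q_2l|}$ and the term $\frac1N\sum_kG^{(g)}_{|q_2kq_2q_1|}$, while the coincidences $k=q_1,q_2$ fill the remaining slots and produce $G^{(g-1)}_{|q_1q_2q_2|q_2|}$. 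Crucially, the bilinear term — which was annihilated in the proof of Proposition~\ref{PropT} because all indices were distinct — survives here at $M=0$: only $k=q_1$ contributes and it evaluates to $N\,G_{|q_1q_2|}^2$, so the outer $1/N$ turns it into $\sum_{g_1+g_2=g}G^{(g_1)}_{|q_1q_2|}G^{(g_2)}_{|q_1q_2|}$. (Alternatively the whole computation can be run perturbatively, acting with $\hat T_{q_2}$ on the ribbon graphs of $\Omega_{q_1}$ via Cases~\ref{itemA}--\ref{itemD}; the product terms then arise from the action on a ribbon whose strand is the summed index of $\frac1N\sum_kG_{|q_1k|}$.)

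Finally I would collect everything: cyclic invariance of each boundary cycle, order-independence of the boundary components, and the vanishing $G_{|q|}=0$ of odd one-point functions merge the partial index sums into $\frac1{N^2}\sum_{k,l=1}^N$ and $\frac1N\sum_{k=1}^N$ and identify the coinciding diagonal-limit terms; adding $\delta_{g,0}(E_{q_1}-E_{q_2})^{-2}$ gives the stated identity, which can be cross-checked against its manifest $q_1\leftrightarrow q_2$ symmetry and against a $\lambda^0,\lambda^1,\lambda^2$ truncation using the Feynman weights of Proposition~\ref{prop:weight}. I expect the main obstacle to be exactly this bookkeeping — tracking how the diagonal limits and the coincidences $k\in\{q_1,q_2\}$ in the higher cumulants distribute over the various target sums, with the factor-of-$2$ phenomena already visible in Cases~\ref{itemB} and~\ref{itemD} of the creation operator recurring throughout; the one genuinely new ingredient, that the bilinear (``non-1PI'') term of \eqref{eom-2} ceases to be inert once a label is repeated, is what supplies the product terms.
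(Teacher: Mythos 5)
Your overall strategy is the same as the paper's: both proofs rest on rewriting $\Omega_{q_1}$ as $\frac{1}{N^2}\sum_k\frac{D^2}{DM^{[q_1,k]}}\log\mathcal{Z}$, pushing $\hat T_{q_2}$ through the equation of motion \eqref{eom-2}, and then doing an exhaustive bookkeeping of index coincidences. You have correctly isolated the two genuinely new ingredients relative to Proposition~\ref{PropT}: that the bilinear term of \eqref{eom-2} is no longer inert once a label repeats (this is exactly where the paper's proof gets $G_{|q_1q_2|}G_{|q_1q_2|}$ and $\frac{1}{N^2}G_{|q_1|q_2|}G_{|q_1|q_2|}$, from its cases $(a)$ and $(b)$ of the second term), and that the remaining work is a coincidence analysis of the summation indices against $q_1$, $q_2$. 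The difference is organisational: the paper runs a single nine-case enumeration of the cycle decompositions of $\frac{D^2}{DM^{[q_1,k]}}\frac{D^2}{DM^{[q_2,l]}}$, whereas you split $\Omega_{q_1}$ into its two defining pieces and the $k$-sum into blocks, recycling Proposition~\ref{PropT} via diagonal limits wherever the hypotheses almost hold.

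The gap is precisely at the step you wave through with ``the multiplicities being traced against the fact that the two insertion terms collapse.'' If $G_{|q_1|q_1|}$ is the diagonal evaluation of the rational function $G_{|p^1|p^2|}$, then $\hat T_{q_2}$ (a derivative in the independent variable $E_{q_2}$) commutes with that evaluation, and the two insertion terms of Proposition~\ref{PropT} give
$\lim_{p^1,p^2\to q_1}\big(G^{(g)}_{|p^1q_2p^1|p^2|}+G^{(g)}_{|p^1|p^2q_2p^2|}\big)=2\,G^{(g)}_{|q_1q_2q_1|q_1|}$,
i.e.\ \emph{twice} the $(3{+}1)$-point function, whereas the Proposition asserts the single term $G^{(g-1)}_{|q_2q_1q_1|q_1|}$ (after the genus shift from the $1/N^2$ prefactor); the same issue recurs for $G^{(g-1)}_{|q_1q_2q_2|q_2|}$ in your $k=q_2$ block. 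Your argument as written therefore does not reproduce the stated coefficient, and nothing in the proposal explains why a factor $\tfrac12$ should appear. The paper sidesteps this by never taking a limit of Proposition~\ref{PropT}: it decomposes the operator $\frac{D^2}{DM^{[q_1,q_1]}}\frac{D^2}{DM^{[q_2,q_1]}}$ directly into cycle structures (its cases $(g)$ and $(h)$), where the $(3{+}1)$ sector is counted once by fiat of how the diagonal correlation function is normalised. To close your proof you must either reproduce that operator-level enumeration, or pin down the normalisation of the diagonal $(3{+}1)$-point function (the full coincidence sector versus the diagonal limit of the generic function differ by exactly the factor $2$ at issue, unlike the $|q|q|$ case where the generic function already double-counts). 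Note that these terms enter only at $g\geq 1$, so the genus-zero perturbative cross-checks you propose cannot detect whether you have resolved this correctly.
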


\begin{proof}
 Using $\Omega_{q_1}=\frac{1}{N^2}\sum_{k=1}^N \frac{D^2}{DM^{[q_1,k]}} \log \mathcal{Z}(M)\big|_{M=0}$
 and (\ref{eom-2}) we have
\begin{align}
\hat{T}_{q_2}\Omega_{q_1}
={}&-\sum_{k,l=1}^N \frac{D^2}{DM^{[q_1,k]}} \bigg(
\frac{\partial^2\log \mathcal{Z}(M)}{\partial M_{q_2l}\partial M_{lq_2}}
+\frac{\partial \log \mathcal{Z}(M)}{\partial M_{q_2l}}
\frac{\partial \log \mathcal{Z}(M)}{\partial M_{lq_2}}\bigg)\bigg|_{M=0}
\nonumber
\\
={}&\frac{1}{N^2}\sum_{k,l=1}^N G_{|q_1k|q_2l|}
+\frac{1}{N}\sum_{k=1}^N \big(G_{|q_1kq_1q_2|}+G_{|q_2kq_2q_1|}+G_{|q_1kq_2k|}
\big)+G_{|q_1q_2|} G_{|q_1q_2|}
\nonumber
\\
& +\frac{1}{N^3} \sum_{k=1}^N \big(G_{|q_1k|q_2|q_2|}+G_{|q_2k|q_1|q_1|}\big)
+\frac{1}{N^2} \big(G_{|q_1q_2q_2|q_2|}+G_{|q_2q_1q_1|q_1|}\big)
\nonumber
\\
&+\frac{1}{N^2}G_{|q_1|q_2|} G_{|q_1|q_2|}
+\frac{1}{N^4}G_{|q_1|q_1|q_2|q_2|}.
\label{TqOmq}
\end{align}
The last three lines result from the first line of (\ref{TqOmq}) as follows.
The first term
$\frac{1}{N^2} \frac{D^{2}}{DM^{[q_1,k]}}\times
\frac{D^2 (\log \mathcal{Z}(M))}{DM^{[q_2,l]}}$ contributes in nine
different ways:
\begin{enumerate}\itemsep=0pt
\item[$(a)$] For generic $k$, $l$ it produces $\frac{1}{N^2}G_{|q_1k|q_2l|}$ summed over $k$, $l$.
\item[$(b)$] For $l=q_1$ it produces, besides $\frac{1}{N^2}G_{|q_1k|q_2q_1|}$ included
 in $(a)$, also $\frac{1}{N}G_{|q_1kq_1q_2|}$ when interpreting
 $ \frac{D^{2}}{DM^{[q_1,k]}}\frac{D^2}{DM^{[q_2,q_1]}}
 =\frac{D^{4}}{DM^{[q_1,k,q_1,q_2]}}$.

\item[$(c)$] For $k=q_2$ it produces, besides $\frac{1}{N^2}G_{|q_1q_2|q_2l|}$ included
 in $(a)$, also $\frac{1}{N}G_{|q_2lq_2q_1|}$ when interpreting
 $ \frac{D^{2}}{DM^{[q_1,q_2]}}\frac{D^2}{DM^{[q_2,l]}}
 =\frac{D^{4}}{DM^{[q_2,l,q_2,q_1]}}$. We~replace $l\mapsto k$.

\item[$(d)$] For $l=k$ it produces, besides $\frac{1}{N^2}G_{|q_1k|q_2k|}$ included
 in $(a)$, also $\frac{1}{N}G_{|q_1kq_2k|}$ when interpreting
 $ \frac{D^{2}}{DM^{[q_1,k]}}\frac{D^2}{DM^{[q_2,k]}}
 =\frac{D^{4}}{DM^{[q_1,k,q_2,k]}}$.

 \item[$(e)$] For $l=q_2$ it produces, besides $\frac{1}{N^2}G_{|q_1k|q_2q_2|}$ included
 in $(a)$, also $\frac{1}{N^3}G_{|q_1k|q_2|q_2|}$ when interpreting
 $ \frac{D^{2}}{DM^{[q_1,k]}}\frac{D^2}{DM^{[q_2,q_2]}}
 =\frac{D^{2}}{DM^{[q_1,k]}}\frac{D^{1}}{DM^{[q_2]}}\frac{D^{1}}{DM^{[q_2]}}
 $.

 \item[$(f)$] For $k=q_1$ it produces, besides $\frac{1}{N^2}G_{|q_1q_1|q_2l|}$ included
 in $(a)$, also $\frac{1}{N^3}G_{|q_2l|q_1|q_1|}$ when interpreting
 $ \frac{D^2}{DM^{[q_1,q_1]}}\frac{D^{2}}{DM^{[q_2,l]}}
 =\frac{D^{2}}{DM^{[q_2,l]}}\frac{D^{1}}{DM^{[q_1]}}\frac{D^{1}}{DM^{[q_1]}}
 $. We~replace $l\mapsto k$.

\item[$(g)$] For $k\!=l\!=q_2$ it produces, besides the three cases
 $\frac{1}{N^2}G_{|q_1q_2|q_2q_2|}$ included
 in $(a)$, $\frac{1}{N}G_{|q_2q_2q_2q_1|}$ included in $(c)$ and
 $\frac{1}{N^3}G_{|q_1q_2|q_2|q_2|}$ included in $(e)$,
 also $\frac{1}{N^2}G_{|q_1q_2q_2|q_2|}$ when interpreting
 $ \frac{D^2}{DM^{[q_1,q_2]}}\frac{D^{2}}{DM^{[q_2,q_2]}}
 =\frac{D^{3}}{DM^{[q_1,q_2,q_2]}}\frac{D^{1}}{DM^{[q_2]}} $.

\item[$(h)$] For $k\!=l\!=q_1$ it produces, besides the three cases
 $\frac{1}{N^2}G_{|q_1q_1|q_2q_1|}$ included
 in $(a)$, $\frac{1}{N}G_{|q_1q_1q_1q_2|}$ included in $(b)$ and
 $\frac{1}{N^3}G_{|q_2q_1|q_1|q_1|}$ included in $(f)$,
 also $\frac{1}{N^2}G_{|q_2q_1q_1|q_1|}$ when interpreting
 $ \frac{D^2}{DM^{[q_1,q_1]}}\frac{D^{2}}{DM^{[q_2,q_1]}}
 =\frac{D^{3}}{DM^{[q_2,q_1,q_1]}}\frac{D^{1}}{DM^{[q_1]}} $.

\item[(i)] For $k=q_1$ and $l=q_2$ it produces, besides the three cases
 $\frac{1}{N^2}G_{|q_1q_1|q_2q_2|}$ included
 in $(a)$, $\frac{1}{N^3}G_{|q_1q_1|q_2|q_2|}$ included in $(e)$ and
 $\frac{1}{N^3}G_{|q_1|q_1|q_2q_2|}$ included in $(f)$,
 also $\frac{1}{N^4}G_{|q_1|q_1|q_2|q_2|}$ when interpreting
 $ \frac{D^2}{DM^{[q_1,q_1]}}\frac{D^{2}}{DM^{[q_2,q_2]}}
 =\frac{D^{1}}{DM^{[q_1]}}\frac{D^{1}}{DM^{[q_1]}}
 \frac{D^{1}}{DM^{[q_2]}}\frac{D^{1}}{DM^{[q_2]}}$.
\end{enumerate}
The second term
$-\frac{1}{N^2} \frac{D^{2}}{DM^{[q_1,k]}}
\frac{\partial \log \mathcal{Z}(M)}{\partial M_{q_2l}}
\frac{\partial \log \mathcal{Z}(M)}{\partial M_{lq_2}}$
contributes in two different ways:
\begin{enumerate}\itemsep=0pt
\item[$(a)$] Either $k=q_2$ and $l=q_1$ and derivatives distributed into
$\frac{1}{N^2}\frac{D^{2}\log \mathcal{Z}(M)}{DM^{[q_1,q_2]}}
\frac{D^{2}\log \mathcal{Z}(M)}{DM^{[q_1,q_2]}}=G_{|q_1q_2|}G_{|q_1q_2|}$,

\item[$(b)$] or $k=q_1$ and $l=q_2$ and derivatives distributed into
 $\frac{1}{N^2}\frac{D^{2}\log \mathcal{Z}(M)}{DM^{[q_1]}DM^{[q_2]}}
 \frac{D^{2}\log \mathcal{Z}(M)}{DM^{[q_1]}DM^{[q_2]}}
=\frac {1}{N^2}G_{|q_1|q_2|}G_{|q_1|q_2|}$.
\end{enumerate}
Including the special term $\frac{1}{(E_{q_1}-E_{q_2})^2}$ and
extracting the coefficient of $N^{-2g}$ gives the as\-ser\-tion.
\end{proof}

\begin{Proposition} \label{prop:Om03G}
For pairwise different $q_1$, $q_2$, $q_3$ one has
\begin{gather*}
\Omega_{q_1,q_2,q_3}^{(g)}
=\frac{1}{N^3}\sum_{j,k,l=1}^N G^{(g)}_{|q_1j|q_2k|q_3l|}
\\ \hphantom{\Omega_{q_1,q_2,q_3}^{(g)}=}
{}+\frac{1}{N^2}\sum_{k,l=1}^N
\Big(G^{(g)}_{|q_1kq_2k|q_3l|}+G^{(g)}_{|q_2kq_3k|q_1l|}
+G^{(g)}_{|q_3kq_1k|q_2l|}
+G^{(g)}_{|q_1kq_1q_2|q_3l|}
\\ \hphantom{\Omega_{q_1,q_2,q_3}^{(g)}=+}
{}+G^{(g)}_{|q_1kq_1q_3|q_2l|}
+G^{(g)}_{|q_2kq_2q_3|q_1l|}+G^{(g)}_{|q_2kq_2q_1|q_3l|}
+G^{(g)}_{|q_3kq_3q_1|q_2l|}+G^{(g)}_{|q_3kq_3q_2|q_1l|}\Big)
\\ \hphantom{\Omega_{q_1,q_2,q_3}^{(g)}=}
{} +\frac{1}{N}\sum_{k=1}^N
\Big(G^{(g)}_{|q_1kq_2kq_3k|}+G^{(g)}_{|q_1kq_3kq_2k|}
+G^{(g)}_{|q_1kq_1q_2q_1q_3|}+G^{(g)}_{|q_1kq_1q_3q_1q_2|}
\\ \hphantom{\Omega_{q_1,q_2,q_3}^{(g)}=+}
{}+G^{(g)}_{|q_2kq_2q_3q_2q_1|}
+G^{(g)}_{|q_2kq_2q_1q_2q_3|}
+G^{(g)}_{|q_3kq_3q_1q_3q_2|}+G^{(g)}_{|q_3kq_3q_2q_3q_1|}
\\ \hphantom{\Omega_{q_1,q_2,q_3}^{(g)}=+}
{}+G^{(g)}_{|q_1kq_1q_2q_3q_2|}
+G^{(g)}_{|q_1kq_1q_3q_2q_3|}
+G^{(g)}_{|q_2kq_2q_3q_1q_3|}+G^{(g)}_{|q_2kq_2q_1q_3q_1|}
\\ \hphantom{\Omega_{q_1,q_2,q_3}^{(g)}=+}
{}+G^{(g)}_{|q_3kq_3q_1q_2q_1|}
+G^{(g)}_{|q_3kq_3q_2q_1q_2|}
+G^{(g)}_{|kq_1kq_2q_3q_2|}+G^{(g)}_{|kq_1kq_3q_2q_3|}
\\ \hphantom{\Omega_{q_1,q_2,q_3}^{(g)}=+}
{}+G^{(g)}_{|kq_2kq_3q_1q_3|}
+G^{(g)}_{|kq_2kq_1q_3q_1|}
+G^{(g)}_{|kq_3kq_1q_2q_1|}+G^{(g)}_{|kq_3kq_2q_1q_2|}\Big)
\\ \hphantom{\Omega_{q_1,q_2,q_3}^{(g)}=}
{}+ 2\sum_{g_1+g_2=g}\frac{1}{N}\sum_{k=1}^N \Big(G^{(g_1)}_{|q_1q_2|} G^{(g_2)}_{|q_3k|q_1q_2|}
+G^{(g_1)}_{|q_1q_2|}G^{(g_2)}_{|q_1q_2q_1q_3|}+G^{(g_1)}_{|q_1q_2|} G^{(g_2)}_{|q_2q_1q_2q_3|}
\\ \hphantom{\Omega_{q_1,q_2,q_3}^{(g)}=+}
{} + G^{(g_1)}_{|q_2q_3|}G^{(g_2)}_{|q_1k|q_2q_3|}
+G^{(g_1)}_{|q_2q_3|}G^{(g_2)}_{|q_2q_3q_2q_1|}+G^{(g_1)}_{|q_2q_3|}G^{(g_2)}_{|q_3q_2q_3q_1|}
\\ \hphantom{\Omega_{q_1,q_2,q_3}^{(g)}=+}
{} +G^{(g_1)}_{|q_3q_1|} G^{(g_2)}_{|q_2k|q_3q_1|}
+G^{(g_1)}_{|q_3q_1|}G^{(g_2)}_{|q_3q_1q_3q_2|}+G^{(g_1)}_{|q_3q_1|}G^{(g_2)}_{|q_1q_3q_1q_2|}\Big)
\\ \hphantom{\Omega_{q_1,q_2,q_3}^{(g)}=}
{}+\frac{1}{N^2}\sum_{k,l=1}^N \Big(G^{(g-1)}_{|q_1k|q_2l|q_3|q_3|}+G^{(g-1)}_{|q_2k|q_3l|q_1|q_1|}
+G^{(g-1)}_{|q_3k|q_1l|q_2|q_2|}\Big)
\\ \hphantom{\Omega_{q_1,q_2,q_3}^{(g)}=}
{}+\frac{1}{N}\sum_{l=1}^N \Big(G^{(g-1)}_{|q_1q_1q_2|q_3l|q_1|}+G^{(g-1)}_{|q_1q_1q_3|q_2l|q_1|}
+G^{(g-1)}_{|q_2q_2q_3|q_1l|q_2|}+G^{(g-1)}_{|q_2q_2q_1|q_3l|q_2|}
\\ \hphantom{\Omega_{q_1,q_2,q_3}^{(g)}=+}
{}+G^{(g-1)}_{|q_3q_3q_1|q_2l|q_3|}+G^{(g-1)}_{|q_3q_3q_2|q_1l|q_3|}\Big)
\\ \hphantom{\Omega_{q_1,q_2,q_3}^{(g)}=}
{}+G^{(g-1)}_{|q_1q_1q_2q_3q_2|q_1|}+G^{(g-1)}_{|q_1q_1q_3q_2q_3|q_1|}
+G^{(g-1)}_{|q_2q_2q_3q_1q_3|q_2|}+G^{(g-1)}_{|q_2q_2q_1q_3q_1|q_2|}
\\[.3ex] \hphantom{\Omega_{q_1,q_2,q_3}^{(g)}=}
{}+G^{(g-1)}_{|q_3q_3q_1q_2q_1|q_3|}+G^{(g-1)}_{|q_3q_3q_2q_1q_2|q_3|}
+ G^{(g-1)}_{|q_1q_1q_2q_1q_3|q_1|}+G^{(g-1)}_{|q_1q_1q_3q_1q_2|q_1|}
\\[.3ex] \hphantom{\Omega_{q_1,q_2,q_3}^{(g)}=}
{}+G^{(g-1)}_{|q_2q_2q_3q_2q_1|q_2|}+G^{(g-1)}_{|q_2q_2q_1q_2q_3|q_2|}
+G^{(g-1)}_{|q_3q_3q_1q_3q_2|q_3|}+G^{(g-1)}_{|q_3q_3q_2q_3q_1|q_3|}
\\[.3ex] \hphantom{\Omega_{q_1,q_2,q_3}^{(g)}=}
{}+ G^{(g-1)}_{|q_1q_1q_2|q_1q_1q_3|}
+G^{(g-1)}_{|q_2q_2q_3|q_2q_2q_1|}
+G^{(g-1)}_{|q_3q_3q_1|q_3q_3q_2|}
\\[.3ex] \hphantom{\Omega_{q_1,q_2,q_3}^{(g)}=}
+\frac{1}{N}\sum_{k=1}^N \Big(
G^{(g-1)}_{|q_1kq_2k|q_3|q_3|}+G^{(g-1)}_{|q_2kq_3k|q_1|q_1|}
+G^{(g-1)}_{|q_3kq_1k|q_2|q_2|}
\\[.3ex] \hphantom{\Omega_{q_1,q_2,q_3}^{(g)}=+}
{}+G^{(g-1)}_{|q_1kq_1q_2|q_3|q_3|}+G^{(g-1)}_{|q_1kq_1q_3|q_2|q_2|}
+G^{(g-1)}_{|q_2kq_2q_3|q_1|q_1|}+G^{(g-1)}_{|q_2kq_2q_1|q_3|q_3|}
\\[.3ex] \hphantom{\Omega_{q_1,q_2,q_3}^{(g)}=+}
{}+G^{(g-1)}_{|q_3kq_3q_1|q_2|q_2|}+G^{(g-1)}_{|q_3kq_3q_2|q_1|q_1|}\Big)
\\ \hphantom{\Omega_{q_1,q_2,q_3}^{(g)}=}
{}+2\sum_{g_1+g_2=g-1} \Big(G^{(g_1)}_{|q_1q_2|}G^{(g_2)}_{|q_1q_2|q_3|q_3|}
+ G^{(g_1)}_{|q_2q_3|}G^{(g_2)}_{|q_2q_3|q_1|q_1|}
+G^{(g_1)}_{|q_3q_1|}G^{(g_2)}_{|q_3q_1|q_2|q_2|}\Big)
\\ \hphantom{\Omega_{q_1,q_2,q_3}^{(g)}=}
{}+4 \sum_{g_1+g_2=g-1}\frac{1}{N} \sum_{k=1}^N\Big(
G^{(g_1)}_{|q_1|q_2|} G^{(g_2)}_{|q_3k|q_1|q_2|}
+G^{(g_1)}_{|q_2|q_3|} G^{(g_2)}_{|q_1k|q_2|q_3|}
+G^{(g_1)}_{|q_3|q_1|} G^{(g_2)}_{|q_2k|q_3|q_1|} \Big)
\\ \hphantom{\Omega_{q_1,q_2,q_3}^{(g)}=}
{}+4 \sum_{g_1+g_2=g-1}\Big(
G^{(g_1)}_{|q_1|q_2|} G^{(g_2)}_{|q_2|q_1q_1q_3|}+G^{(g_1)}_{|q_1|q_2|} G^{(g_2)}_{|q_1|q_2q_2q_3|}
+G^{(g_1)}_{|q_2|q_3|} G^{(g_2)}_{|q_3|q_2q_2q_1|}
\\ \hphantom{\Omega_{q_1,q_2,q_3}^{(g)}=+}
{}+G^{(g_1)}_{|q_2|q_3|} G^{(g_2)}_{|q_2|q_3q_3q_1|}
+G^{(g_1)}_{|q_3|q_1|} G^{(g_2)}_{|q_1|q_3q_3q_2|}+G^{(g_1)}_{|q_3|q_1|} G^{(g_2)}_{|q_3|q_1q_1q_2|}
\Big)
\\ \hphantom{\Omega_{q_1,q_2,q_3}^{(g)}=}
{}+ 8\sum_{g_1+g_2+g_3=g-1} G^{(g_1)}_{|q_1|q_2|}G^{(g_2)}_{|q_2|q_3|} G^{(g_3)}_{|q_3|q_1|}
+G^{(g-2)}_{|q_1q_1q_2|q_3|q_3|q_1|}+G^{(g-2)}_{|q_1q_1q_3|q_2|q_2|q_1|}
\\ \hphantom{\Omega_{q_1,q_2,q_3}^{(g)}=}
+G^{(g-2)}_{|q_2q_2q_3|q_1|q_1|q_2|}+G^{(g-2)}_{|q_2q_2q_1|q_3|q_3|q_2|}
+G^{(g-2)}_{|q_3q_3q_1|q_2|q_2|q_3|} +G^{(g-2)}_{|q_3q_3q_2|q_1|q_1|q_3|}
\\ \hphantom{\Omega_{q_1,q_2,q_3}^{(g)}=}
{}+ \frac{1}{N}
\sum_{k=1}^N \Big(G^{(g-2)}_{|q_1k|q_2|q_2|q_3|q_3|}+G^{(g-2)}_{|q_2k|q_3|q_3|q_1|q_1|}
+G^{(g-2)}_{|q_3k|q_1|q_1|q_2|q_2|}\Big)
\\ \hphantom{\Omega_{q_1,q_2,q_3}^{(g)}=}
{}+4 \sum_{g_1+g_2=g-2}\Big( G^{(g_1)}_{|q_1|q_2|} G^{(g_2)}_{|q_1|q_2|q_3|q_3|}
+G^{(g_1)}_{|q_2|q_3|} G^{(g_2)}_{|q_2|q_3|q_1|q_1|}
+G^{(g_1)}_{|q_3|q_1|} G^{(g_2)}_{|q_3|q_1|q_2|q_2|} \Big)
\\ \hphantom{\Omega_{q_1,q_2,q_3}^{(g)}=+}
{}+ G^{(g-3)}_{|q_1|q_1|q_2|q_2|q_3|q_3|}.
\end{gather*}
\end{Proposition}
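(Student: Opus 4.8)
The starting point is that $\hat{T}_q=-N\,\partial/\partial E_q$, so the operators $\hat{T}_{q_1},\hat{T}_{q_2},\hat{T}_{q_3}$ commute and the $M$-independent term $\frac{\delta_{m,2}}{(E_{q_1}-E_{q_2})^2}$ of \eqref{def:Om} is annihilated by a further $\hat{T}$; hence $\Omega_{q_1,q_2,q_3}=\hat{T}_{q_3}\Omega_{q_1,q_2}$. The plan is to apply $\hat{T}_{q_3}$ to the right-hand side of Proposition~\ref{prop:Om02G} term by term, working throughout with the \emph{un}expanded correlation functions $G_{\dots}$ (as in \eqref{TqOmq}) and extracting the coefficient of $N^{-2g}$ only at the end. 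This is the same computation as the following more systematic route: one starts from $\Omega_{q_1}=\frac{1}{N^2}\sum_k\frac{D^2}{DM^{[q_1,k]}}\log\mathcal{Z}(M)\big|_{M=0}$, applies the equation of motion \eqref{eom-2} once to obtain the $\hat{T}_{q_2}$-image \eqref{TqOmq}, and then applies \eqref{eom-2} a second time to each correlation function appearing there; since the $\Omega_{q_3}$-part of \eqref{eom-2} is $M$-independent it drops under the remaining $\frac{D}{DM}$-derivatives, so the second step has exactly the shape of the first.

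The bulk of the work is a disciplined case analysis. By the Leibniz rule $\hat{T}_{q_3}$ is distributed over the one or two correlation-function factors in each summand of Proposition~\ref{prop:Om02G}; when the two factors coincide --- as for $G_{|q_1q_2|}G_{|q_1q_2|}$ and $G_{|q_1|q_2|}G_{|q_1|q_2|}$ --- this alone produces a symmetry factor $2$. On a single factor $G_{|\mathcal{J}|}$ whose index word has pairwise distinct letters, all different from $q_3$, Proposition~\ref{PropT} applies verbatim and yields the four kinds of descendants listed there: a new summed boundary $\frac{1}{N}\sum_k G_{|\mathcal{J}|q_3k|}$, a genus-lowered pair of singleton boundaries $G_{|\mathcal{J}|q_3|q_3|}$, one insertion of $q_3$ per external letter of $\mathcal{J}$, and the disconnecting products $\sum G_{|\mathcal{J}_1|q_3|}G_{|\mathcal{J}_2|q_3|}$. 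Two features of \eqref{TqOmq} fall outside the literal hypothesis of Proposition~\ref{PropT} and must be re-derived from \eqref{eom-2} along the same case analysis used in the proof of Proposition~\ref{prop:Om02G}: first, factors with a \emph{repeated} external index, such as $G_{|q_1kq_1q_2|}$ or $G_{|q_1q_2q_2|q_2|}$, admit an insertion at either occurrence of that label, which generates the long list of sextic-word terms $G^{(g)}_{|q_1kq_1q_2q_1q_3|}$, $G^{(g)}_{|kq_1kq_2q_3q_2|}$ and the $G^{(g-1)}_{|q_1q_1q_2q_3q_2|q_1|}$-type contributions; second, whenever $\hat{T}_{q_3}$ meets a factor carrying a free summation index $k$, the derivative in $E_{q_3}$ also picks up the diagonal value $k=q_3$, deleting one operator $\frac{1}{N}\sum_k$ --- this is the origin of all terms in the conclusion in which a $q_3$ occupies the slot where Proposition~\ref{prop:Om02G} had a summed index.

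Having generated every descendant, one regroups them by how far the genus has dropped below $g$: the $G^{(g)}$-block (new boundaries and insertions, which never lower the genus), the $G^{(g-1)}$-block (one pinch), the $G^{(g-2)}$ and $G^{(g-3)}$ blocks, and the connected-times-connected and triple products at each genus split, the last arising when a disconnection acts inside a product already present in Proposition~\ref{prop:Om02G}. I expect the main obstacle to be not conceptual but the control of the prefactors $2$, $4$, $8$: beyond the Leibniz symmetry above, these accumulate because several distinct descendants --- coming from different summands of Proposition~\ref{prop:Om02G} and from different mechanisms (a pinch on one factor of a product, a disconnection, a collision term that coincides with a generic term after relabelling) --- coalesce into one term once boundary components are reordered, while the ordered-splitting convention in the $\sum_{\mathcal{J}_1\uplus\mathcal{J}_2}$ sums must be kept straight. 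The ribbon-graph description in cases \ref{itemA}--\ref{itemD} is the most reliable bookkeeping device for these coincidences and conventions, and the requirement that the final expression be symmetric in $q_1,q_2,q_3$ --- although the derivation singles out $q_3$ --- provides a stringent check that no term was dropped or miscounted.
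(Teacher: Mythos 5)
Your proposal follows essentially the same route as the paper: the paper likewise writes $\Omega_{q_1,q_2,q_3}=\hat{T}_{q_3}\hat{T}_{q_2}\Omega_{q_1}$, substitutes the equation of motion \eqref{eom-2} a second time into the derivative representation \eqref{TqOmq} (the extra $M$-independent terms of \eqref{eom-2} dropping under the remaining $M$-derivatives, exactly as you note), and then invokes ``a similar discussion as before'', i.e.\ the case analysis of Proposition~\ref{prop:Om02G}. Your explicit identification of the two places where Proposition~\ref{PropT} cannot be applied verbatim --- repeated external labels admitting insertions at either occurrence, and collisions of the free summation index with $q_3$ --- together with the symmetry check in $q_1,q_2,q_3$, is precisely the content the paper compresses into that one sentence, so the proposal is correct.
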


\begin{proof}
Applications of $\hat{T}_{q_3}$ to the first line of (\ref{TqOmq}) gives with (\ref{eom-2})
\begin{gather*}
\hat{T}_{q_3}\hat{T}_{q_2}\Omega_{q_1}
=N^2\sum_{j,k,l=1}^N \frac{D^2}{DM^{[q_1,k]}}
\bigg\{\frac{\partial^2}{\partial M_{q_2l}\partial M_{lq_2}}
\bigg(\frac{\partial^2\log \mathcal{Z}(M)}{\partial M_{q_3j}\partial M_{jq_3}}
+\frac{\partial \log \mathcal{Z}(M)}{\partial M_{q_3j}}
\frac{\partial \log \mathcal{Z}(M)}{\partial M_{jq_3}}\bigg)
\\ \hphantom{\hat{T}_{q_3}\hat{T}_{q_2}\Omega_{q_1}=}
{}+\frac{\partial}{\partial M_{q_2l}}
\bigg(\frac{\partial^2\log \mathcal{Z}(M)}{\partial M_{q_3j}\partial M_{jq_3}}
+\frac{\partial \log \mathcal{Z}(M)}{\partial M_{q_3j}}
\frac{\partial \log \mathcal{Z}(M)}{\partial M_{jq_3}}\bigg)
\frac{\partial \log \mathcal{Z}(M)}{\partial M_{lq_2}}
\\ \hphantom{\hat{T}_{q_3}\hat{T}_{q_2}\Omega_{q_1}=}
{}+\frac{\partial}{\partial M_{lq_2}}
\bigg(\frac{\partial^2\log \mathcal{Z}(M)}{\partial M_{q_3j}\partial M_{jq_3}}
+\frac{\partial \log \mathcal{Z}(M)}{\partial M_{q_3j}}
\frac{\partial \log \mathcal{Z}(M)}{\partial M_{jq_3}}\bigg)
\frac{\partial \log \mathcal{Z}(M)}{\partial M_{q_2l}} \bigg\}\bigg|_{M=0}.
\end{gather*}
A similar discussion as before gives the assertion.
\end{proof}

Using the previous Examples \ref{Ex2}, \ref{Ex4} and \ref{Ex2+2}, it
is easy to write $\Omega_{q_1}^{(0)}$ and $\Omega_{q_1,q_2}^{(0)}$ up to order~$\lambda^2$ and $\Omega_{q_1,q_2,q_3}^{(0)}$ up to order~$\lambda^1$.

\section{Results connected with blobbed topological recursion}\label{sec:BTR}

In the next subsection we briefly recall the main construction of
\cite{Branahl:2020yru} how \textit{explicit and exact} results for
$\Omega^{(g)}_{b_1,\dots,b_m}$ are obtained. Afterwards, the first few
examples are expanded in $\lambda$ and shown to reproduce the
perturbative results.

\subsection{Summary of previous work}

Our main tool is the usage of \textit{Dyson--Schwinger equations}. They are first
derived for the correlation functions $G^{(g)}_{\dots}$ introduced
before, then complexified to functions $G^{(g)}$ of several complex
variables which satisfy
$G^{(g)}\big(E_{p^1_1},\dots,E_{p^1_{n_1}}|\cdots |
E_{p^b_1},\dots,E_{p^b_{n_b}}\big)=G^{(g)}_{|p^1_1\cdots p^1_{n_1}|\cdots |p^b_1\cdots p^b_{n_b}|}$. After
complexification one can admit multiplicities of the $E_k$, i.e., we
assume that $(e_1,\dots,e_d)$ are the pairwise different values in
$(E_1,\dots,E_N)$ which arise with multiplicities $(r_1,\dots,r_d)$,
respectively, with $r_1+\dots +r_d=N$. It is also straightforward to take
a limit, where the $e_k$ continuously fill an interval with a certain spectral
measure. As mentioned before, there
is a closed non-linear equation \cite{Grosse:2009pa,Grosse:2012uv} for the planar 2-point function alone
and an infinite hierarchy of affine equations for all other functions. A~continuum variant of
the non-linear equation was solved in~\cite{Panzer:2018tvy} for the 2-dimensional Moyal case and later
in~\cite{Grosse:2019jnv} in full generality. It suggested an ansatz in which an implicitly
defined function $R\colon \hat{\mathbb{C}}\to\hat{\mathbb{C}}$, where
$\hat{\mathbb{C}}=\mathbb{C}\cup \{\infty\}$, is crucial:
\begin{Theorem}[\cite{Schurmann:2019mzu-v3}]\label{thmOm1}
Let $(E_1,\dots,E_N)$ be partitioned into
pairwise different $e_1,\dots,e_d>0$ which arise with multiplicities $(r_1,\dots,r_d)$, respectively.
Assume that the complexification $\hat{\Omega}^{(0)}(E_q)\allowbreak=\Omega^{(0)}_q$
can be expressed as\vspace{-1ex}
\begin{gather}
\hat{\Omega}^{(0)}(R(z))=:\Omega^{(0)}_1(z)
=-\frac{R(-z)+R(z)}{\lambda}-\frac{1}{N}\sum_{k=1}^d \frac{r_k}{R(\varepsilon_k)-R(z)}
 \label{eq:om01}
\end{gather}
for some meromorphic function $R$ of degee $d+1$ with $R(\varepsilon_k)=e_k$ and
$R(\infty)=\infty$. Then $R$ is, for generic values of $(e_k)$, uniquely determined by the
non-linear Dyson--Schwinger equation to\vspace{-1ex}
\begin{gather}
 R(z):=z-\frac{\lambda}{N}\sum_{k=1}^d\frac{\varrho_k}{\varepsilon_k+z},\qquad
 \varrho_k=\frac{r_k}{R'(\varepsilon_k)}.
 \label{eq:R}
\end{gather}
Choosing $\lim\limits_{\lambda\to 0}\varepsilon_k=e_k$, then any $(e_k)$ is generic for $\lambda$ in an open
(real or complex) neighbourhood of $0$.
\end{Theorem}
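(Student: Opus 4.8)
The engine of the argument is the closed non-linear Dyson--Schwinger equation for the planar two-point function $G^{(0)}_{|ab|}$ of \cite{Grosse:2009pa,Grosse:2012uv}, whose characteristic bracket is exactly $E_a+E_b+\lambda\,\Omega^{(0)}_a$ with $\Omega^{(0)}_a=\tfrac1N\sum_k G^{(0)}_{|ak|}$. The plan is: (i)~complexify this equation, collapsing $\sum_{k=1}^N$ into $\sum_{k=1}^d r_k$; (ii)~pass to the uniformising variable by setting $E_a=R(z)$, so that $E_a+\lambda\,\Omega^{(0)}_a$ becomes $R(z)+\lambda\,\Omega^{(0)}_1(z)$; (iii)~feed in the hypothesised shape \eqref{eq:om01} of $\Omega^{(0)}_1(z)$; and (iv)~read off $R$ by comparing poles and residues in~$z$.

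The point of step~(iii) is that the form \eqref{eq:om01} is precisely what makes the bracket collapse: substituting it gives $R(z)+\lambda\,\Omega^{(0)}_1(z)=-R(-z)-\tfrac{\lambda}{N}\sum_k\tfrac{r_k}{e_k-R(z)}$, and this last sum cancels against the term it generates on the right-hand side of the Dyson--Schwinger equation, leaving a much simpler functional equation in which $-R(-z)$ plays the role of the second branch. One then treats this equation, together with \eqref{eq:om01} itself, as an identity of meromorphic functions of $z\in\hat{\mathbb{C}}$: the left-hand side $\Omega^{(0)}_1(z)=\hat\Omega^{(0)}(R(z))$ inherits the analytic structure of $\hat\Omega^{(0)}$ (regular away from its cut, prescribed values at $R(\varepsilon_k)=e_k$, prescribed behaviour at $R(\infty)=\infty$), while the right-hand side is built explicitly from $R$. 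Matching the singularities of the two sides, together with $R(\varepsilon_k)=e_k$ and $\deg R=d+1$, forces the poles of $R$ to be simple and located exactly at $z=-\varepsilon_k$, hence a partial-fraction shape $R(z)=az+c+\sum_{k=1}^d\frac{c_k}{\varepsilon_k+z}$; the asymptotics and the normalisation $R\to\mathrm{id}$ as $\lambda\to0$ force $a=1$, $c=0$, and the residue balance at $z=\pm\varepsilon_k$ fixes $c_k=-\tfrac{\lambda}{N}\varrho_k$ together with the self-consistency relation $\varrho_k=r_k/R'(\varepsilon_k)$. This reproduces \eqref{eq:R}, and since every choice was forced it also gives uniqueness.

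For the last sentence of the theorem---that, with $\lim_{\lambda\to0}\varepsilon_k=e_k$, any $(e_k)$ is generic for $\lambda$ near~$0$---I would invoke the implicit function theorem: \eqref{eq:R} together with the constraints $R(\varepsilon_k)=e_k$ forms an analytic system for $(\varepsilon_1,\dots,\varepsilon_d)$ (with the $\varrho_k$ dependent) whose linearisation at $\lambda=0$ is the identity, so it has a unique holomorphic branch of solutions near $\lambda=0$; the genericity conditions used along the way (pairwise distinct $e_k$, simple ramification of $R$ away from the~$\varepsilon_k$, $R'(\varepsilon_k)\neq0$) are open and hold at $\lambda=0$, hence persist. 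The step I expect to be the real obstacle is~(iv): one must control the \emph{complete} pole structure of all functions involved---in particular rule out further ``unphysical'' poles in $\hat\Omega^{(0)}(R(z))$ or in the complexified $G^{(0)}$---so that the partial-fraction ansatz for $R$ is forced rather than merely permitted. The cleanest way to make this watertight is probably to combine the residue identities with an order-by-order comparison in~$\lambda$ against the free solution $R(z)=z$, using that \eqref{eq:om01} and the Dyson--Schwinger equation are analytic in~$\lambda$ around~$0$.
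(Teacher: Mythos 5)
First, a point of order: the paper does not prove Theorem~\ref{thmOm1} at all --- it is imported with a citation from \cite{Schurmann:2019mzu-v3} (see also \cite{Grosse:2019jnv}), so there is no in-paper proof to measure you against. Your outline does follow the strategy of those references: complexify the closed non-linear Dyson--Schwinger equation of \cite{Grosse:2009pa,Grosse:2012uv}, substitute $E_a=R(z)$, insert the ansatz \eqref{eq:om01}, and determine $R$ by analysing singularities; the algebraic identity you record in step~(iii) is correct, and the implicit-function-theorem argument for the final sentence of the theorem is the right one.

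The gap is that the decisive step is asserted rather than carried out, and you concede as much. Concretely: the hypothesis only gives that $R$ is meromorphic of degree $d+1$ with $R(\varepsilon_k)=e_k$ and $R(\infty)=\infty$; nothing a priori places its finite poles at $z=-\varepsilon_k$. Once the poles \emph{are} at $-\varepsilon_k$, a genuine residue argument is available --- regularity of $\hat{\Omega}^{(0)}(R(z))$ at $z=\varepsilon_k$ forces the residue of $-R(-z)$ there to cancel that of $-\frac{\lambda}{N}\frac{r_k}{e_k-R(z)}$, which is $+\frac{\lambda r_k}{N R'(\varepsilon_k)}$, and this is exactly what fixes $c_k=-\lambda\varrho_k/N$ with $\varrho_k=r_k/R'(\varepsilon_k)$ --- but that argument uses only \eqref{eq:om01}, not the Dyson--Schwinger equation, and it presupposes the pole locations. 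The actual content of the theorem is that the non-linear DSE, after rewriting its eigenvalue sums $\frac1N\sum_k r_k f(E_k)$ as sums over the preimages $\varepsilon_k$ (which is where the Jacobian factors $1/R'(\varepsilon_k)$ really originate), forces the reflection $z\mapsto -z$ to appear as the second branch and hence pins the poles to $-\varepsilon_k$ and the linear part to $z$. You never write down the complexified DSE, so the claimed cancellation in step~(iii) and the "forced" partial-fraction shape in step~(iv) remain unverified, and with them the uniqueness claim. Your proposed fallback --- an order-by-order comparison in $\lambda$ --- would at best establish uniqueness of the formal power series, not of the meromorphic $R$ at fixed $\lambda$, which is what the statement (with its genericity caveat on the $2d$ algebraic conditions $R(\varepsilon_k)=e_k$, $\varrho_kR'(\varepsilon_k)=r_k$) actually asserts.
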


The implicitly defined function $R$ provides a ramified covering
$R\colon \hat{\mathbb{C}}\to\hat{\mathbb{C}}$ of
Riemann spheres, see Figure~\ref{fig:complexification1}. The important
observation is that $R$ pulls $\hat{\Omega}^{(0)}(\zeta)$ back to
a \emph{rational} function~$\Omega_1^{(0)}(z)$. This rationality on the
(right) $z$-plane of Figure~\ref{fig:complexification1}
extends to all other correlation functions. In contrast, when expressing these functions in terms of
the original variables~$(e_k)$ we need to invert $R$ which in closed form is not possible beside $d=1$.
\vspace{-1ex}
\begin{figure}[h!]
\centering
	\includegraphics[width= 0.99\textwidth]{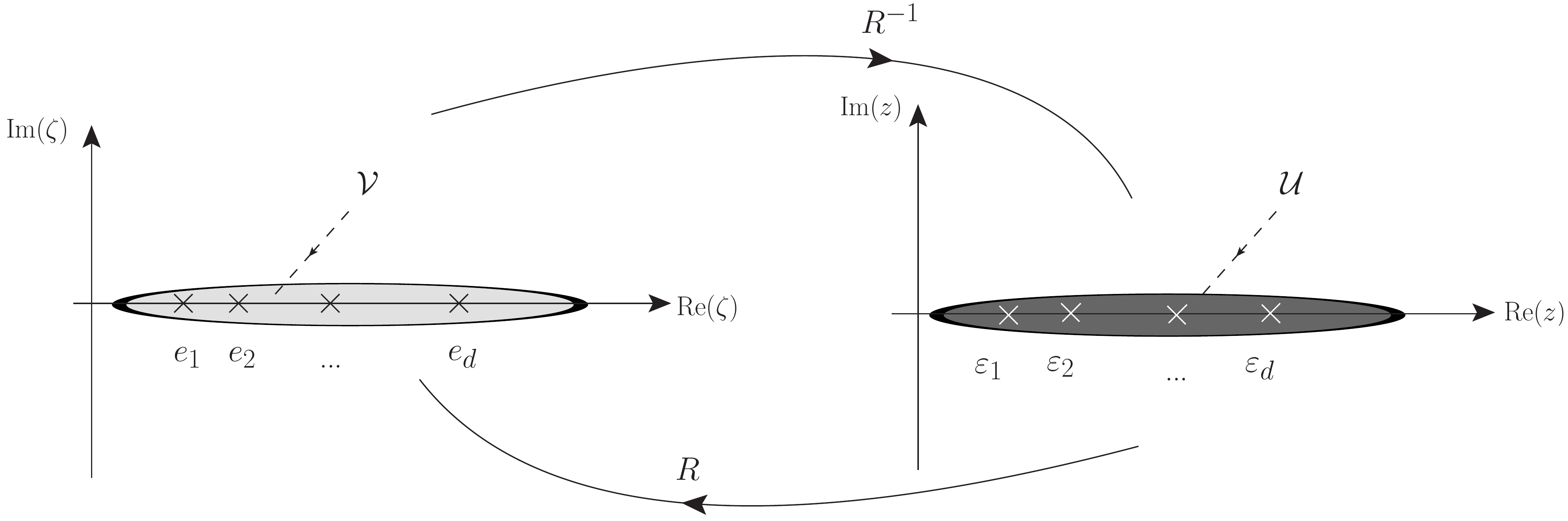}
	\caption{Illustration of the ramified covering map $R\colon \hat{\mathbb{C}}\to \hat{\mathbb{C}}$
		satisfying $R(\varepsilon_k)=e_k$. The map $R$ is biholomorphic between the
		neighbourhoods $\mathcal{V}$ and $\mathcal{U}$.
		\label{fig:complexification1}}\vspace{-1ex}
\end{figure}

It was understood in~\cite{Branahl:2020yru} that, although all other
complexified functions $G^{(g)}$ satisfy affine Dyson--Schwinger
equations (see \cite{Hock:2020rje}), an explicit solution must first
be achieved for the auxiliary functions $\Omega^{(g)}_m(z_1,\dots,z_m)$
with
$\Omega^{(g)}_m(\varepsilon_{q_1},\dots,\varepsilon_{q_m})=\Omega^{(g)}_{q_1,\dots,q_m}$. We~refer to \cite{Branahl:2020yru} for details about the solution
strategy for $\Omega^{(g)}_m(z_1,\dots,z_m)$. Here we only quote the
remarkably simple result:

\begin{Proposition}[\cite{Branahl:2020yru}]\label{propOm2}
 Let $R(z)$ be as in Theorem~$\ref{thmOm1}$ and $\beta_i$ for $i\in\{1,\dots,2d\}$ be the $2d$
 solutions of $R'(z)=0$. We~have the solutions
\begin{gather*}
\Omega^{(0)}_{2}(u,z)=\frac{1}{R'(u)R'(z)}\bigg(\frac{1}{(u-z)^2}+\frac{1}{(u+z)^2}\bigg),
\\
\Omega^{(0)}_{3}(u,v,z)=\frac{1}{R'(u)R'(v)R'(z)} \frac{\partial^3}{\partial u \partial v \partial z}
\Bigg[\frac{\lambda \big(\frac{1}{v+u}+\frac{1}{v-u}\big)}{R'(u)R'(-u)(z+u)}
+\frac{\lambda \big(\frac{1}{u+v}+\frac{1}{u-v}\big)}{R'(v)R'(-v)(z+v)}
\\ \hphantom{\Omega^{(0)}_{2}(u,z)=}
{}+\sum_{i=1}^{2d}
\frac{\lambda \big(\frac{1}{v+\beta_i}+\frac{1}{v-\beta_i}\big)
 \big(\frac{1}{u+\beta_i}+\frac{1}{u-\beta_i}\big)}{R'(-\beta_i)R''(\beta_i) (z-\beta_i)}
		\Bigg].\vspace{-1ex}
\end{gather*}
The function $\Omega^{(0)}_{3}(u,v,z)$ is completely symmetric in its arguments.
\end{Proposition}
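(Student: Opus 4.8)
The strategy is to compute the right-hand sides of the recursion \eqref{def:Om}, i.e.\ $\Omega^{(0)}_{q_1,q_2}=\hat T_{q_2}\Omega^{(0)}_{q_1}+(E_{q_1}-E_{q_2})^{-2}$ and $\Omega^{(0)}_{q_1,q_2,q_3}=\hat T_{q_3}\hat T_{q_2}\Omega^{(0)}_{q_1}$, feeding in the closed form \eqref{eq:om01} of $\Omega^{(0)}_1$ together with the fixed-point equation \eqref{eq:R} for $R$, and transporting everything to the uniformising variables $z_i$ with $R(z_i)=E_{q_i}$. (One could instead complexify the polynomial identities of Propositions~\ref{prop:Om02G} and~\ref{prop:Om03G}, but that would require the affine Dyson--Schwinger equations for all the $G^{(0)}$ that occur; going through $\Omega^{(0)}_1$ alone is shorter.) The one genuinely new ingredient is a \emph{variation-of-the-spectral-curve} formula for the creation operator: explicit expressions for $\delta_qR(z):=-N\,\partial R(z)/\partial E_q|_{z\ \mathrm{fixed}}$ and for $-N\,\partial\varepsilon_k/\partial E_q$. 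Once $\hat T_q=-N\,\partial/\partial E_q$ is understood as such an operator on rational functions of the $z_i$, the rest is organised bookkeeping.

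To obtain the variation formula I would differentiate the defining relations $R(\varepsilon_k)=e_k$ and \eqref{eq:R} with respect to $E_q$, producing a closed but nonlocal (sum over $k$) linear system relating $\delta_qR$, $\delta_q\varepsilon_k$ and $\delta_q\varrho_k$. Disentangling it, I expect $\delta_qR(z)$ to come out as a rational function whose principal part sits at $z=\pm\varepsilon_q$ and reproduces, up to factors of $R'$, the characteristic doubled kernel $\frac{1}{(\varepsilon_q-z)^2}+\frac{1}{(\varepsilon_q+z)^2}$ of the quartic model. This is the step I expect to be the main obstacle: solving the self-consistency and singling out the correct solution among the rational candidates, for which one exploits $R(z)+R(-z)=2z-\frac{\lambda}{N}\sum_k\varrho_k\big(\tfrac1{\varepsilon_k+z}+\tfrac1{\varepsilon_k-z}\big)$ to control the behaviour at $z=\infty$ together with the fixed constraints $R(\varepsilon_k)=e_k$.

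With the variation formula in hand, $\Omega^{(0)}_2$ follows by applying $-N\,\partial/\partial E_{q_2}$ to $\Omega^{(0)}_1(\varepsilon_{q_1})=\hat\Omega^{(0)}(R(\varepsilon_{q_1}))$ using \eqref{eq:om01}; both the explicit $\varepsilon_{q_1}$-dependence (via $-N\,\partial\varepsilon_{q_1}/\partial E_{q_2}=-\delta_{q_2}R(\varepsilon_{q_1})/R'(\varepsilon_{q_1})$) and the parametric dependence of the $\varepsilon_k,\varrho_k$ contribute. The $\delta_{m,2}$-term $(E_{q_1}-E_{q_2})^{-2}$ is exactly what removes the anomalous diagonal piece: near $\varepsilon_{q_1}=\varepsilon_{q_2}$ the difference of $(R(u)-R(z))^{-2}$ and $R'(u)^{-1}R'(z)^{-1}(u-z)^{-2}$ is regular, so adding it back restores the clean double pole. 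Setting $u:=\varepsilon_{q_1}$, $z:=\varepsilon_{q_2}$ then leaves $\Omega^{(0)}_2(u,z)=\frac{1}{R'(u)R'(z)}\big(\frac1{(u-z)^2}+\frac1{(u+z)^2}\big)$, now manifestly symmetric.

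For $\Omega^{(0)}_3$ I would apply $-N\,\partial/\partial E_{q_3}$ to $\Omega^{(0)}_2(\varepsilon_{q_1},\varepsilon_{q_2})$. Besides hitting the explicit arguments, the derivative hits the Jacobian factors $1/R'(\varepsilon_{q_i})$ and the $\varepsilon$-dependence inside $R'$; the latter contributions are localised near the $2d$ zeros $\beta_i$ of $R'$, and a residue computation using $R'(\beta_i)=0$ turns them into the sum $\sum_{i=1}^{2d}\frac{\lambda\,(\cdots)(\cdots)}{R'(-\beta_i)R''(\beta_i)(z-\beta_i)}$, while the remaining terms produce the two permutation-asymmetric pieces $\propto\frac{1}{R'(u)R'(-u)(z+u)}$; a partial-fraction rearrangement then reassembles everything into the displayed $\frac{1}{R'(u)R'(v)R'(z)}\partial_u\partial_v\partial_z[\,\cdots\,]$. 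The remaining point, complete $S_3$-symmetry, is not manifest in this asymmetric construction; I would verify it by a direct partial-fraction rearrangement of the assembled expression (it is in any case forced by the overall consistency of the framework of \cite{Branahl:2020yru}). As an independent check I would Taylor-expand the closed formulas in $\lambda$ via \eqref{eq:R} and match them against the perturbative $\Omega^{(0)}$'s assembled from Propositions~\ref{prop:Om02G}, \ref{prop:Om03G} and the ribbon-graph expansions, as done in Section~\ref{sec:BTR}.
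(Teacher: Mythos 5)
This paper does not actually prove Proposition~\ref{propOm2}: the result is imported from \cite{Branahl:2020yru} (``Here we only quote the remarkably simple result''), and the only verification carried out here is the perturbative cross-check of Section~\ref{sec:comparison}, where the closed formulae are Taylor-expanded in $\lambda$ and matched against the ribbon-graph expansions through Propositions~\ref{prop:Om02G} and~\ref{prop:Om03G}. Your closing paragraph reproduces exactly that check; your main argument is an attempt at the proof that lives in the reference. As a skeleton it is the natural route and essentially the one taken there: realise $\hat T_q=-N\partial/\partial E_q$ as a variation of the spectral-curve data, extract $\delta_qR(z)$ and $\delta_q\varepsilon_k$ from the implicit system $R(\varepsilon_k)=e_k$, $\varrho_kR'(\varepsilon_k)=r_k$ of \eqref{eq:R}, and iterate starting from \eqref{eq:om01}. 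Your handling of the $\delta_{m,2}$-term (regularity of $(R(u)-R(z))^{-2}-R'(u)^{-1}R'(z)^{-1}(u-z)^{-2}$ on the diagonal) and the relation $-N\,\partial\varepsilon_{q_1}/\partial E_{q_2}=-\delta_{q_2}R(\varepsilon_{q_1})/R'(\varepsilon_{q_1})$ are both correct.

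The gap is that the proposal stops exactly where the content begins. The closed form of $\delta_qR(z)$ is only ``expected'' to have a certain pole structure; nothing in the write-up solves the self-consistent linear system in the $\delta_q\varepsilon_k$, $\delta_q\varrho_k$, and without that explicit formula neither the $\Omega^{(0)}_2$ identity nor anything about $\Omega^{(0)}_3$ follows. For $\Omega^{(0)}_3$ the decisive claims --- that the implicit dependence of the factors $1/R'(\varepsilon_{q_i})$ ``localises'' at the $2d$ zeros of $R'$ and produces precisely $\sum_{i}\lambda(\cdots)(\cdots)/\bigl(R'(-\beta_i)R''(\beta_i)(z-\beta_i)\bigr)$, and that the remaining terms reassemble into the displayed triple derivative of a single potential --- are asserted rather than computed, and the full $S_3$-symmetry is deferred to ``overall consistency''. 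These reassembly steps are the substance of the proposition, not routine bookkeeping, so as it stands this is a correct plan with the central lemma (the explicit action of $\hat T_q$ on the $z$-plane) and the $\Omega^{(0)}_3$ computation still missing.
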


In \cite{Branahl:2020yru} also the solutions of $\Omega^{(g)}_{m}$ with
$(g,m)=(0,4),(1,1)$ are derived, leading to the
conjecture that meromorphic forms $\omega_{g,m}$ defined by
$\omega_{0,1}(z)=-R(-z)R'(z){\rm d}z$ and for $2g+m\geq 2$ by
$\omega_{g,m}(z_1,\dots,z_m)=\lambda^{2-2g-m}\Omega^{(g)}_m(z_1,\dots,z_m)\prod_{j=1}^m R'(z_j){\rm d}z_j$
follow blobbed topological recursion for the spectral curve
\begin{gather*}
 (x\colon \Sigma\to \Sigma_0,\omega_{0,1}(z),B(u,z))
 =\bigg(R\colon \hat{\mathbb{C}}\to\hat{\mathbb{C}},\,-R(-z){\rm d}R(z),\,\frac{{\rm d}u\,{\rm d}z}{(u-z)^2}\bigg).
\end{gather*}
This means that the poles of $\omega_{g,m}$ at the ramification points
$\beta_i$ of $R$ are given by the universal formula of topological
recursion. These are enriched by further contributions which are
holomorphic at $\beta_i$ and have poles at $z_i=0$ and $z_i+z_j=0$ for
the quartic Kontsevich model, starting with the appearance of some
additional initial data in $\omega_{0,2}$, namely
$\frac{{\rm d}u\,{\rm d}z}{(u+z)^2}=-B(u,-z)$ (Bergman kernel with one changed
sign). Their general structure is not yet understood for $g>0$, but we
are confident that the following structures below extend to higher
genera: The symmetry of the spectral curve, $y(z)=-x(-z)$ and
$\omega_{0,2}(u,z)=B(u,z)-B(u,-z)$ gave the motivation to the deep
involution identity \cite{Hock:2021tbl}
\begin{gather}
\omega_{0,|I|+1}(I,q) +\omega_{0,|I|+1}(I,- q)\nonumber
\\ \qquad
{}=-\sum_{s=2}^{|I|} \sum_{I_1\uplus \dots \uplus I_s=I}
\frac{1}{s} \Res\displaylimits_{z\to q} \bigg(
\frac{{\rm d}R(-q) {\rm d}R(z)}{(R(-z)-R(-q))^{s}} \prod_{j=1}^s
\frac{\omega_{0,|I_j|+1}(I_j,z)}{{\rm d}R(z)}\bigg)
\label{eq:flip-om}
\end{gather}
 completely determining the meromorphic forms
$\omega_{0,m+1}$
by usual topological recursion (polar at~$\beta_i$) and something surprisingly similar giving the holomorphic parts:
\begin{Theorem}[\cite{Hock:2021tbl}]\label{thmBTRplan}
 Assume that $z\mapsto \omega_{0,n+1}(u_m,\dots,u_m,z)$ is for $m\geq 2$
 holomorphic at
 $z=-\beta_i$ and $z=u_k$ and has poles at most in points where the
 r.h.s.\ of \eqref{eq:flip-om} has poles. Then equation \eqref{eq:flip-om} is
 for $I=\{u_1,\dots,u_m\}$ with $m\geq 2$ uniquely solved by
\begin{align}
 \omega_{0,|I|+1}(I,z)= {}& \sum_{i=1}^r
\Res\displaylimits_{q\to \beta_i}K_i(z,q)
 \sum_{I_1\uplus I_2=I}^\prime \omega_{0,|I_1|+1}(I_1,q)\omega_{0,|I_2|+1}(I_2,\sigma_i(q))
\nonumber
 \\
&-\sum_{k=1}^m {\rm d}_{u_k} \bigg[\Res\displaylimits_{q\to - u_k}
\!\!\sum_{I_1\uplus I_2=I}^{\prime}\!\!\!\tilde{K}(z,q,u_k)
{\rm d}_{u_k}^{-1}\big(\omega_{0,|I_1|+1}(I_1,q)
\omega_{0,|I_2|+1}(I_2,q)\big)\bigg],\!
 \label{sol:omega}
\end{align}
where the primed sum excludes the empty sets $I_i=\varnothing$.
\end{Theorem}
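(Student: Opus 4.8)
The plan is to prove uniqueness and the closed-form identification simultaneously by induction on $m=|I|$, with $\omega_{0,1}$ and $\omega_{0,2}$ as known initial data, treating \eqref{eq:flip-om} as the defining constraint. Write $\Phi_I(q)$ for the right-hand side of \eqref{eq:flip-om}; by the inductive hypothesis each $\omega_{0,|I_j|+1}(I_j,\cdot)$ entering it (with $|I_j|\le m-1$) is an explicit rational one-form in $z$ that is holomorphic at $z=-\beta_i$, at $z=u_\ell$, and at $z=0,\infty$, and whose only poles sit at the ramification points $\beta_i$ and at $z=-u_\ell$ for $u_\ell\in I_j$. The first step is a pole analysis of $\Phi_I$: the factor $\omega_{0,|I_j|+1}(I_j,z)/{\rm d}R(z)$ is singular only at $z=\beta_i$ (zeros of $R'$) and at $z=-u_\ell$, while $R(-z)-R(-q)$ has a simple zero at $z=q$ only; tracking these through $\Res_{z\to q}$ shows that $q\mapsto\Phi_I(q)$ is a rational one-form whose poles lie in $\{\pm\beta_i\}\cup\{\pm u_\ell\}$ and which is regular at $q=0$ and $q=\infty$. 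One also records that $\Phi_I$ is even in $q$ --- automatic since the left-hand side of \eqref{eq:flip-om} manifestly is, but also readable from the $z\mapsto-z$ symmetry of the kernel.

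\textbf{Uniqueness.} Suppose $\omega$ and $\widetilde\omega$ both have the analytic properties stated in the theorem and both satisfy \eqref{eq:flip-om}. Then $\Delta:=\omega-\widetilde\omega$ obeys $\Delta(q)+\Delta(-q)=0$, i.e.\ $\Delta$ is an \emph{odd} rational one-form on $\hat{\mathbb{C}}$. By hypothesis $\Delta$ is holomorphic at $z=-\beta_i$ and at $z=u_k$; oddness transports this to holomorphy at $z=\beta_i$ and at $z=-u_k$. Since the poles of $\Delta$ are confined to those of $\Phi_I$, hence to $\{\pm\beta_i\}\cup\{\pm u_k\}$, and $\Delta$ is regular at $0$ and $\infty$, it follows that $\Delta$ is holomorphic everywhere on $\hat{\mathbb{C}}$, hence $\Delta\equiv0$. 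Thus \eqref{eq:flip-om} has at most one solution of the prescribed type; equivalently, the even part of any solution equals $\tfrac12\Phi_I$ and its odd part is the unique odd rational one-form with the principal parts at $\pm\beta_i$ and $\pm u_k$ forced by holomorphy at $-\beta_i$ and at $u_k$.

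\textbf{Existence / identification.} It remains to check that the right-hand side of \eqref{sol:omega}, call it $\widehat\omega(I,z)$, is such a solution. From the construction of the recursion kernel $K_i(z,q)$ one sees that after $\Res_{q\to\beta_i}$ its only $z$-singularity is at $z=\beta_i$ (the points $z=q$ and $z=\sigma_i(q)$ coalescing there), so the first sum in \eqref{sol:omega} is holomorphic at $z=-\beta_i$, at $z=u_k$, and at $z=0,\infty$, with poles only at the $\beta_i$. The auxiliary kernel $\tilde K(z,q,u_k)$ is designed so that $\Res_{q\to-u_k}$ leaves a $z$-pole only at $z=-u_k$ and, after the exterior derivative ${\rm d}_{u_k}$, nothing at $z=u_k$; hence $\widehat\omega(I,z)$ has exactly the pole set $\{\beta_i\}\cup\{-u_k\}$ and the required holomorphy. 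The substantive point is that $\widehat\omega$ solves \eqref{eq:flip-om}. Here one symmetrises --- forms $\widehat\omega(I,q)+\widehat\omega(I,-q)$ --- and in each $\Res_{q\to\beta_i}$ contribution deforms the $q$-contour, collecting the residues at the remaining poles and using that the residues of a rational one-form on $\hat{\mathbb{C}}$ sum to zero. The spectral-curve symmetry $y(z)=-x(-z)$, $\omega_{0,1}(z)=-R(-z)\,{\rm d}R(z)$, turns the residues at $\sigma_i(q)$ and at $-u_k$ into contributions controlled by $R(-z)-R(-q)$, and iterating the binary splitting $I_1\uplus I_2=I$ and reorganising it into its connected-component (forest) form reproduces precisely the $s$-fold partition sum $\sum_{s\ge2}\sum_{I_1\uplus\cdots\uplus I_s=I}\tfrac1s(\cdots)$ on the right of \eqref{eq:flip-om}, the weight $\tfrac1s$ accounting for the overcounting of ordered decompositions. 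Since $\widehat\omega$ moreover has the pole structure demanded by the inductive hypothesis at level $m+1$, the induction closes.

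\textbf{Main obstacle.} The delicate part is this last verification: disentangling how the residues at the ramification points $\beta_i$ (the genuine topological-recursion part) combine with the residues at $z=-u_k$ (the ``blobbed'' part, absent in ordinary topological recursion) after symmetrisation, in particular confirming that the spurious singularities at $z=u_k$ cancel and that the combinatorial weights $\tfrac1s$ emerge correctly from iterating the quadratic recursion. A secondary, purely technical, task is to keep the pole analysis of $\Phi_I$ and of \eqref{sol:omega} sharp enough --- no new poles at $0$, $\infty$, or $z=u_k$ --- so that the induction on $m$ actually goes through.
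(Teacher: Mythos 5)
A preliminary remark: this paper does not prove Theorem~\ref{thmBTRplan}; it is imported verbatim from \cite{Hock:2021tbl}, so there is no in-paper proof to compare against. Judged on its own terms, your \emph{uniqueness} argument is sound and is essentially the standard one: the difference $\Delta$ of two admissible solutions is annihilated by the symmetrisation $q\mapsto\{q,-q\}$, hence its coefficient function is even in $q$; evenness transports the assumed holomorphy at $z=-\beta_i$ and $z=u_k$ to the only remaining candidate poles $z=\beta_i$ and $z=-u_k$; and a one-form on $\hat{\mathbb{C}}$ without poles vanishes. You should, however, verify regularity at $q=\infty$ explicitly rather than asserting it: a constant multiple of ${\rm d}q$ is even and has no finite poles but has a double pole at $\infty$, so it survives your argument unless one checks that the r.h.s.\ of \eqref{eq:flip-om} decays like ${\rm d}q/q^2$ there (it does, because $\omega_{0,|I_j|+1}/{\rm d}R=O\big(z^{-2}\big)$, but this is part of the proof, not a footnote).

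The existence/identification step has a genuine gap, in two places. First, your inductive pole bookkeeping asserts that every $\omega_{0,|I_j|+1}(I_j,\cdot)$ entering the r.h.s.\ of \eqref{eq:flip-om} is holomorphic at $z=u_\ell$; this fails for the singleton blocks $|I_j|=1$, since $\omega_{0,2}(u_\ell,z)=B(u_\ell,z)-B(u_\ell,-z)$ carries the Bergman double pole at $z=u_\ell$. This is precisely why the theorem must \emph{assume} holomorphy at $z=u_k$ for $m\geq 2$, and why the cancellation of would-be poles at $z=u_k$ in \eqref{sol:omega} (after the ${\rm d}_{u_k}$ in the second term) is a nontrivial computation rather than a consequence of the kernel's design, as you present it. Second, and more seriously, you do not explain how the \emph{local} Galois involutions $\sigma_i$ --- which enter both $K_i(z,q)$ and the argument of $\omega_{0,|I_2|+1}(I_2,\sigma_i(q))$ --- are produced by a functional equation that involves only the \emph{global} reflection $q\mapsto -q$. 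Symmetrising $\widehat\omega(I,q)+\widehat\omega(I,-q)$ and moving contours yields residues at the poles of the integrand, not the deck-transformation structure of $R$ near $\beta_i$. Bridging this requires first deriving the linear and quadratic loop equations at the ramification points from \eqref{eq:flip-om} (the paper explicitly records that these hold) and then using the Borot--Shadrin decomposition $\omega=\mathcal{P}_z\omega+\mathcal{H}_z\omega$ to identify the polar part with the topological-recursion residue formula, the $\tilde K$-term being what remains. Your sketch correctly names the $\tfrac1s$-weighted recombination of binary splittings as a difficulty, but as written the induction would not close without this loop-equation input.
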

Here $\sigma_i\neq \mathrm{id}$ denotes the local Galois involution
in the vicinity of $\beta_i$, i.e., $R(\sigma_i(z))=R(z)$,
$\lim\limits_{z\to \beta_i}\sigma_i(z)=\beta_i$.
By $d_{u_k}$ we denote the exterior differential in $u_k$. We~had to introduce two recursion kernels of a similar structure:
\begin{gather*}
K_i(z,q):= \frac{\frac{1}{2} \big(\frac{{\rm d}z}{z-q}-\frac{{\rm d}z}{z-\sigma_i(q)}\big)
}{{\rm d}R(\sigma_i(q))(R(-\sigma_i(q))-R(-q))},\qquad
\tilde{K}(z,q,u):=\frac{\frac{1}{2}\big(\frac{{\rm d}z}{z-q}-\frac{{\rm d}z}{z+u}\big)}{{\rm d}R(q)
 (R(u)-R(-q))}.
\end{gather*}
The linear and quadratic loop equations hold. We~emphasise that
Theorem~\ref{thmBTRplan} gives exactly the decomposition of
$\omega_{0,m}$ into
$\mathcal{P}_z\omega_{0,m}+\mathcal{H}_z\omega_{0,m}$ with projectors
$\mathcal{P}_z$ and $\mathcal{H}_z$ on poles at the ramification
points and the antidiagonal of a variable $z$. The kernels $K$ and
$\tilde{K}$ are constructed from the Bergman kernel $B(u,z)$ only, and
not by the full $\omega_{0,2}$. This is a clear difference to the
construction of Eynard and Orantin~\cite{Eynard:2007kz}, which in this
case would artificially produce (false) poles at negative ramification
points. The blobs $\varphi_{g,n}$ in the sense of Borot and Shadrin~\cite{Borot:2015hna} are defined as purely holomorphic part by the
application of $\mathcal{H}_{z_i}$ on all $m$ variables. This
procedure is straightforward but will give a cumbersome result.
Nevertheless we stress that in the quartic Kontsevich model there is a
remarkably simple rule to compute recursively the holomorphic part, which in the
general theory are only mildly constrained by the quadratic loop
equations.\looseness=1

The extension of (\ref{sol:omega}) to $g\geq 1$ is
work in progress. We~refer to
\cite{Eynard:2016yaa,Eynard:2007kz} for topological recursion in
general and to \cite{Borot:2015hna} for blobbed topological recursion.

\subsection{Comparison between exact results and weighted ribbon graphs}\label{sec:comparison}

In this subsection we compare the exact solutions of Theorem~\ref{thmOm1} and Proposition \ref{propOm2} with the perturbative
expansion via weighted ribbon graphs. First, we need the expansion of
$\varepsilon_a$ and~$R'(\varepsilon_a)$ which is easily obtained by
iterative insertion into the definition of $R(z)$. The first orders
yield:
\begin{gather*}
\varepsilon_q=e_q+\frac{\lambda}{N}\sum_{n=1}^d \frac{r_n}{e_q+e_n}
-\frac{\lambda^2}{N^2}\sum_{n,k=1}^d r_nr_k\bigg( \frac{1}{(e_q+e_n)(e_k+e_n)^2}
+\frac{1}{(e_q+e_n)^2(e_q+e_k)}
\\ \hphantom{\varepsilon_q=}
+\frac{1}{(e_q+e_n)^2(e_n+e_k)}\bigg)+\mathcal{O}\big(\lambda^3\big),
\\
R'(\varepsilon_q)=1+\frac{\lambda}{N}\sum_{n=1}^d \frac{r_n}{(e_q\!+e_n)^2}
-\frac{\lambda^2}{N^2}\sum_{n,k=1}^d r_nr_k\bigg( \frac{1}{(e_q\!+e_n)^2(e_k\!+e_n)^2}
+\frac{2}{(e_q\!+e_n)^3(e_q\!+e_k)}
\\ \hphantom{R'(\varepsilon_q)=}
+\frac{2}{(e_q+e_n)^3(e_n+e_k)}\bigg)+\mathcal{O}\big(\lambda^3\big).
\end{gather*}
Also necessary for $(g,n)=(0,3)$ and higher topologies are the zeroes $\beta_i$ of $R'$ (so-called
ramifications points). The $\lambda$-expansion yields
\begin{align}
 \beta_i&=-e_i+\mathrm{i}\sqrt{\frac{\lambda r_i}{N}}
 -\frac{\lambda}{N}\sum_{n=1}^d \frac{r_n}{e_i+e_n}
 +\mathcal{O}\big(\lambda^{\frac{3}{2}}\big),\qquad \beta_{i+d}=\bar{\beta_i}\qquad i\in\{1,\dots,d\}.
 \label{betaexpansion}
\end{align}
The expansions of $\varepsilon_q$ and $\beta_i$ are easily implemented
into a computer algebra system. Deriving perturbative results for the $\Omega^{(g)}_{n}$
is then straightforward. We~demonstrate this with the following examples:
\begin{Example}
From the expansion of the exact result, we obtain using Theorem \ref{thmOm1}
\begin{align*}
 \Omega^{(0)}_{1}(\varepsilon_q)={}&\frac{\varepsilon_q-e_q}{\lambda}
 +\frac{1}{N}\sum_{n=1}^d r_n\bigg(\frac{1}{R'(\varepsilon_n)(\varepsilon_n-\varepsilon_q)}
 -\frac{1}{e_n-e_q}\bigg)
 \\
={}&\frac{1}{N}\sum_{n=1}^d\frac{r_n}{e_n+e_q}
 -\frac{\lambda}{N^2}\sum_{n,k=1}^d r_nr_k\bigg( \frac{1}{(e_q+e_n)(e_k+e_n)^2}
 +\frac{1}{(e_q+e_n)^2(e_q+e_k)}
 \\
 &+\frac{1}{(e_q+e_n)^2(e_n+e_k)}+\frac{1}{(e_n+e_k)^2(e_n-e_q)} +\frac{\frac{1}{e_n+e_k}-\frac{1}{e_q+e_k}}{(e_n-e_q)^2}\bigg)+\mathcal{O}\big(\lambda^2\big)
 \\
={}&\frac{1}{N}\sum_{n=1}^d\frac{r_n}{e_n+e_q}-\frac{\lambda}{N^2}\!\!\sum_{n,k=1}^d\!\! \bigg(
		 \frac{r_nr_k}{(e_q\!+e_n)^2(e_q\!+e_k)}
		 +\frac{r_nr_k}{(e_q\!+e_n)^2(e_n\!+e_k)}\bigg)
+\mathcal{O}\big(\lambda^2\big).
\end{align*}
This result is in full compliance with the graph expansion of Example \ref{Ex2}
inserted into $\Omega^{(0)}_{q}=\frac{1}{N}\sum_{k=1}^N G^{(0)}_{|qk|}$. The agreement is immediate for
$r_n=1$; otherwise one collects $r_k$ identical terms, where $E_{k_1}=\dots =E_{k_{r_k}}=e_k$.
The expansion of the exact result is represented in a different partial fraction decomposition
than the graph expansion. The reader may check also the next order.
\end{Example}
\begin{Example}
We obtain from Proposition \ref{propOm2}
\begin{align*}
 \Omega^{(0)}_{2}(\varepsilon_q,\varepsilon_r)-\frac{1}{(e_q-e_r)^2}
={}&\frac{1}{R'(\varepsilon_q)R'(\varepsilon_r)}\bigg(\frac{1}{(\varepsilon_q-\varepsilon_r)^2}+\frac{1}{(\varepsilon_q+\varepsilon_r)^2}\bigg)-\frac{1}{(e_q-e_r)^2}
 \\
= {}&\frac{1}{(e_q+e_r)^2}-\frac{\lambda}{N}\sum_{n=1}^d r_n\bigg(2\frac{\frac{1}{e_n+e_q}
 -\frac{1}{e_n+e_r}}{(e_q-e_r)^3}+\frac{\frac{1}{(e_q+e_n)^2}+\frac{1}{(e_r+e_n)^2}}{(e_q-e_r)^2}
 \\
 &+2\frac{\frac{1}{e_n+e_q}+\frac{1}{e_n+e_r}}{(e_q+e_r)^3}
 +\frac{\frac{1}{(e_q+e_n)^2}+\frac{1}{(e_r+e_n)^2}}{(e_q+e_r)^2}\bigg)+\mathcal{O}\big(\lambda^2\big),
\end{align*}
which is in full compliance with the graph expansion of Examples
\ref{Ex2} and \ref{Ex4} inserted into Proposition~\ref{prop:Om02G}
(but in a different partial
fraction decomposition). The reader may check the next order, where
additionally the graphs of the $(2+2)$-point function from Example
\ref{Ex2+2} contribute.
\end{Example}

\begin{Example}
We obtain from Proposition \ref{propOm2}
\begin{align*}
\Omega^{(0)}_{3}(\varepsilon_q,\varepsilon_r,\varepsilon_s)
={}&\frac{1}{R'(\varepsilon_q)R'(\varepsilon_r)R'(\varepsilon_s)}
\Bigg[\frac{\partial}{\partial u}\frac{\lambda \big(\frac{1}{(\varepsilon_r+u)^2}
 +\frac{1}{(\varepsilon_r-u)^2}\big)}{R'(u)R'(-u)(\varepsilon_s+u)^2}\bigg\vert_{u=\varepsilon_q}
 \\
&+\frac{\partial}{ \partial v }\frac{\lambda \big(\frac{1}{(\varepsilon_q+v)^2}
 +\frac{1}{(\varepsilon_q-v)^2}\big)}{R'(v)R'(-v)(\varepsilon_s+v)^2}\bigg\vert_{v=\varepsilon_r}
\\
&-\sum_{i=1}^{2d}\frac{\lambda \big(\frac{1}{(\varepsilon_r+\beta_i)^2}+\frac{1}{(\varepsilon_r-\beta_i)^2}\big) \big(\frac{1}{(\varepsilon_q+\beta_i)^2}+\frac{1}{(\varepsilon_q-\beta_i)^2}\big)}
{R'(-\beta_i)R''(\beta_i)(\varepsilon_s-\beta_i)^2}	\Bigg]
\\
={}&-\lambda \cdot 2\bigg(\frac{\frac{1}{(e_r+e_q)^2}+\frac{1}{(e_r-e_q)^2}}{(e_s+e_q)^3} +\frac{\frac{1}{(e_r+e_q)^3} -\frac{1}{(e_r-e_q)^3}}{(e_s+e_q)^2}
\\
&+\frac{\frac{1}{(e_q+e_r)^2}+\frac{1}{(e_q-e_r)^2}} {(e_s+e_r)^3}+\frac{\frac{1}{(e_q+e_r)^3}-\frac{1}{(e_q-e_r)^3}}{(e_s+e_r)^2}\bigg)
+\mathcal{ O}\big(\lambda^2\big),
\end{align*}
where the restrictions to $u=\varepsilon_q$ and $v=\varepsilon_r$ in
the second line vanish. The only contributions come from the
$i$-summation. This result is in full compliance with the graph
expansion in Examples~\ref{Ex2} and~\ref{Ex4} inserted into
Proposition~\ref{prop:Om03G}, but again in a different partial
fraction decomposition. For the computation, we remark that the
expansion
\begin{gather*}
\frac{1}{(\varepsilon_q+\beta_q)^2}=-\frac{N}{\lambda r_q}
 +\mathcal{O}\bigg(\frac{1}{\sqrt\lambda}\bigg),\qquad
\frac{1}{(\varepsilon_q+\beta_{q+d})^2}=-\frac{N}{\lambda r_q}
 +\mathcal{O}\bigg(\frac{1}{\sqrt\lambda}\bigg),
 \\
\frac{1}{R''(\beta_i)}=-\frac{\mathrm{i}}{2}\sqrt{\frac{\lambda r_i}{N}}
 +\mathcal{O}(\lambda),\qquad
 \frac{1}{R''(\beta_{i+d})}=\frac{\mathrm{i}}{2}\sqrt{\frac{\lambda r_i}{N}}
 +\mathcal{O}(\lambda)		
	\end{gather*}
	indicates a contribution of order $\sqrt{\lambda}$ from the
 $i$-summation, which actually cancels due to the pairs
 $(\beta_i,\beta_{i+d})$ of complex conjugations
 $\bar{\beta_i}=\beta_{i+d}$. The reader may even check that
 the order~$\lambda^{\frac{3}{2}}$ cancels as well.
\end{Example}

\subsection{Combinatorics}
\label{sec:combinatorics}

A common investigation in QFT concerns the growth of the number of
Feynman graphs at a~cer\-tain order $\lambda^v$. In order to illustrate
the enormous complexity of the individual contributions to~$\Omega_{q}^{(0)}$ and $\Omega_{q_1,q_2}^{(0)}$ at a given order
$\lambda^v$, we will calculate these numbers explicitly. To~enter this
regime of enumerative geometry within the quartic Kontsevich model we
have to set $d=1$. We~will now show how to expand $G^{(0)}_{|11|}$,
$G^{(0)}_{|1111|}$ and $G^{(0)}_{|11|11|}$ (the $2$-point, $4$-point
and $(2{+}2)$-point function for a single, $(r_1{=}N)$-fold degenerate
spectral value $e_1=e$) in an exact and generic perturbative series in
$\lambda$. The prefactors of $(-\lambda)^v$ for $e=\frac{1}{2}$ then
simply \textit{count} the number of connected Feynman ribbon graphs
contributing to the graph expansion at order~$v$. As~known from the
Hermitian 1-matrix model \cite{Brezin:1977sv}, the duals of the ribbon
graphs of the quartic Kontsevich model are rooted quadrangulations.
However, due to a different definition of correlation functions, the
correspondence to \cite{Brezin:1977sv} is not one-to-one.\footnote{For
 the Hermitian 1-matrix model one usually considers resolvents which
 from a combinatorial perspective are sometimes called ordinary maps.
 Our correlation correspond for $d=1$ to the so-called fully
 simply maps, see~\cite{Borot:2017agy} for precise definitions.} To
avoid complicated redefinitions, we follow another path.

To derive the exact power series in $\lambda$, return to Theorem~\ref{thmOm1} and solve the $2d$ implicitly defined equations for $d=1$
explicitly. For $\varepsilon_1=\varepsilon$, we solve them to
\begin{gather*}
 \varepsilon= \frac{1}{6}\big(4e+\sqrt{4e^2+12\lambda}\big) , \qquad
 \varrho= \frac{N}{18 \lambda} \big(2e\sqrt{4e^2+12\lambda}-4e^2+12\lambda\big).
\end{gather*}
With the other preimage
$ \hat \varepsilon= -\frac{1}{6}\big(2e+2\sqrt{4e^2+12\lambda}\big)$ one
expresses\footnote{See \cite{Branahl:2020yru} for details about the complexification procedure from
 correlation functions $G^{(g)}_{\dots}$ to meromorphic functions $\mathcal{G}^{(g)}(\dots)$.}
the planar $2$-point func\-tion~as
\begin{gather*}
 G^{(0)}_{|11|}\equiv G^{(0)}(e,e)\equiv G^{(0)}(R(\varepsilon),R(\varepsilon))
 =:\mathcal{G}^{(0)}(\varepsilon,\varepsilon)=
-\frac{2 \hat \varepsilon}{(\varepsilon-\hat
 \varepsilon)^2}.
\end{gather*}
Admitting multiplicities $r_k$ in the definition (\ref{def:Om}) of $\Omega_q^{(0)}$ we find for $d=1$ and $r_1=N$
\begin{gather*}
 \Omega_q^{(0)}=\frac{1}{N}\sum_{k=1}^d r_k G^{(0)}_{|qk|} \longrightarrow
 \Omega_1^{(0)}= G^{(0)}_{|11|} =\mathcal{G}^{(0)}(\varepsilon,\varepsilon).
\end{gather*}
The same steps give for $\Omega^{(0)}_{q_1,q_2}$ according to Proposition~\ref{prop:Om02G}
\begin{align*}
 \Omega_{q_1,q_2}^{(0)}={}&\frac{1}{(e_{q_1}-e_{q_2})^2}+\Big(G^{(0)}_{|q_1q_2|}\Big)^2
 +\frac{1}{N}\sum_{k=1}^N r_k
 \Big(G^{(0)}_{|q_1kq_1q_2|}+G^{(0)}_{|q_2kq_2q_1|}+G^{(0)}_{|q_1kq_2k|}\Big)
\\
& +\frac{1}{N^2}\sum_{k,l=1}^N r_kr_lG^{(0)}_{|q_1k|q_2l|}
\longrightarrow \lim_{e_{q_1},e_{q_2}\to e}\bigg(\Omega_{q_1,q_2}^{(0)}-\frac{1}{(e_{q_1}-e_{q_1})^2}\bigg)
\\
={}&\mathcal{G}^{(0)}(\varepsilon,\varepsilon)^2
+3\mathcal{G}^{(0)}(\varepsilon,\varepsilon,\varepsilon,\varepsilon)
+\mathcal{G}^{(0)}(\varepsilon,\varepsilon|\varepsilon,\varepsilon).
\end{align*}
A lengthy calculation shows
\begin{gather}
 \mathcal{G}^{(0)}(\varepsilon,\varepsilon,\varepsilon,\varepsilon)
 = \frac{8e+12 \sqrt{4e^2+12\lambda}}{\big(2e+ \sqrt{4e^2+12\lambda}\big)^3}
 - 2\big(\mathcal{G}^{(0)}(\varepsilon, \varepsilon)\big)^2, \nonumber
 \\
 \mathcal{G}^{(0)}(\varepsilon,\varepsilon|\varepsilon,\varepsilon)
 = \frac{6\lambda^2}{\big(e+\sqrt{4e^2+12\lambda}/2\big)^6}.
\label{G4d1}
\end{gather}
Inserting the $\lambda$-expansion of these formulae above gives the
number of ribbon graphs contributing to $\Omega_{q}^{(0)}$ and
$\Omega_{q_1,q_2}^{(0)}$ at a given order $\lambda^v$.
In the following table we list these numbers up to order~$\lambda^5$, including also $\Omega_{q_1,q_2,q_3}^{(0)}$ for completeness.
\begin{center}\setlength{\tabcolsep}{4.5pt}
\renewcommand{\arraystretch}{1.5}
\begin{tabular}[h!b]{c|c|c|c}
\hline
Order &$\Omega_{q}^{(0)}$&$\Omega_{q_1,q_2}^{(0)}=\big(\Omega _{q_1,q_2}^{(0)}\big)^{\rm TR}+\big(\Omega_{q_1,q_2}^{(0)}\big)^{\rm BTR}$&$\Omega_{q_1,q_2,q_3}^{(0)}$
\\
\hline
$\lambda^0$ & 1 & $1=0+1$ & 0 \\
\hline
$\lambda^1$ & 2 & $7 =1+6$& 4\\
\hline
$\lambda^2$ & 9 & $58=13+45$ & 84\\
\hline
$\lambda^3$ & 54 & $522=144+378$ & 1322 \\
\hline
$\lambda^4$ & 378 & $4941=1539+3402$ &18684 \\
\hline
$\lambda^5$ & 2916 & $48411 =16335+32076$ & 249156 \\
\hline
\end{tabular}
\end{center}
Those numbers can be checked at low orders of $\lambda$ counting the
diagrams in Figures~\ref{fig:2PGraphs}, \ref{fig:4PGraphs} and~\ref{fig:2+2PGraphs}. Regarding $\Omega_{q}^{(0)}$ we encounter very
special numbers. By duality these are the same as the numbers
$m_{g=0}(n)$ of planar ($g=0$) quadrangulations with $n$ faces plus a
(marked) boundary of length $2$. We~recall:
 \begin{Theorem}[{\cite[Chapter~3.1.7]{Eynard:2016yaa}}]\label{tutte}
 The number of rooted quadrangulations of the sphere
 with~$n$ quandrangles is given by
\begin{equation*}
2\cdot 3^n\cdot \frac{(2n)!}{n!(n+2)!} = 2\cdot 3^n\cdot \frac{C_{n}}{n+2}
\end{equation*}
with the Catalan number $C_n$.
\end{Theorem}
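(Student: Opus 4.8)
Tutte's count is classical and one may simply cite \cite{Eynard:2016yaa}, but I would prefer to derive it from the exact $d=1$ solution obtained above, where it follows from the rationality of $\Omega^{(0)}_1$. The first step is to make the duality precise: by Proposition~\ref{prop:weight} and the dual-map remark following it, the coefficient of $(-\lambda)^n$ in $G^{(0)}_{|11|}$ at $e_1=e=\tfrac12$ counts the connected genus-zero ribbon graphs with $n$ four-valent vertices and one boundary of length $2$, and the dual maps are genus-zero quadrangulations with $n$ quadrangular faces carrying a marked boundary bigon. Since a quadrangulation is bipartite, the two edges of that bigon join \emph{distinct} vertices, so collapsing the bigon to its underlying oriented edge is a bijection onto rooted quadrangulations of the sphere with $n$ quadrangular faces; equivalently, for boundary length $2$ the classes of ``fully simple'' and of ``ordinary'' maps coincide, so there is no over- or under-counting. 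It then suffices to show that $[\lambda^n]\,\Omega^{(0)}_1=(-1)^n\,2\cdot 3^n\,\frac{C_n}{n+2}$.

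To do this I would specialise the explicit $\varepsilon,\hat\varepsilon$ above to $e=\tfrac12$, obtaining $\varepsilon-\hat\varepsilon=\tfrac12\big(1+\sqrt{1+12\lambda}\,\big)$ and $-2\hat\varepsilon=\tfrac13\big(1+2\sqrt{1+12\lambda}\,\big)$, hence
\begin{gather*}
\Omega^{(0)}_1=\mathcal{G}^{(0)}(\varepsilon,\varepsilon)=-\frac{2\hat\varepsilon}{(\varepsilon-\hat\varepsilon)^2}
=\frac{4\big(1+2\sqrt{1+12\lambda}\,\big)}{3\big(1+\sqrt{1+12\lambda}\,\big)^2}.
\end{gather*}
Then introduce the uniformising variable $u$ by $\sqrt{1+12\lambda}=1+2u$, equivalently $3\lambda=u(1+u)$, i.e.\ $u=\lambda\,\frac{3}{1+u}$ with $u\to0$ as $\lambda\to0$, and rewrite $\Omega^{(0)}_1=\phi(u)$ with $\phi(u)=\frac{3+4u}{3(1+u)^2}$ and $\phi'(u)=\frac{-2(1+2u)}{3(1+u)^3}$. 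Lagrange--B\"urmann inversion for $u=\lambda R(u)$ with $R(u)=\frac{3}{1+u}$ gives, for $n\ge1$,
\begin{gather*}
[\lambda^n]\,\Omega^{(0)}_1=\frac{1}{n}\,[u^{n-1}]\big(\phi'(u)\,R(u)^n\big)
=-\frac{2\cdot 3^{n-1}}{n}\,[u^{n-1}]\,\frac{1+2u}{(1+u)^{n+3}},
\end{gather*}
and expanding $(1+u)^{-m}=\sum_{j\ge0}(-1)^j\binom{m+j-1}{j}u^j$, then simplifying via $\binom{2n+1}{n-1}=\binom{2n}{n-1}+\binom{2n}{n-2}$ and $\binom{2n}{n-1}=n\,C_n$, reduces the last coefficient to $(-1)^{n+1}\frac{3n}{n+2}C_n$; the factor $n$ cancels and the claimed identity follows. (The two displayed forms in the theorem agree because $\frac{C_n}{n+2}=\frac{(2n)!}{n!(n+2)!}$.)

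The manipulations in the second paragraph are routine once the set-up is in place; the genuine content --- and the step I expect to require the most care --- is the first paragraph, namely establishing rigorously that the order-$n$ ribbon-graph count for the $(r_1{=}N)$-degenerate $2$-point function is exactly Tutte's rooted-quadrangulation count. This demands the collapse-the-bigon bijection, the bipartiteness argument showing it is well defined and injective, and a careful match of all normalisations in Proposition~\ref{prop:weight} (the weight $-\lambda$ per four-valent vertex, the uniform substitution $E_{k_i}=e$ over loops, and the $1/N$ prefactors) so that the coefficient of $(-\lambda)^n$ in $\Omega^{(0)}_1$ is literally a count of graphs with no residual factors. As a fully independent alternative I would mention --- but not carry out --- Tutte's original argument: write the functional equation for quadrangulations with a catalytic variable tracking the root-face degree and solve it by the quadratic (kernel) method; within the present framework this is precisely Theorem~\ref{thmOm1} specialised to $d=1$.
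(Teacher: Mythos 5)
Your proposal is correct, but it is worth saying up front that the paper offers no proof of this statement at all: Theorem~\ref{tutte} is quoted verbatim from \cite[Chapter~3.1.7]{Eynard:2016yaa} as a known result of Tutte, and is used only to recognise the numbers $1,2,9,54,378,2916,\dots$ appearing in the $d=1$ expansion of $\Omega^{(0)}_q$. What you supply is therefore a self-contained derivation inside the paper's own framework, and I have checked the computational core: with $e=\tfrac12$ one indeed gets $\varepsilon-\hat\varepsilon=\tfrac12\bigl(1+\sqrt{1+12\lambda}\bigr)$, $-2\hat\varepsilon=\tfrac13\bigl(1+2\sqrt{1+12\lambda}\bigr)$, hence $\Omega^{(0)}_1=\phi(u)=\frac{3+4u}{3(1+u)^2}$ under $3\lambda=u(1+u)$; the Lagrange--B\"urmann step and the binomial reduction to $(-1)^{n}\,2\cdot 3^{n}\,C_n/(n+2)$ are correct (I verified $n=1,2$ against the table and the general identity $\binom{2n}{n-1}-\binom{2n}{n-2}=\frac{3n}{n+2}C_n$). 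This route is essentially the one Eynard's book itself takes -- the $d=1$ case \emph{is} the quartic Hermitian one-matrix model, as Section~\ref{sec:critical} notes -- so what your argument buys is that the citation could be replaced by a two-line corollary of Theorem~\ref{thmOm1}; what the citation buys is not having to justify the combinatorial dictionary. That dictionary is the only place where your write-up is a sketch rather than a proof: the identification of the order-$\lambda^n$ ribbon graphs of $G^{(0)}_{|11|}$ with rooted quadrangulations (dualise, observe that bipartiteness forces the boundary digon to be simple so that fully simple and ordinary maps coincide for boundary length $2$, collapse the digon to a root edge) is stated but not carried out, and you correctly flag it as the step needing care, including the normalisation check that at $e=\tfrac12$ every propagator weighs $1$ and every loop sum $\tfrac1N\sum_k$ weighs $1$, so the coefficient of $(-\lambda)^n$ is a bare count. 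With that paragraph fleshed out your argument would be a complete and acceptable alternative to the citation.
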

\noindent The planar 2-point function for $d=1$ itself generates these
numbers together with weights $\frac{1}{2e}$ of the edges.

The result for the number of ribbon graphs of order $\lambda^n$
contributing to $\Omega_{1,1}^{(0)}$ (we splitted the number into the usual Bergman kernel and the blob) can be derived from the Taylor
series of~(\ref{G4d1}) whose first terms are
\begin{gather*}
 \mathcal{G}^{(0)}(\varepsilon, \varepsilon|\varepsilon, \varepsilon)=\frac{6(-\lambda)^2}{(2e)^6}
 +\frac{108(-\lambda)^3}{(2e)^8}+\frac{1458(-\lambda)^4}{(2e)^{10}}
 +\frac{17820(-\lambda)^5}{(2e)^{12}}+\cdots,
 \\
 \mathcal{G}^{(0)}(\varepsilon, \varepsilon,\varepsilon, \varepsilon) =
 \frac{(-\lambda)}{(2e)^4}+\frac{10(-\lambda)^2}{(2e)^6}+\frac{90(-\lambda)^3}{(2e)^8}
 +\frac{810(-\lambda)^4}{(2e)^{10}}+\frac{7425(-\lambda)^5}{(2e)^{12}}+\cdots.
\end{gather*}
These numbers give experimental evidence for the footnote from above that $G^{(g)}_{|p_1^1\cdots p_{n_1}^1|\cdots |p_1^b\cdots p_{n_b}^b|}$ are generating series of \textit{fully simple maps}. Building on results concerning quadrangulations from Bernardi and Fusy \cite{bernardi2017bijections}, we can express the above series in a closed form:
 \begin{gather*}
 \mathcal{G}^{(0)}(\varepsilon, \varepsilon|\varepsilon, \varepsilon)= 4 \binom{3}{1}^2 \sum_{m=0}^{\infty} \frac{3^m(6+2m-1)!}{m!(6+m)!}\frac{(- \lambda)^{m+2}}{(2e)^{2m+6}},
 \\
 \mathcal{G}^{(0)}(\varepsilon, \varepsilon,\varepsilon, \varepsilon) = \frac{6!}{3!2!} \sum_{m=0}^{\infty} \frac{3^{m-1}(3+2m)!}{m!(5+m)!} \frac{ (-\lambda)^{m+1}}{(2e)^{2m+4}}.
\end{gather*}
The separation into pure TR and BTR additions and its combinatorial interpretation is work in progress.

A more sophisticated generating function takes also the non-trivial
automorphism groups of connected closed ribbon graphs into account~-- the \textit{free energy} $\mathcal{F}^{(g)}$. As a final illustration, we~will
reproduce the power series of planar closed ribbon graphs of
Figure~\ref{fig:freeenergy} using a representation of $\mathcal{F}^{(0)}$. For
this object, we need to define a couple of quantities that are based
on the general results of~\cite{Eynard:2007kz}:
\begin{Definition}
Consider the poles $a=\{\pm \varepsilon_i, \infty \}$ of $\omega_{0,1}(z)=-R(-z)R'(z)$. Define the
\begin{itemize}\itemsep=0pt
 \item \textit{temperatures} by $t_a=\Res\displaylimits_{z \to a} \omega_{0,1}(z)$;
\item \textit{local variables} for poles of $R'(z)$ via $\xi_a(z)=\frac{1}{R(z)}$ and for poles of $R(-z)$ via $\xi_a(z)=\frac{1}{R(z)-R(a)}$;
\item \textit{potential}
\begin{gather*}
 V_a(z)=\Res\displaylimits_{q \to a}
 \omega_{0,1}(q)\log \bigg(1- \frac{\xi_a(z)}{\xi_a(q)}\bigg)
 =\sum_{k=1}^{\mathrm{deg} V_a}t_{a,k}\xi_a^k(z),
\end{gather*}
where $t_{a,k}$ defines the \textit{moduli} of the pole $a$;
\item \textit{loop annihilation operator} by the primitive
 of $\omega_{0,1}(z)$:
\begin{gather*}
 \Phi(z)=\int^{z'=z} \omega_{0,1}(z')=\frac{z^2}{2}+\frac{\lambda}{N} \sum_k \bigg[ \frac{r_kR(\varepsilon_k)}{R'(\varepsilon_k)(z+\varepsilon_k)}
 + r_k \log \bigg(\frac{z+\varepsilon_k}{z-\varepsilon_k} \bigg) \bigg].
\end{gather*}
\end{itemize}
Using these objects, we can formulate the genus zero free energy:
\begin{gather*}
 \mathcal{F}^{(0)} = \frac{1}{2} \sum_a \big [ \Res\displaylimits_{q \to a}
 \omega_{0,1}(q)V_a(q) +t_a \mu_a \big] + \mathcal{R}, \qquad
 \mu_a:= \lim_{q \to a} \big(V_a(q) -t_a\log[\xi_a(q)]-\Phi(q) \big).
\end{gather*}
It is necessary to add a compensating term
$\mathcal{R}:=-\frac{\lambda}{2N}\sum_k r_k e_k^2 +
\frac{\lambda^2}{2N^2} \sum_{k,i} r_ir_k \log(e_i-e_k)$ since
$\Omega^{(0)}_1(z)$ differs from $-R(-z)$ (recall $\omega_{0,1}(z)=-R(-z)R'(z)$). Acting with the creation operator on $\mathcal{R}$ exactly yields the additional terms in (\ref{eq:om01}):
\begin{gather*}
 \frac{\partial \mathcal{R}}{\partial e_b}= \Omega^{(0)}_1(\varepsilon_b)+R(-\varepsilon_b)
\end{gather*}
giving rise to a well-defined initial data also at $z=\varepsilon_b$, where $R(-z)$ itself becomes singular.
\end{Definition}

For our purposes, set again $d=1$ and calculate
$t_{\pm \varepsilon}=\mp \lambda$, $t_{\infty}=0$ as well as
$V_{ \varepsilon}(z)=0$, $V_{-\varepsilon}(z) =-e_1\cdot R(z)$,
$V_{\infty}(z)=\frac{R(z)^2}{2}$. Only the residue for the pole
$-\varepsilon$ gives a contribution, namely
$\frac{\lambda \rho}{16 \varepsilon^4}\big(16 \varepsilon^6 - 4
\varepsilon^4\lambda \rho+(\lambda \rho)^3\big)$. We~also calculate
\begin{gather*}
t_{ \varepsilon}\mu_{\varepsilon} +t_{- \varepsilon}\mu_{-\varepsilon}
=\lambda \bigg[\varepsilon ^2-\frac{\lambda^2\rho^2}{4 \varepsilon^2}+\lambda \log \bigg(1+\frac{4\varepsilon^2}{\lambda \rho } \bigg)\bigg].
\end{gather*}
Inserting the explicit solutions for $\epsilon$ and $\rho$ and taking
the compensation $\mathcal{R}$ into consideration, we can expand the
result in a power series:
 \begin{align*}
\mathcal{F}^{(0)}&= \sum_{n=1}^{\infty} 3^n \frac{(2n-1)!}{n!(n-2)!} \frac{(-\lambda)^n}{(2e)^{2n}}
\\
&=\frac{-\lambda}{2(2e)^2}+\frac{9(-\lambda)^2}{8(2e)^4} +\frac{9(-\lambda)^3}{2(2e)^6} +\frac{189(-\lambda)^4}{8(2e)^8} +\frac{729(-\lambda)^5}{5(2e)^{10}} +\cdots.
\end{align*}
We encounter, e.g., the automorphism groups
$\frac{1}{2}+\frac{1}{2}+\frac{1}{8}$ of the graphs at
$\mathcal{O}\big(\lambda^2\big)$ in Figure~\ref{fig:freeenergy}. Again, we can
cite a former result from the Hermitian 1-matrix model:
\begin{Proposition}[{\cite[Chapter~3.6.1]{Eynard:2016yaa}}]
The generating function of non-rooted quadrangulations of the sphere reads, including automorphism groups:
 \begin{align*}
& \frac{1}{6\big(1+\sqrt{1-12\lambda}\big)^2} -\frac{5}{6\big(1+\sqrt{1-12\lambda}\big)} +\frac{3}{8} -\frac{\log\big(1+\sqrt{1-12\lambda}\big)}{4}.
\end{align*}
\end{Proposition}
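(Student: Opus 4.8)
The plan is to prove this by the standard rooted/unrooted passage for planar maps, taking Theorem~\ref{tutte} as the only nontrivial input and connecting the statement to the free energy already computed. First I would specialise \eqref{freeenergy} to $d=1$, $r_1=N$, $e=\frac12$: then every edge weight $\frac{1}{E_p+E_q}$ and every loop summation $\frac1N\sum_k$ equals $1$, the explicit term $\sum_{k,l}\log(E_k+E_l)$ vanishes, and $\varpi(\Gamma_0)=(-\lambda)^{v(\Gamma_0)}$, so that $\mathcal{F}^{(0)}=\sum_{v\ge1}(-\lambda)^v\sum_{\Gamma_0}|\mathrm{Aut}(\Gamma_0)|^{-1}$, the inner sum running over connected genus-zero vacuum ribbon graphs $\Gamma_0$ with $v$ four-valent vertices. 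The dual of such a ribbon graph has all faces of degree $4$, so by map duality on the sphere these graphs are in automorphism-preserving bijection with unrooted quadrangulations of the sphere with $v$ quadrangles; hence the generating function to be identified is exactly $\mathcal{F}^{(0)}$ under $g=-\lambda$, compatibly with $\sqrt{1-12g}=\sqrt{1+12\lambda}$, the discriminant of $R$ for $d=1$.

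Next I would count rooted quadrangulations. A quadrangulation of the sphere with $n$ faces has $2n$ edges, hence $4n$ darts; the automorphism group of a map acts freely on darts and a rooting is a choice of dart, so the number $q_n^{\bullet}$ of rooted quadrangulations with $n$ faces equals $4n\,[g^n]\mathcal{F}^{(0)}$. Theorem~\ref{tutte} gives $q_n^{\bullet}=2\cdot3^n\frac{C_n}{n+2}$, whence $[g^n]\mathcal{F}^{(0)}=\frac{3^nC_n}{2n(n+2)}=3^n\frac{(2n-1)!}{n!(n+2)!}$, which reproduces the power series of $\mathcal{F}^{(0)}$ displayed above.

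It then remains to sum $W(g):=\sum_{n\ge1}\frac{3^nC_n}{2n(n+2)}g^n$ in closed form. Writing $\frac1{n(n+2)}=\frac12\big(\frac1n-\frac1{n+2}\big)$, putting $x=3g$ and using $c(x):=\sum_{n\ge0}C_nx^n=\frac{1-\sqrt{1-4x}}{2x}$, I would pass to the uniformiser $s=\sqrt{1-4x}=\sqrt{1-12g}$, under which $x=\frac{1-s^2}{4}$ and $c(x)=\frac2{1+s}$. Then $\frac{d}{dx}\sum_{n\ge1}\frac{C_nx^n}{n}=\frac{c(x)-1}{x}=\frac4{(1+s)^2}$ and $\frac{d}{dx}\big(x^2\sum_{n\ge1}\frac{C_nx^n}{n+2}\big)=x(c(x)-1)=\frac{(1-s)^2}{4}$, and integrating these elementary expressions from $s=1$ (i.e.\ $g=0$) and recombining gives, after routine algebra,
\begin{gather*}
W(g)=\frac{1}{6\big(1+\sqrt{1-12g}\big)^2}-\frac{5}{6\big(1+\sqrt{1-12g}\big)}+\frac38-\frac12\log\frac{1+\sqrt{1-12g}}{2}.
\end{gather*}
Substituting $g=-\lambda$ recovers the closed form for $\mathcal{F}^{(0)}$, and this is the function displayed in \cite[Chapters~3.1.7 and 3.6.1]{Eynard:2016yaa} up to the conventional normalisation of the Hermitian one-matrix-model free energy, which fixes an additive constant and the precise coefficient and argument of the logarithm.

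The main obstacle is not any single computation but the bookkeeping at the two interfaces. One has to be sure that map automorphisms stabilise darts only trivially, so that the rooted count is exactly $4n\,[g^n]\mathcal{F}^{(0)}$ with no exceptional terms, and one has to track the normalisation used in \cite{Eynard:2016yaa} for the genus-zero free energy, since it reshapes the logarithmic contribution and introduces the additive constant relative to the naive map sum. Once these are pinned down, the remaining verification is the Catalan-number manipulation sketched above.
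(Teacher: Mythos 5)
Your route is genuinely different from the paper's. The paper does not prove this Proposition at all: it quotes it from \cite[Chapter~3.6.1]{Eynard:2016yaa} and only checks it indirectly, by computing $\mathcal{F}^{(0)}$ for $d=1$ through the Eynard--Orantin free-energy formula (temperatures, moduli, the potentials $V_a$, the compensating term $\mathcal{R}$) and observing that the resulting series $\sum_n 3^n\tfrac{(2n-1)!}{n!(n+2)!}\tfrac{(-\lambda)^n}{(2e)^{2n}}$ agrees with the Taylor expansion of the quoted expression. You instead give a self-contained combinatorial derivation: you pass from the rooted counts of Theorem~\ref{tutte} to automorphism-weighted unrooted counts via $q_n^{\bullet}=4n\sum 1/|\mathrm{Aut}|$ (legitimate, since a quadrangulation with $n$ faces has $2n$ edges and an orientation-preserving automorphism of a connected map fixing a dart is the identity), identify these with the coefficients of $\mathcal{F}^{(0)}$ from \eqref{freeenergy} by map duality, and then sum the series explicitly in the uniformiser $s=\sqrt{1-4x}$. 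Your intermediate identities and the coefficients $\tfrac{3^nC_n}{2n(n+2)}=3^n\tfrac{(2n-1)!}{n!(n+2)!}$ are correct and consistent with the paper's own low-order checks (e.g., $\tfrac12+\tfrac12+\tfrac18=\tfrac98$ at order two). This buys an actual proof where the paper offers only a citation plus a power-series comparison.

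The one step you cannot leave as written is the final reconciliation. Your closed form carries $-\tfrac12\log\tfrac{1+s}{2}$ while the printed statement carries $-\tfrac14\log(1+s)$; the difference $\tfrac14\log(1+s)-\tfrac12\log 2$ is \emph{not} an additive constant, so it cannot be absorbed into a ``conventional normalisation''. Indeed, with the coefficient $\tfrac14$ the printed expression gives $\tfrac14-\tfrac54+\tfrac34=-\tfrac14$ at order $\lambda^1$, which is incompatible with the required $\pm\tfrac12$. Your expression is the correct one: writing $a^2=\tfrac{2}{1+s}$ it equals $-\tfrac{1}{24}(a^2-1)(9-a^2)+\tfrac12\log a^2$, the classical planar free energy of \cite{Brezin:1977sv}, and its Taylor coefficients are exactly the numbers you derived. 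You should therefore state this as a correction to the printed coefficient of the logarithm (it must be $\tfrac12$, with the constant adjusted so the series starts at order $\lambda$) rather than deferring to an unspecified convention; as it stands, that last sentence is the only genuine gap in your argument.
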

Our expression of $\mathcal{F}^{(0)}$ is at first sight far more complicated,
but can be reduced to this expression~-- the power series of both
expressions are equal (up to a sign of $\lambda$). Deriving the result with respect to $e$ proves again the correct action of the creation operator becoming a~trivial derivative for closed graphs: one recovers the numbers $2,9,54,378,\dots $ from Lemma~\ref{tutte}.
\begin{Remark}
 Despite the appearance of blobbed topological recursion in our
 model, $\mathcal{F}^{(0)}$ fits into the usual picture of
 topological recursion for obvious reasons. However, the other
 special free energy $\mathcal{F}^{(1)}$ needs additional terms
 responsible for the blob (work in progress).
\end{Remark}

\section{Critical coupling constants and geometric discussion}\label{sec:critical}

So far, we were able to show analytically the expected coincidence
between the exact solutions from (blobbed) topological recursion of
the quartic Kontsevich model and their perturbative expansion in
the coupling constant $\lambda$. Many systems of statistical physics, quantum mechanics
and quantum field theory show critical phenomena and phase transitions when parameters take
particular values. This section starts to explore such phenomena in the
quartic Kontsevich model. More precisely, we exemplify transitions between different
stratification types of~the parameter space. This includes
the appearance of higher-order ramifications in the crucial function $R$ identified in
Theorem~\ref{thmOm1} and transitions between different ramification profiles.\looseness=1

\subsection{The setup}

The investigation of special cases of the quartic Kontsevich
model already suggested certain values of $\lambda$ at which a
critical behaviour occurs. In \cite{Grosse:2019jnv} a
scaling limit $d,N \to \infty$ of~(\ref{eq:R}) to a renormalised
integral representation
$R(z)=z-\lambda (-z)^{D/2} \int_0^\infty
\frac{\varrho(t){\rm d}t }{(t+1)^{D/2}(t+1+z)}$ was established, with
$D\in \{0,2,4\}$ the smallest dimension that gives a convergent
integral. We~recall:
\begin{itemize}\itemsep=0pt
\item Let $d=1$ with an $N$-fold degenerate eigenvalue \cite{Grosse:2019jnv}. This is the
 Hermitian 1-matrix model. We~obtain
 $R(\varepsilon)=\varepsilon- \frac{\lambda}{N}
 \frac{\varrho}{2\varepsilon}=e$, where
 $\frac{N}{\varrho}=R'(\varepsilon)$, with solution
 $\varepsilon= \big(e + \sqrt{4e^2+12\lambda}\big)/6$ directly given by
 inversion of $R$. In standard conventions one should identify $e=\frac{1}{2}$ which gives
 a critical value $\lambda_c =-\frac{1}{12}$ below which $R^{-1}$ cannot be defined
 as map between real functions.

\item Let $d \to \infty$ with spectral measure $\varrho(t)=1$, the two-dimensional Moyal plane.
 After renormalisation one obtains $R(z)=z+ \lambda \log(1+z)$ \cite{Panzer:2018tvy}.
 An integral representation for the planar 2-point function is only consistent
 for $\lambda >-\frac{1}{\log(4)}$.

\item Let $d \to \infty$ with spectral measure $\varrho(t)=t$, the four-dimensional Moyal plane:
 One finds $R(z)=z\, {}_2F_1(\alpha_{\lambda},1-\alpha_{\lambda},2;-z)$, where
 $\alpha_{\lambda}= \arcsin(\lambda \pi)/\pi$ for $|\lambda|\leq \frac{1}{\pi}$ and
 $\alpha_{\lambda}= \frac{1}{2}+\mathrm{i}\,\mathrm{arcosh}(\lambda \pi)/\pi$ for $\lambda \geq
 \frac{1}{\pi}$ \cite{Grosse:2019qps}. The singular value is $\lambda_s=-1/\pi$. Its mirror
 $\lambda_{\rm crit}=+\frac1{\pi}$ is a special transition point, where $\alpha_\lambda$ is continuous
 but not differentiable. However, $R(z)$ itself crosses smoothly over $\lambda_{\rm crit}$.

\end{itemize}
Beyond these special cases, we mostly leave the realm of exact
solutions.
Existence of solutions in a real or complex neighbourhood of $\lambda=0$ is guaranteed by
the implicit function theorem which constructs $2d$ functions
$\{\varepsilon_k(\lambda), \varrho_k(\lambda)\}_{k=1,\dots,d}$ from given data
$e_k, r_k= \lim\limits_{\lambda \to 0} (\varepsilon_k,\varrho_k)$.
For~a~first discussion we simplify the situation and take
$(\varepsilon_k,\varrho_k)$ as given data independently of~$\lambda$.
This ignores the condition $r_k \in \mathbb{Z}_{>0}$ (could be arbitrarily well
approximated for $N\to \infty$). We~will mostly consider the case $d=2$.

\subsection[Behaviour of the Ramification points for d=2]
{Behaviour of the Ramification points for $\boldsymbol {d=2}$}

The case $d=2$ describes a threefold covering and four ramification
points. We~scan the running of the ramification points $\beta_{1,2}$
and their complex conjugate by a variation of the coupling
constant. Because of
$-2\sum_{k=1}^{d} \varepsilon_k = \sum_{i=1}^{2d} \beta_i$
(underpinning the perturbative expansions), which is a consequence of
Vieta's theorem, the variations of the $\beta_i$ sum up to
zero. Figure~\ref{fig:running} shows the typical situation.
\begin{figure}[h!]
 \centering
 \includegraphics[width= 0.8\textwidth]{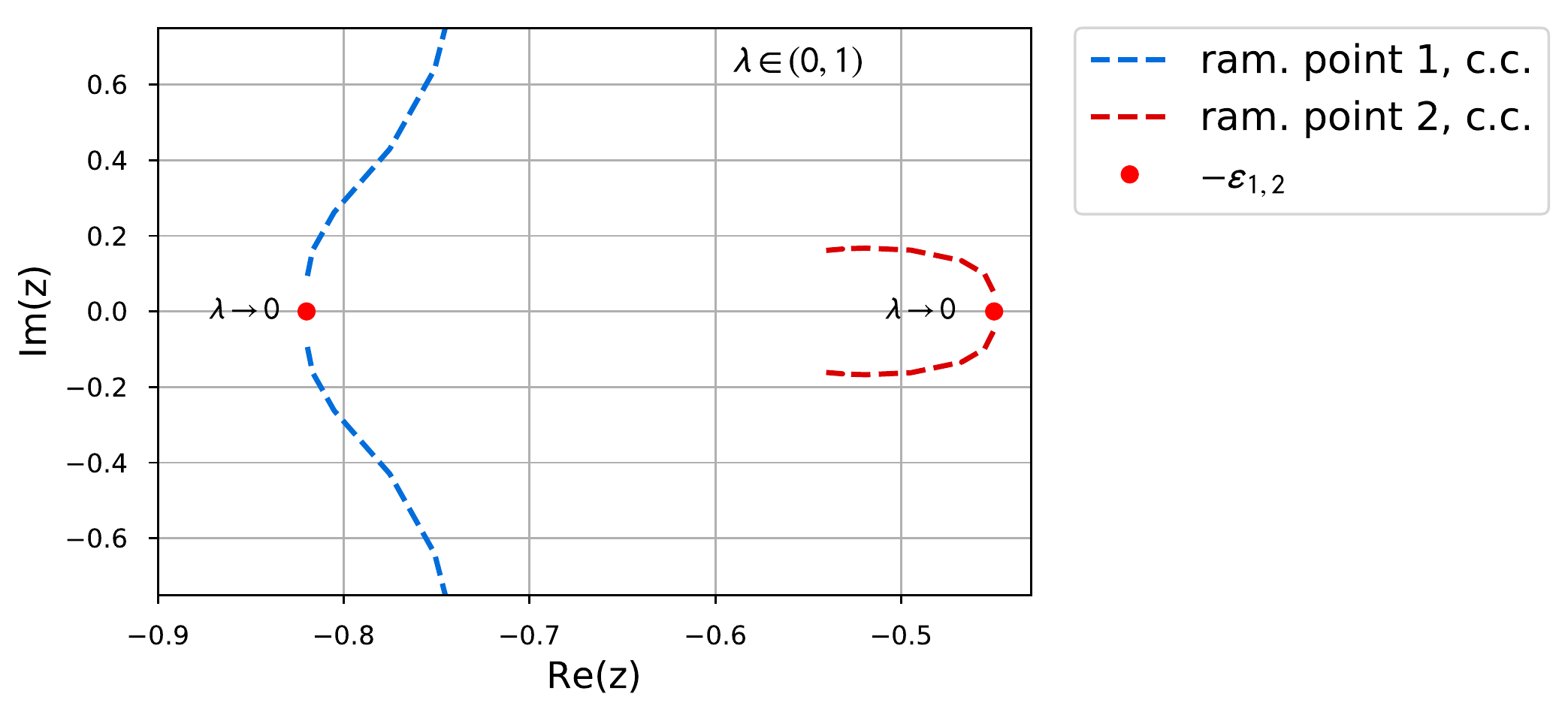} \vspace*{-2ex}
 \caption{For $\lambda \to 0$, the ramification points propagate into $-\varepsilon_1=-0.82$
 and $-\varepsilon_2=-0.45$ (these values for $\varepsilon_i$ are also chosen in Figures~\ref{fig:increasing} and
 \ref{fig:avocado},
 as well as $\varrho_1=1$, $\varrho_2=3$). In sum, the deformations average to zero. \label{fig:running}}
\end{figure}
We rediscover the square root-like behaviour (\ref{betaexpansion}) for small $\lambda$.

Taking $\varrho_1=\varrho_2$, we reenter the regime of analytically solvable
equations. Over and above, it shows a phenomenologically new behaviour:
\begin{Lemma}
 Given $d=2$ parameters $\varepsilon_1\neq \varepsilon_2$ and suppose their multiplicities
 arrange to $\varrho_1=\varrho_2=:N\varrho$. Then two ramification points merge to a single higher
 ramification point $\beta=\beta_1=\beta_2$ at the critical coupling constant
 $\lambda_{\rm crit}=\frac{(\varepsilon_1-\varepsilon_2)^2}{\varrho}$. Its real part is a
 fixed point $R(\mathrm{Re}(\beta))=\mathrm{Re}(\beta)$ of $R$.
 \begin{proof}
The value $\lambda_{\rm crit}$ is determined as follows: The four solutions of $R'(z)=0$ read
\begin{align*}
 \beta_{\pm,\pm}=
 \frac{1}{2}\Big ({-}\varepsilon_1-\varepsilon_2 \pm \sqrt{(\varepsilon_1+\varepsilon_2)^2
 -4\big [ \varepsilon_1\varepsilon_2+\lambda\varrho
 \pm \sqrt{-\lambda\varrho(\varepsilon_1-\varepsilon_2)^2+\lambda^2\varrho^2}\big ]} \Big).
\end{align*}
Then $\lambda=0$ and
$\lambda_{\rm crit} = \frac{(\varepsilon_1-\varepsilon_2)^2}{\varrho}$ are
the solution where two roots merge. Let $\beta_1$, $\beta_2$ be the
solutions in the upper half plane. Then
$\mathrm{Im}(\beta_1)=\mathrm{Im}(\beta_2)>0$ for
$\lambda <\lambda_{\rm crit}$ and
$\mathrm{Re}(\beta_1)=\mathrm{Re}(\beta_2)=-\frac{\varepsilon_1+\varepsilon_2}{2}
=:-\bar{\varepsilon}$ for $\lambda >\lambda_{\rm crit}$. For obvious reasons this value is a
fixed point of $R$, i.e., $R(-\bar{\varepsilon})=-\bar{\varepsilon}$.
This order-two ramification can be plotted as
in Figure~\ref{hrtr}.
\begin{figure}[h!t]
 \centering
 \includegraphics[width= 0.79\textwidth]{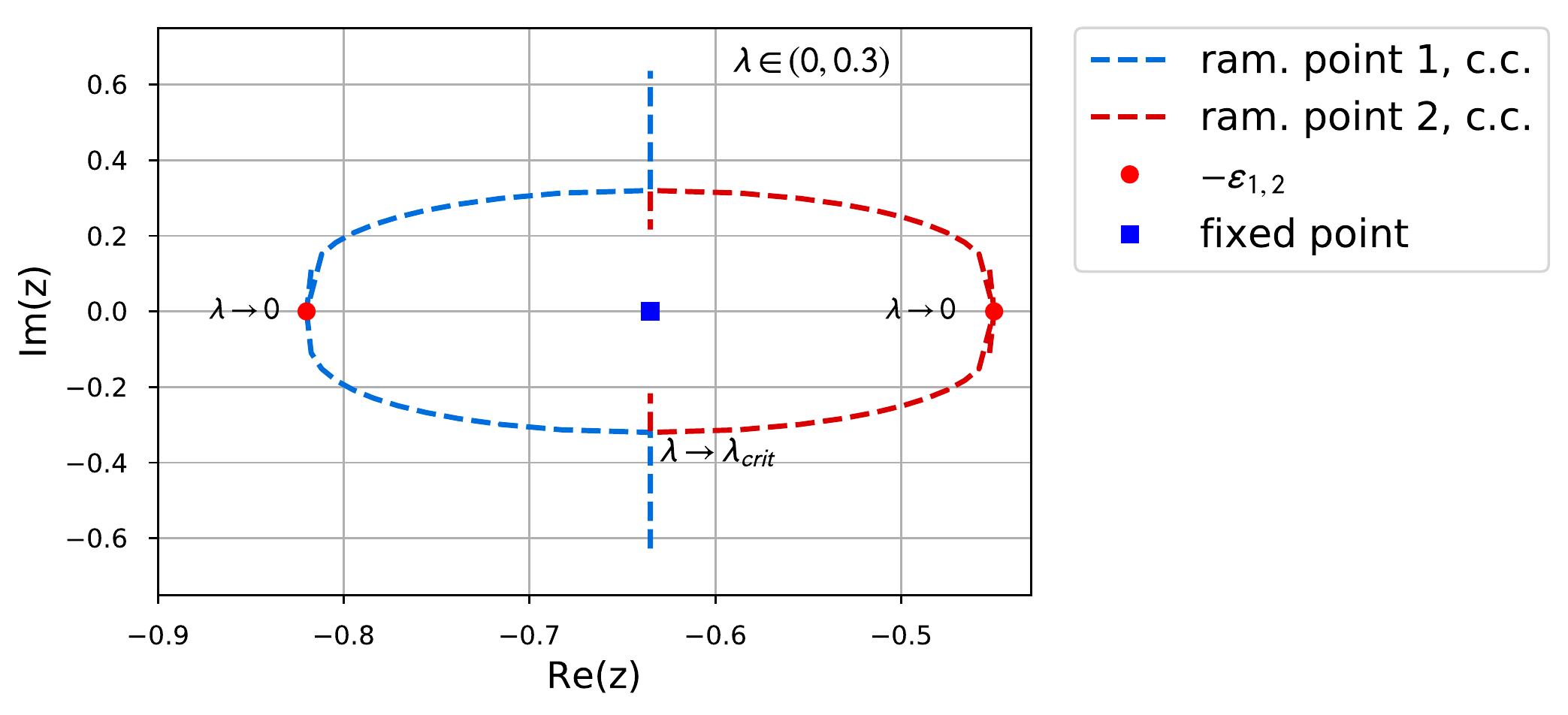}\vspace*{-2ex}
 \caption{Running of ramification points for $d=2$ and identical
 $\varrho:=\frac{\varrho_1}{N}=\frac{\varrho_2}{N}=2$. The critical coupling constant
 is $\lambda_{\rm crit}=\frac{(\varepsilon_1-\varepsilon_2)^2}{\varrho}$. For
 $\lambda>\lambda_{\rm crit}$ the ramification points have constant real
 part $-\frac{\varepsilon_1+\varepsilon_2}{2}$. At~$\lambda_{\rm crit}$ itself,
 topological recursion has to be modified to the variant of \textit{higher-order ramifications}.
 \label{hrtr}}
\end{figure}
\end{proof}
\end{Lemma}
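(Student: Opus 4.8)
The plan is to work directly with the explicit rational function $R$ furnished by Theorem~\ref{thmOm1}. For $d=2$ and $\varrho_1=\varrho_2=N\varrho$ (treated as fixed data in the sense of the setup above) this is $R(z)=z-\lambda\varrho\big(\frac{1}{\varepsilon_1+z}+\frac{1}{\varepsilon_2+z}\big)$, so that $R'(z)=0$ is equivalent to the polynomial equation $(\varepsilon_1+z)^2(\varepsilon_2+z)^2+\lambda\varrho\big[(\varepsilon_1+z)^2+(\varepsilon_2+z)^2\big]=0$. The first step is to symmetrise this by the substitution $w=z+\bar\varepsilon$ with $\bar\varepsilon:=\frac12(\varepsilon_1+\varepsilon_2)$ and $\delta:=\frac12(\varepsilon_1-\varepsilon_2)$; then $\varepsilon_1+z=w+\delta$, $\varepsilon_2+z=w-\delta$, and the equation collapses to $(w^2-\delta^2)^2+2\lambda\varrho\,(w^2+\delta^2)=0$, a \emph{quadratic} in $W:=w^2$.

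Second, I would solve this quadratic, obtaining $W_\pm=(\delta^2-\lambda\varrho)\pm\sqrt{\lambda^2\varrho^2-4\lambda\varrho\,\delta^2}$, so that the four zeros of $R'$ are $\beta=-\bar\varepsilon\pm\sqrt{W_\pm}$ (this reproduces the $\beta_{\pm,\pm}$ displayed in the statement, since $(\varepsilon_1+\varepsilon_2)^2-4\varepsilon_1\varepsilon_2=(\varepsilon_1-\varepsilon_2)^2$). A double root occurs precisely when the discriminant $\lambda^2\varrho^2-4\lambda\varrho\,\delta^2=\lambda\varrho\big(\lambda\varrho-(\varepsilon_1-\varepsilon_2)^2\big)$ vanishes, i.e.\ at $\lambda=0$ or at $\lambda_{\rm crit}=(\varepsilon_1-\varepsilon_2)^2/\varrho$ (for $\lambda>0$ the alternative degeneration $W_\pm=0$ does not arise). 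At $\lambda_{\rm crit}$ one finds $W_+=W_-=-3\delta^2$, hence the two zeros in the upper half-plane collide at $\beta_1=\beta_2=-\bar\varepsilon+\mathrm{i}\sqrt3\,|\delta|$. To substantiate the qualitative change I would then track the two upper-half-plane roots across $\lambda_{\rm crit}$: for $0<\lambda<\lambda_{\rm crit}$ the radicand is negative, $W_\pm$ are complex conjugates, and the associated $\beta_1,\beta_2$ share the same positive imaginary part but have distinct real parts; for $\lambda>\lambda_{\rm crit}$ both $W_\pm$ are real and negative — this follows from $\delta^2-\lambda\varrho<0$ together with $(\delta^2-\lambda\varrho)^2-(\lambda^2\varrho^2-4\lambda\varrho\,\delta^2)=\delta^2(\delta^2+2\lambda\varrho)>0$ — so $\sqrt{W_\pm}$ is purely imaginary and all four ramification points have real part $-\bar\varepsilon=-\frac12(\varepsilon_1+\varepsilon_2)$.

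Finally, the fixed-point claim is immediate and in fact holds for every $\lambda$: since $\varepsilon_1-\bar\varepsilon=\delta$ and $\varepsilon_2-\bar\varepsilon=-\delta$, we get $\frac{1}{\varepsilon_1-\bar\varepsilon}+\frac{1}{\varepsilon_2-\bar\varepsilon}=0$, whence $R(-\bar\varepsilon)=-\bar\varepsilon$. In particular the common value $\mathrm{Re}(\beta)=-\bar\varepsilon$ attained at and beyond $\lambda_{\rm crit}$ is a fixed point of $R$, which completes the proof.

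I expect no serious obstacle here: the whole argument is an explicit computation once the shift $w=z+\bar\varepsilon$ turns $R'=0$ into a quadratic in $w^2$. The only point demanding care is the bookkeeping of square-root branches — identifying which pair of the four roots lies in the upper half-plane as $\lambda$ passes through $\lambda_{\rm crit}$, and checking the sign of the radicand on each side — so that the stated equalities $\mathrm{Im}(\beta_1)=\mathrm{Im}(\beta_2)$ for $\lambda<\lambda_{\rm crit}$ and $\mathrm{Re}(\beta_1)=\mathrm{Re}(\beta_2)=-\bar\varepsilon$ for $\lambda>\lambda_{\rm crit}$ come out with the correct orientation.
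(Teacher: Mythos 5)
Your proof is correct and takes essentially the same route as the paper: both solve the quartic $R'(z)=0$ in closed form via nested square roots and identify $\lambda=0$ and $\lambda_{\rm crit}=(\varepsilon_1-\varepsilon_2)^2/\varrho$ as the values where the inner discriminant vanishes, your substitution $w=z+\bar\varepsilon$ being just a cleaner derivation of the formula for $\beta_{\pm,\pm}$ that the paper displays directly. You additionally carry out the sign and branch analysis of $W_\pm$ and the one-line check of $R(-\bar{\varepsilon})=-\bar{\varepsilon}$, which the paper leaves as ``obvious''.
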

We exemplify the possibility of smooth interpolations of $\lambda$
around the critical coupling $\lambda_{\rm crit}$ by applying the theory
of higher-order ramifications in topological recursion (intensely
studied~\cite{Bouchard:2012yg}) giving $ \omega_{0,3}[\lambda_{\rm crit}]$
compared with the limit
$\lim\limits_{\lambda \to \lambda_{\rm crit}} \omega_{0,3}[\lambda]$ of the
standard result of simple ramifications:

\begin{Example}[for general $d$]
 Consider two zeroes each of degree $d$ of $R'(z)$, named
 $\beta_{\pm}$. The~poles of
 $\omega_{0,3}(u,v,z)$ at $z+u=0$ and $z+v=0$ are not affected by higher-order
 rami\-fi\-cations. We~thus only concentrate on the part
 $\mathcal{P}_z \omega_{0,3}(u,v,z)$ of $\omega_{0,3}(u,v,z)$ which collects
 the residues at
 $z=\beta_{\pm}$. Apply \cite[Definition~3.6]{Bouchard:2012an} for the
 topological recursion of higher-order ramification to obtain
\begin{gather*}
\mathcal{P}_z \omega_{0,3}(u,v,z) = \Res\displaylimits_{q\to \beta_{\pm}}
\sum_{j=1}^d {\rm d}q \,K_2\big(z,q,\hat q^j\big) \omega_{0,2}(q,u) \omega_{0,2}\big(\hat q^j,v\big)
+ \{u \leftrightarrow v\},
\\
K_2(z,q,\hat q^j)= \frac{{\rm d}z}{2R'(\hat q^j)\big(R(-q)-R(-\hat q^j)\big)} \cdot \bigg(\frac{1}{z-q}-\frac{1}{z-\hat q^j} \bigg)
\end{gather*}
with the ordinary \textit{Eynard kernel} $K_2=K$ in the language of \cite{Bouchard:2012an} (generalised kernels are not necessary for $n=3$) and the preimages $\hat z^j$ of $R(z)=R\big(\hat z^j\big)$ with $\hat z^j \to \beta_{\pm}$ $\forall j$ for $z \to \beta_{\pm}$, $z \neq \hat z^j$. This gives rise to $d$-fold ramification at each fixed point. Next, we simplify the recursion kernel by expansion around the pole at $\beta_{\pm}$.
The residue for $ \mathcal{P}_z \omega_{0,3}(u,v,z)$ gives the same term for every $d$ summands (same fixed point of all involutions):
\begin{align*}
\mathcal{P}_z \omega_{0,3}(u,v,z) = d \times \frac{\omega_{0,2}(u,\beta_+)\omega_{0,2}(v,\beta_+)}{R''(\beta_+)R'(-\beta_+)(z-\beta_+)} +[\beta_+\leftrightarrow \beta_-].
\end{align*}
This is of course the same as if one would set $\beta_i=\beta_+$, $\beta_{i+d}=\beta_-$ $\forall i$ into \ref{propOm2} afterwards being the limit $\lim\limits_{\lambda \to \lambda_{\rm crit}} \mathcal{P}_z \omega_{0,3}(u,v,z) $.
\end{Example}
This is no accident: the formula was designed to give
continuity in the parameter which causes higher-order ramification,
i.e.,
$\lim\limits_{\lambda\to \lambda_{\rm crit}} \omega_{g,n}[\lambda]=
\omega_{g,n}[\lambda_{\rm crit}]$, and as said before, additional blob contributions are not affected by higher-order ramifications. Since our spectral curve is \textit{acceptable} in the sense of \cite[Definition~8]{Bouchard:2012yg}, everything works also for our model with blobs.

\subsection{Conformal mapping of the branch cuts}

Nikolai Zhukovsky found a suitable conformal map to solve the
potential flow of certain airfoils in an easier way
\cite{Zhukovsky:1910tvy}. It transforms an infinitely thin wing into a
circular one. During the analysis of the Hermitian 1-matrix model, one
recognised that this Zhukovsky transform occurs in the spectral curve
$x(z)$ and conformally maps the domain around the branch cut into the
exterior of the unit disk \cite[Section~3.1]{Eynard:2016yaa}. We~will
take this prime example to perform a more detailed analysis for our
$x(z)=R(z)$ with $d$ branch cuts and $d+1$ sheets. Let $d=2$ from now~on. We~choose to fix
$\varepsilon_{1,2}$ and $\varrho_{1,2}$ and pull the branch cuts back
into the $z$-plane~-- ending up with three preimages/sheets. This
procedure is sketched in Figure~\ref{fig:bcut}.
\begin{figure}[h!] 
\centering
 \includegraphics[width= 0.99\textwidth]{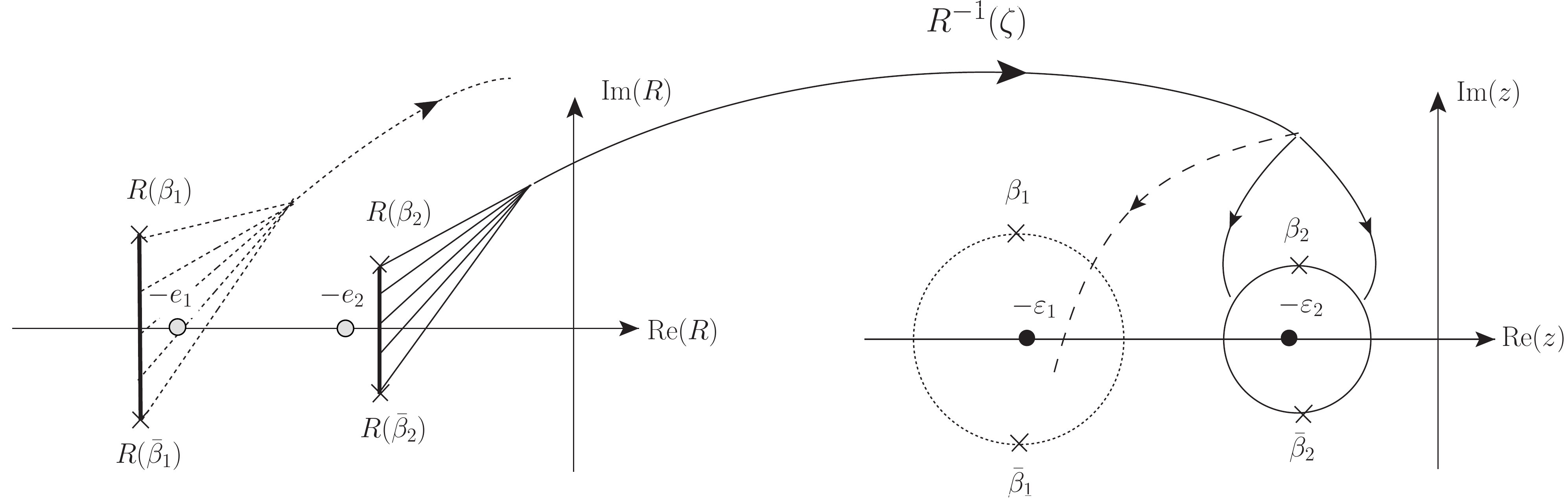}
 \caption{In the $R$-plane, we determine the branch cut to be the vertical connection
 between $R(\beta_i)$ and~$R(\bar{\beta}_i)$, with
 $\mathrm{Im}(\bar{\beta}_i)= -\mathrm{Im}(\beta_i)$. The inverse $R^{-1}$ pulls back
 the branch cut into the $z$-plane and~causes $d+1$ preimages for $d$ distinct
 values $e_k$. For small $\lambda$, small circles are generated.
 The remaining $d-1$ preimages of the cut are arcs located inside each of
 the other $d-1$ disks. They are not shown in~the~picture.
 We illustrated $d=2$.
 \label{fig:bcut}}
\end{figure}

 More formally: We map the domain
$\hat{\mathbb{C}} \setminus \{\Gamma_1 \cup \Gamma_2\}$ with
$\Gamma_i:= \big[R(\beta_i),R\big(\bar{\beta}_i\big)\big]$ as segments of~$\mathrm{i}\mathbb{R}$ into the exterior of the $\lambda$-deformed closed disks $\mathbb{D}_i$~-- the \emph{physical sheet}. In this sheet, we have a~biholomorphic map $R^{-1}\colon \hat{\mathbb{C}} \setminus \{\Gamma_1 \cup \Gamma_2\}
 \to
 \hat{\mathbb{C}} \setminus \{\mathbb{D}_1 \cup\mathbb{D}_2\}$
 sending $\infty$ to $\infty$.

Figure~\ref{fig:illu}
\begin{figure}[t!]\vspace*{1mm}
\centering
 \includegraphics[width= 0.79\textwidth]{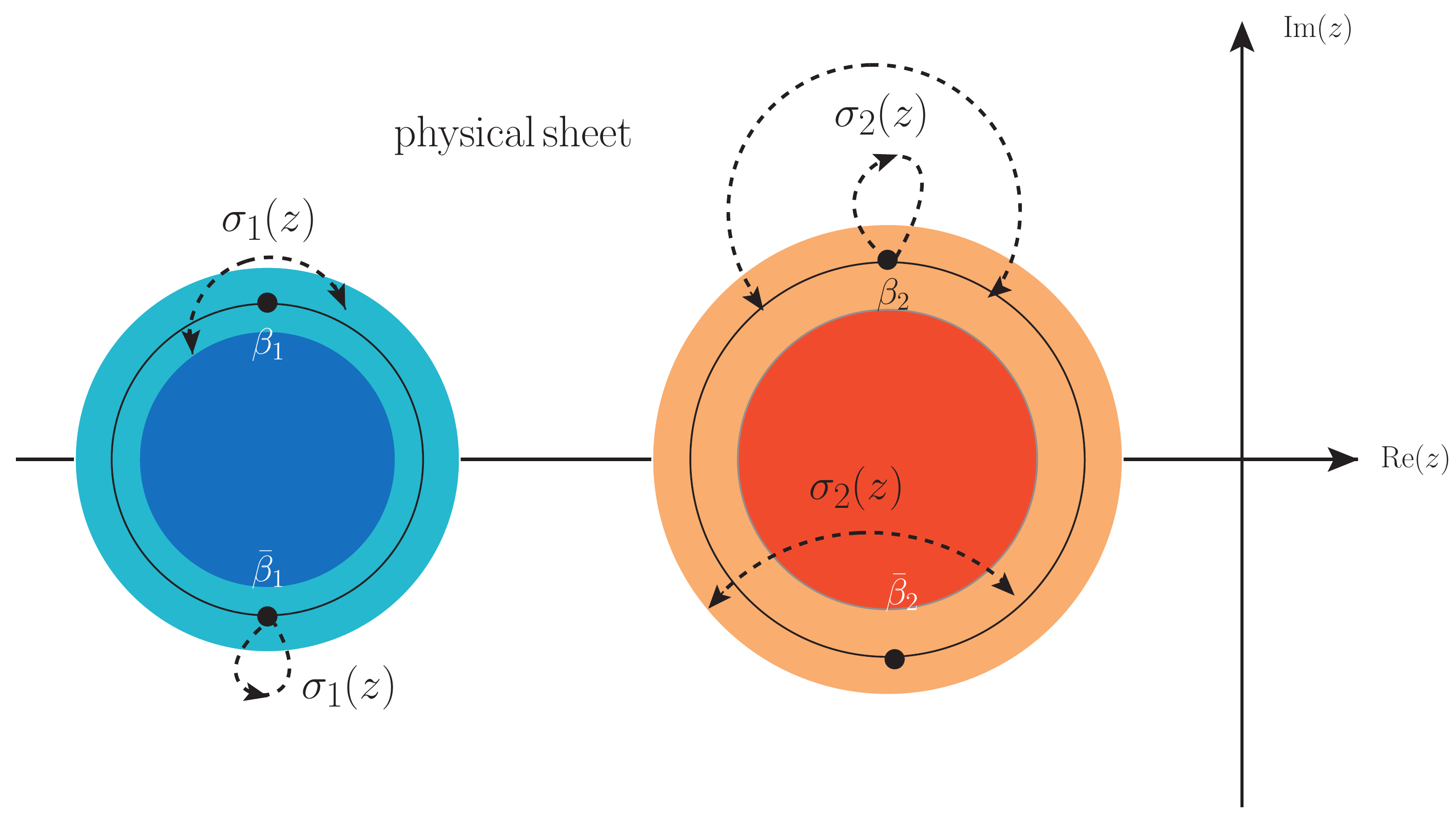}
 \caption{The preimage of
 $\hat{\mathbb{C}} \setminus \{\Gamma_1 \cup \Gamma_2\}$ under a ramified covering of degree 3
 distinguishes two (deformed) closed disks $\mathbb{D}_i$ in the $z$-plane. In a neighbourhood of
 their boundaries, the
 Galois involutions~$\sigma_{1,2}(z)$ allow to communicate with the physical sheet
 $\hat{\mathbb{C}} \setminus \{\mathbb{D}_1 \cup\mathbb{D}_2\}$. Their fixed
 points $\beta_i$, $\bar{\beta}_i$ mark north and south pole of $\mathbb{D}_i$.
 \label{fig:illu}}
 \end{figure}
illustrates the Galois involutions $\sigma_i(z)$, which are holomorphic
local involutions with fixed points $\beta_i$ and $\bar{\beta}_i$. They
fulfil $R(\sigma_i(z))=R(z)$ with $\sigma_i(z) \neq
\mathrm{id}$. These involutions (special deck transformations) are crucial to formulate
topological recursion and let the interior and exterior of the
deformed discs communicate.

 After this prelude, we continue with the numerical analysis and take
around 20 images in the $R$-plane along the branch cuts from
$R(\beta_i)$ to $R(\bar \beta_i)$ and map them with $R^{-1}$ into the
$z$-plane. A~first analysis with increasing coupling constant
$\lambda$ yields circle-like objects growing in radius and deformation
(Figure~\ref{fig:increasing}).
\begin{figure}[h!]
 \centering
 \includegraphics[width= 0.7\textwidth]{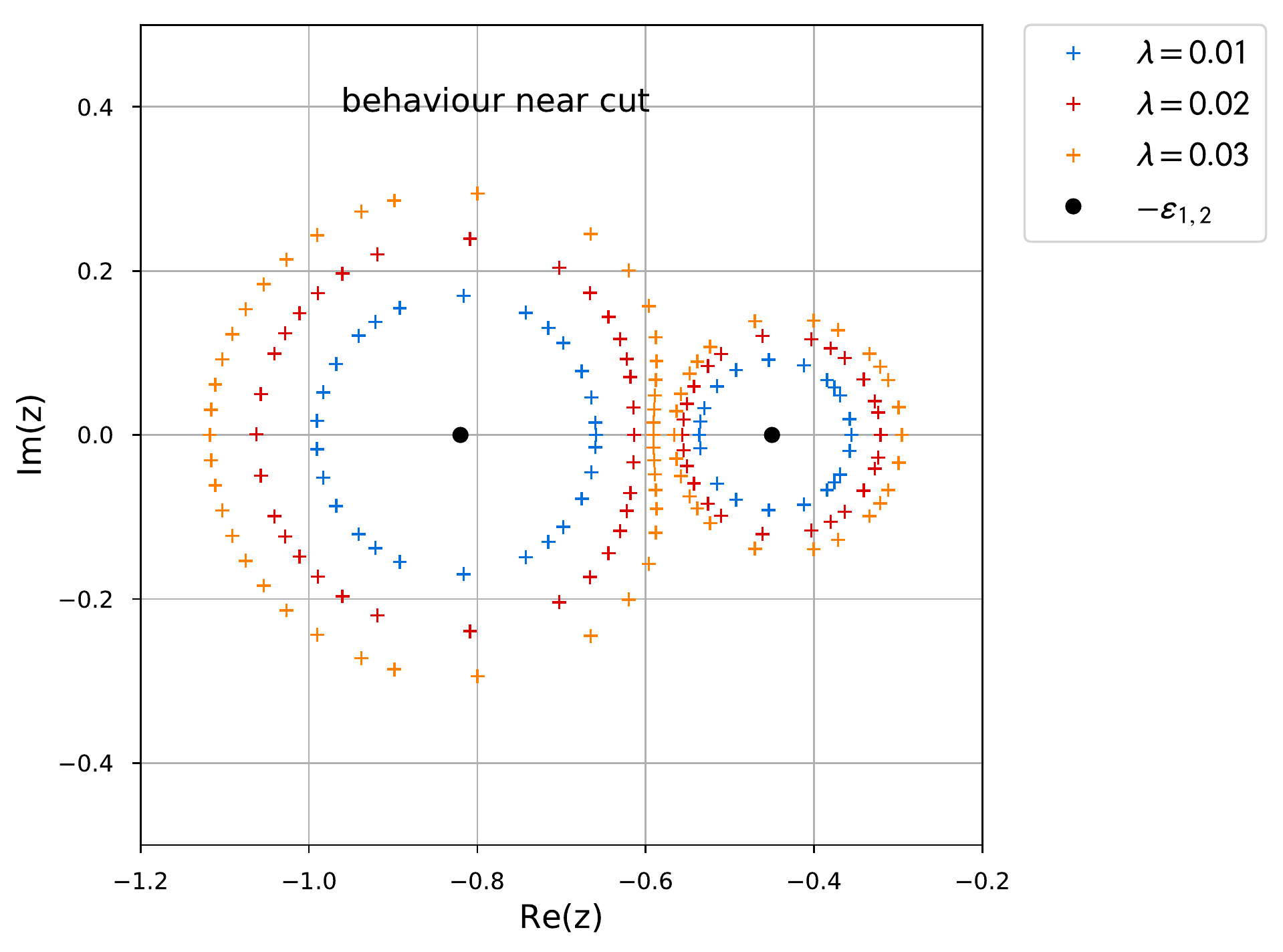}\vspace*{-3ex}
 \caption{We choose $-\varepsilon_1=-0.82$ and $-\varepsilon_2=-0.45$ and draw the preimages
 of a cut between $R(\beta_i)$ and $R\big(\overline{\beta_i}\big)$. The corresponding arcs
 from $z$ and $\sigma_i(z)$ form deformed circles. Their deformation increases with
 $\lambda$ and evolves by avoiding any intersection/collision of the two circles. A~larger
 gap between $\varepsilon_1$ and~$\varepsilon_2$ allows for stronger couplings before
 reaching a critical regime. The third preimage $\hat{z}$ (different from~$z$,~$\sigma_i(z)$
 forms an arc inside the other circle and is not given in this figure.}\label{fig:increasing}
\end{figure}

We observe that the radius of the
deformed circles is mainly determined by the multiplicity~$\varrho_k$.
The two branch cuts come closer to each other as $\lambda$ increases;
they merge at a critical value~$\lambda_{\rm crit}$. For
$\lambda>\lambda_{\rm crit}$ a stunning change of shape to an \emph{avocado
 plot} occurs, see Figure~\ref{fig:avocado}.

\begin{figure}[h!]
 \centering
 \includegraphics[width= 0.7\textwidth]{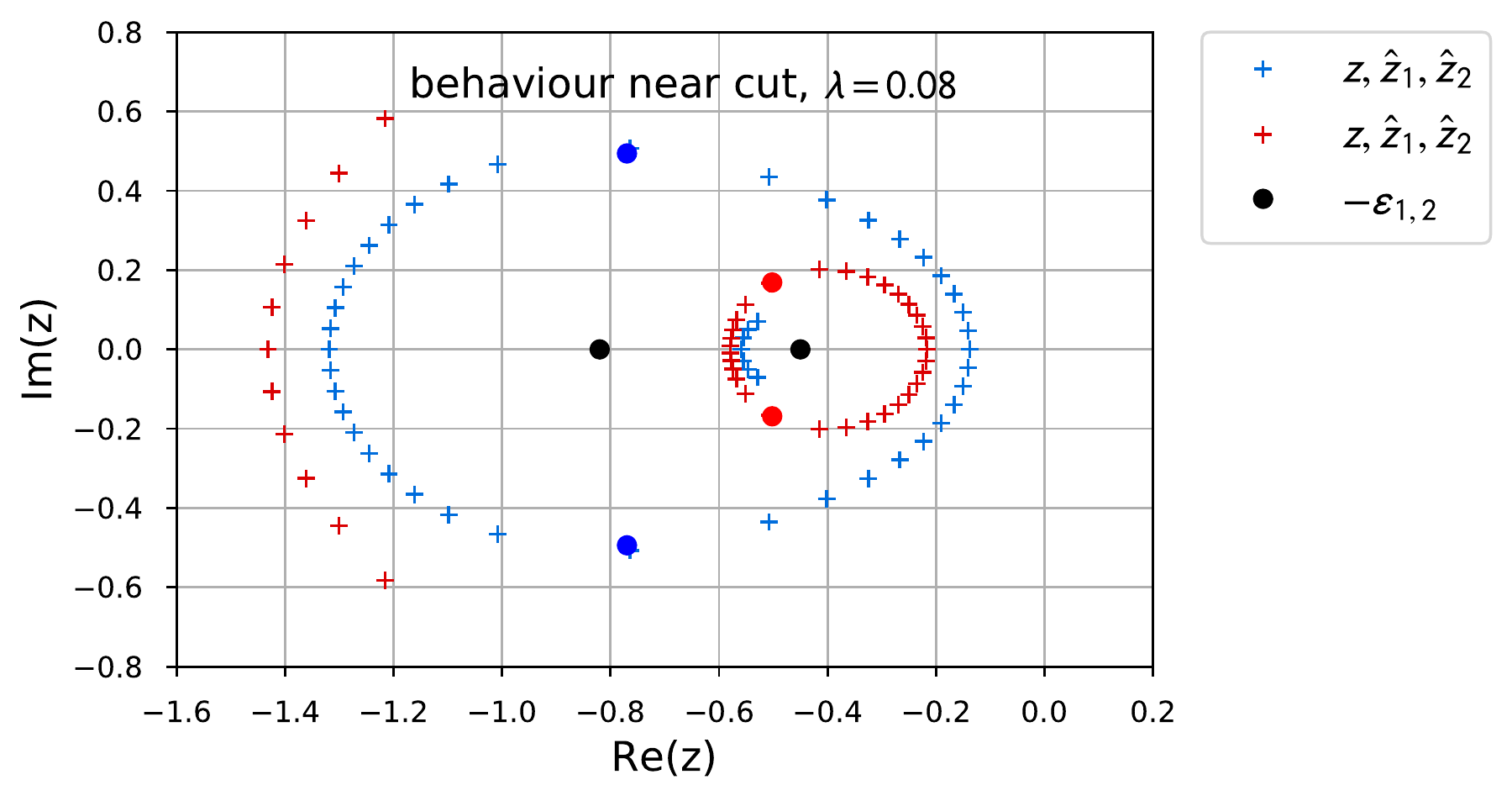}
 \caption{For $\lambda>\lambda_{\rm crit}$ a change of shape to an \textit{avocado} plot occurs. For the core, the
 local Galois involution communicates between core and flesh. The outer arc on the left
 is mapped by $R$ into regular values of the holomorphicity domain.}\label{fig:avocado}
\end{figure}

In the $z$-plane there is nothing particular at the critical value
$\lambda_{\rm crit}$. The ramification points are separate and simple (for pairwise different $\varrho_k$).
The solutions $\omega_{g,n}$ are analytic in $\lambda_{\rm crit}$ and
translate to preimages $\Omega^{(g)}_n(\zeta_1,\dots,\zeta_n)$ which for
$\zeta_i\in \mathcal{V}$ (see Figure~\ref{fig:complexification1}) are
also analytic in $\lambda_{\rm crit}$. What happens is the following. Fix
$\zeta_2,\dots,\zeta_n$ and assume $2g+n>0$. Then the function
$\zeta_1\mapsto \Omega^{(g)}_n(\zeta_1,\dots,\zeta_n)$ can be continued
to a larger domain $\tilde{\mathcal{V}} \supset \mathcal{V}$ which can
come close to $R(\beta_i)$. For $\lambda<\lambda_{\rm crit}$, any approach
$\zeta_1\to R(\beta_i)$ from inside $\tilde{\mathcal{V}}$ lets
$\Omega^{(g)}_n(\zeta_1,\dots,\zeta_n)$ approach $\infty$ for all $i=1,\dots,2d$.
For $\lambda\nearrow \lambda_{\rm crit}$ two pairs of divergent approaches come
close and eventually merge at $\lambda_{\rm crit}$. For
$\lambda>\lambda_{\rm crit}$ those $R(\beta_j)$ for which
$\beta_j$, $\overline{\beta_j}$ yield the core of the avodado become
regular values $R(\beta_j)\in \tilde{\mathcal{V}}$. This picture
generalises in obvious manner to any $d>2$, where several critical
values of $\lambda$ occur at which the discs swallow each other.
Figure~\ref{fig:d3}
shows several snapshots for~$d=3$.

\begin{figure}[h!]
 \centering
 \includegraphics[width= 0.9\textwidth]{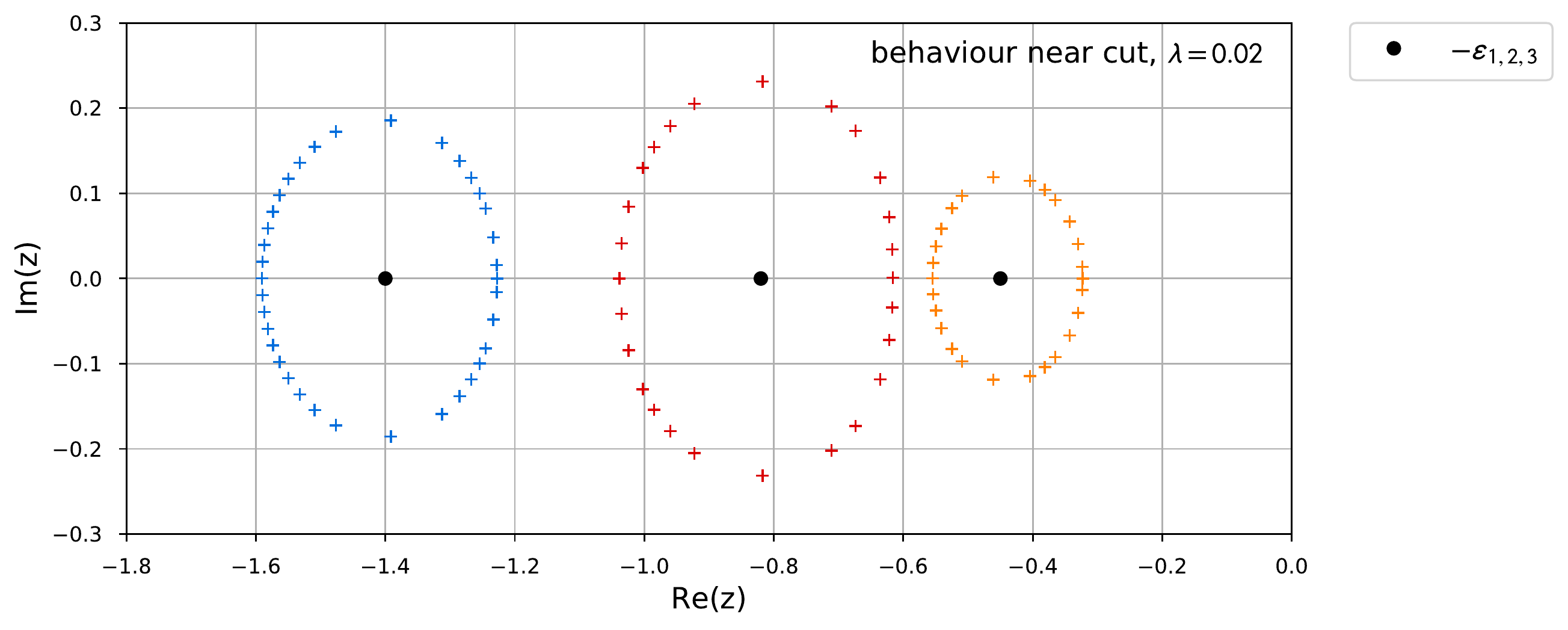}
 \includegraphics[width= 0.9\textwidth]{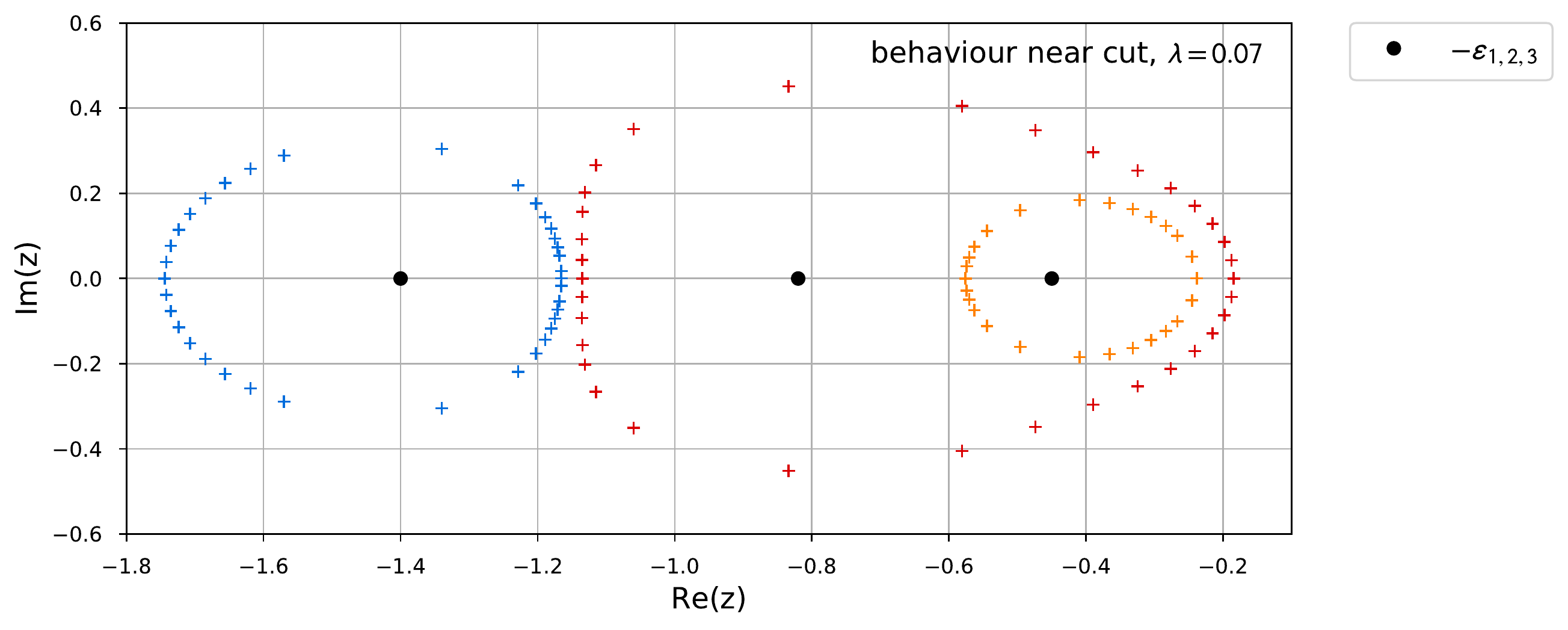}
 \includegraphics[width= 0.9\textwidth]{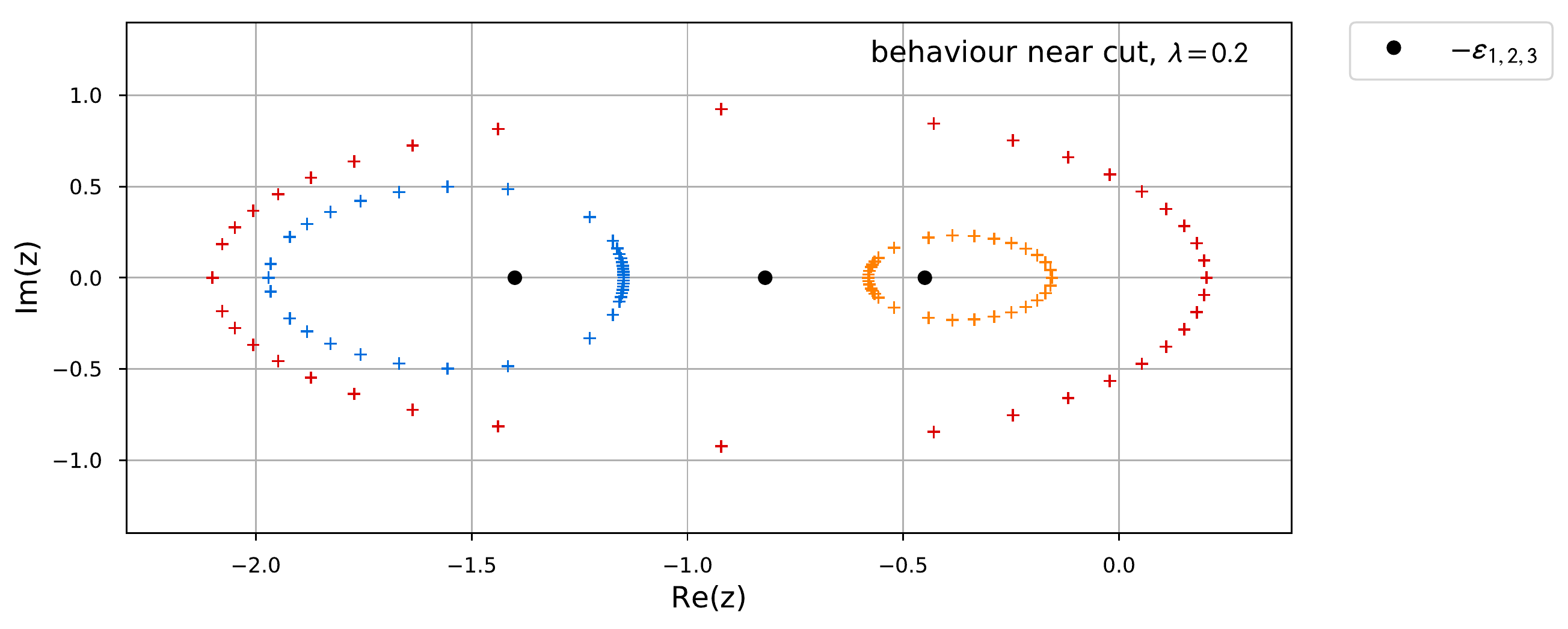}\vspace*{-1ex}
 \caption{We investigated $d=3$ with $(\varepsilon_1,\varrho_1)=(0.45,1)$, $(\varepsilon_2,\varrho_2)=(0.82,3)$
 and $(\varepsilon_3,\varrho_3)=(1.40,2)$. We~see aforementioned $d-1=2$ transitions: At $\lambda=0.02$,
 we observe a standard behaviour with three branch cuts. The biggest circle swallows the smallest
 after a certain threshold value $\lambda_1$, this process is finished at $\lambda=0.07$. In the next
 transition the second-smallest circle is swallowed at $\lambda_2$.\label{fig:d3}}
\end{figure}

There remains the interesting special case of
$\varrho_1=\varrho_2=\dots =\varrho_d$, where the above picture combines
with higher-order ramification: Which circle will swallow which? We
only mention that there is a multitude of interesting phenomena to be
discovered.

\section{Conclusion}\label{sec:conclusion}

The quartic analogue of the Kontsevich model offers exceptional
possibilities to study structures in quantum field theory. It is a
Euclidean quantum field theory defined by deformation of a Gau\ss{}ian
measure. This allows on one hand to derive Dyson--Schwinger equations
between the correlation functions, on the other hand to represent
these functions as a series in Feynman (ribbon) graphs. What makes
this model particular is the possibility to exactly solve the
Dyson--Schwinger equations in terms of algebraic or special
functions. In this paper we explored the prospects of these
achievements for the series of Feynman graphs and investigated
transitions between different singularity types when varying the
coupling constant.

After these general remarks let us be more precise about what is
achieved and what is left for the future. One of the most important
aspects of quantum field theory is renormalisation, which entails
beautiful mathematical structures \cite{Connes:1999yr, Kreimer:1997dp}.
Renormalisation is relevant for systems with infinitely many degrees
of freedom. Our model can be extended to infinitely many degrees of
freedom; the Dyson--Schwinger equations relate already renormalised
correlation functions. The~initial non-linear Dyson--Schwinger equation
has been solved implicitly \cite{Grosse:2019jnv}, but in full
generality. An explicit solution in terms of special functions
succeeded for the 2D Moyal space (where it gives the Lambert function
\cite{Panzer:2018tvy}) and for the 4D Moyal space (where it gives the
inverse of a Gau\ss{} hypergeometric function
\cite{Grosse:2019qps}). In these two cases all the renormalised
correlation functions of disk topology can be written down (thanks to \cite{DeJong}) as
integral representations. Expanding them produces the familiar
number-theoretical structures of quantum field theory such as multiple
zeta values \cite{Broadhurst:1996kc} and hyperlogarithms. We~remark
that the Kontsevich model itself \cite{Kontsevich:1992ti} can be
treated in a similar manner \cite{Grosse:2019nes}, but the expansion
gives at most logarithms.

In this paper we focus on the non-planar sector of the quartic
Kontsevich model. Although renormalisation is not needed, the limit to
infinitely many degrees of freedom is not yet understood and needs to
be studied in the future. All our results apply to a
finite-dimensional approximation by $(N\times N)$-matrices. We~proved in~\cite{Branahl:2020yru,Hock:2021tbl} that all correlation functions are
affi\-li\-ated with a family $\omega_{g,n}(z_1,\dots,z_n)$ of meromorphic
forms which can be explicitly computed by residue techniques. This
evaluation becomes increasingly complicated for large $(g,n)$, but the
results are remarkably structured and simple. We~were led in~\cite{Branahl:2020yru} to the conjecture that the $\omega_{g,n}$
follow blobbed topological recursion \cite{Borot:2015hna}, i.e., the
poles of $z_i\mapsto \omega_{g,n}(z_1,\dots,z_n)$ at ramification points
of $R$ are given by a universal formula. The function $R$ governs the
solution \cite{Grosse:2019jnv, Schurmann:2019mzu-v3} of the non-linear
Dyson--Schwinger equation.

This paper extends \cite{Branahl:2020yru} in expressing the
coefficients of the $\omega_{g,n}$ as distinguished polynomials in the
correlation functions of the quartic Kontevich model (see
Propositions~\ref{prop:Om02G} and \ref{prop:Om03G}). These distinguished
polynomials thus evaluate to expressions much simpler than a
correlation function itself (and than any of the factorially many
contributing Feynman ribbon graph, see Section~\ref{sec:combinatorics}
for their numbers). To unveil this simplicity it was necessary to
transform with the inverse of the central function~$R$ (see
Section~\ref{sec:comparison}). We~remark that the appearence of the
distinguished polynomials is in striking contrast to the Kontsevich
model \cite{Kontsevich:1992ti} in which the $(1+\dots + 1)$-point correlation functions
themselves follow topological recursion (see \cite[Chapter~6]{Eynard:2016yaa} and
\cite{Grosse:2019nes}). Moreover, the analogue of $R$
in the Kontsevich model is the function $x(z)=z^2+\text{const}$ with a~single ramification point at $z=0$. We~have shown in
Section~\ref{sec:critical} that the dependence of $R$ on the coupling
constant leads in the quartic Kontsevich model to a very rich
landscape of branch cuts with merge at critical values of the coupling
constant. Of course these phenomena are only accessible because we
have exact non-perturbative solutions.

After all we have seen that the quartic Kontsevich model shares many
features with honest quantum field theories: perturbative expansion
into Feynman graphs, non-perturbative formulation via Dyson--Schwinger
equations, renormalisation, evaluation into number-theoretical
functions. The exact solution found step by step in~\cite{Branahl:2020yru,Grosse:2019jnv, Panzer:2018tvy,
 Schurmann:2019mzu-v3} permits to identify and to
explore quantum field-theoretical structures which previously were
hidden. Of course these structures could be special to the quartic
Kontsevich model. Nonetheless we find it worthwhile to investigate
whether something similar could be present also in realistic quantum
field theories such as the standard model. Two questions deserve
particular attention:
\begin{itemize}\itemsep=0pt
\item Is it possible to trace a part of the complexity in QFT back to a change of variables via
 the complicated inverse of a relatively simple function $R$?

\item Can one collect combinations of the $R^{-1}$-transformed correlation functions
 to much simpler functions of topological significance?
\end{itemize}

\subsection*{Acknowledgements}

It is a pleasure to dedicate this paper to Dirk Kreimer who supported
this research project in a~substantial way: The groundwork \cite{Panzer:2018tvy} has been
laid during the Les Houches 2018 summer school ``Structures in local quantum field theories''
organised by Spencer Bloch and Dirk Kreimer.
AH and RW would like to thank Karen Yeats and Erik Panzer for invitation to present
our results at the IH\'ES remote conference ``Algebraic Structures in Perturbative Quantum Field Theory''
in honour of Dirk Kreimer's 60th birthday.
Our work was supported\footnote{Funded by
 the Deutsche Forschungsgemeinschaft (DFG, German Research
 Foundation)~-- Project-ID 427320536 - SFB 1442, as well as under
 Germany's Excellence Strategy EXC 2044 390685587, Mathematics
 M\"unster: Dynamics~-- Geometry~-- Structure.} by the Cluster of
Excellence \emph{Mathematics M\"unster} and the CRC 1442 \emph{Geometry:
 Deformations and Rigidity}. AH is supported through the Walter Benjamin fellowship.\footnote{Funded by
 the Deutsche Forschungsgemeinschaft (DFG, German Research
 Foundation)~-- Project-ID 465029630.}

\pdfbookmark[1]{References}{ref}
\LastPageEnding

\end{document}